\documentclass[11pt]{article}
\usepackage{graphicx}
\usepackage{pgfplots}
\usepackage{subcaption}
\usepackage[font=scriptsize]{caption} 

\usepackage[colorlinks=true,citecolor=blue]{hyperref}
\usepackage[round, authoryear]{natbib}
\usepackage{graphicx}
\usepackage{amsfonts}
\usepackage{amsmath}
\usepackage{amssymb}
\usepackage{url}
\usepackage{fancyhdr}
\usepackage{indentfirst}

\usepackage{enumerate}
\usepackage{titlesec}
\usepackage{amsthm}
\usepackage{dsfont}
\usepackage[misc]{ifsym}

\theoremstyle{definition}
\newtheorem{example}{Example}[section]

\newtheorem{definition}{Definition}[section]
\newtheorem{assumption}{Assumption}[section]

\newtheorem{problem}{Problem}[section]

\theoremstyle{plain}
\newtheorem{theorem}{Theorem}[section]
\newtheorem{lemma}{Lemma}[section]
\newtheorem{proposition}{Proposition}[section]
\newtheorem{corollary}{Corollary}[section]

\theoremstyle{remark}
\newtheorem{remark}{Remark}[section]

\numberwithin{equation}{section}

\usepackage{cases}

\theoremstyle{definition}

\def\d{\,\mathrm{d}}

\usepackage[onehalfspacing]{setspace}

\usepackage{bm}
\usepackage{tikz-qtree}
\usepackage{tikz}

\pgfplotsset{compat=1.18}

\title{Optimal Insurance Menu Design under the Expected-Value Premium Principle}
\author{Xia Han\thanks{School of Mathematical Sciences, LPMC and AAIS, Nankai University, China.  \Letter~{\url{xiahan@nankai.edu.cn}}}  \and Bin Li\thanks{Department of Statistics and Actuarial Science, University of Waterloo,  Canada. \Letter~{\url{bin.li@uwaterloo.ca}}}}

\date{\today}

\begin{document}
	\maketitle

\begin{abstract}
This paper studies optimal insurance design under asymmetric information in a Stackelberg framework, where a monopolistic insurer faces uncertainty about both the insured’s risk attitude, captured by a risk-aversion parameter, and the insured’s risk type, characterized by the loss distribution. In particular, when the risk type is unobservable, we allow the risk-aversion parameter to depend on the risk type.
We construct a menu of contracts that maximizes the mean–variance utilities of both parties under the expected-value premium principle, subject to a truth-telling constraint that ensures the truthful revelation of private information.
We show that when risk attitude is private information, the optimal coverage takes the form of excess-of-loss insurance with linear pricing in terms of the risk loading (defined as the premium minus the expected loss), designed to screen risk preferences. In contrast, when risk type is unobserved, we restrict the coverage function to an excess-of-loss form and derive an ordinary differential equation that characterizes the optimal risk loading. Under mild conditions, we establish the existence and uniqueness of the solution. The results show that equilibrium contracts exhibit nonlinear pricing with decreasing risk loadings, implying that higher-risk individuals face lower risk loadings in order to induce self-selection.
Finally, numerical illustrations demonstrate how parameter values and the distributions of unobserved heterogeneity affect the structure of optimal contracts and the resulting pricing schedule.

\end{abstract}

\noindent\textbf{Keywords:}  Stackelberg framework; information asymmetry; expected-value premium principle;  nonlinear pricing; linear pricing.
\section{Introduction}

Insurance markets are inherently characterized by information asymmetry between insurers and policyholders. Individuals typically possess private information about their underlying risks, such as lifestyle choices in health insurance, driving habits in automobile insurance, or managerial commitment in business interruption insurance. Although the increasing availability of observable data---including driving records, claims histories, and credit scores---allows insurers to approximate individual risk profiles, these proxies remain imperfect and substantial informational frictions persist.

The literature on optimal insurance under asymmetric information originates from the seminal contributions of \citet{rothschild1976equilibrium} for competitive markets and \citet{stiglitz1977monopoly} for monopoly markets. Subsequent studies have extended these foundations in various directions, including competitive market models (e.g., \citet{wambach2000introducing, garcia2021information, farinha2023risk}) and monopoly market models (e.g., \citet{chade2012optimal, hendren2013private, gershkov2023optimal}). This literature examines how asymmetric information shapes insurance contract design through mechanisms such as adverse selection, thereby providing fundamental insights into the functioning of insurance markets.
Beyond risk types, customers' risk preferences, including their degree of risk aversion, are also largely unobservable to insurers. As a result, insurers face uncertainty regarding both the objective characteristics of risk and the subjective attitudes toward risk. This dual informational asymmetry constitutes a fundamental challenge for the design of insurance contracts. The importance of asymmetric information in economic analysis is underscored by the 2001 Nobel Prize in Economics awarded to Akerlof, Spence, and Stiglitz for their pioneering contributions to the study of markets with asymmetric information.

One approach to addressing information asymmetry is through ambiguity modeling. Within this framework, decision makers consider a set of plausible beliefs (i.e., prior probability measures) and make decisions according to criteria such as the maximin rule, weighted combinations of worst- and best-case scenarios, or smooth aggregations over multiple beliefs. Existing studies, including \citet{gollier2014optimal}, \citet{li2016alpha}, and \citet{guan2022equilibrium}, investigate optimal contract design under such ambiguity assumptions. However, this approach typically leads to a single contract and therefore cannot effectively elicit customers' private information.

A more effective approach is to offer a menu of contracts satisfying both incentive compatibility and individual rationality constraints, thereby inducing customers with heterogeneous characteristics to reveal their private information through self-selection. In recent years, this mechanism has been widely studied in the design of insurance and reinsurance contracts under asymmetric information. For example, \citet{cheung2019reinsurance} derive optimal menus of reinsurance contracts under excess-of-loss or proportional arrangements when insurers have two types of hidden characteristics. \citet{cheung2020concave} extend the analysis to convex distortion risk measures, while \citet{liang2022revisiting} further generalize the framework by relaxing restrictions on the form of reinsurance strategies. More recently, \citet{cheung2025optimal} study continuous-type asymmetric information and provide a general characterization of optimal contract menus. Other studies consider asymmetric information regarding risk attitudes. For instance, \citet{gershkov2023optimal} extend the classical monopoly model of \citet{stiglitz1977monopoly} by representing risk-averse preferences through dual utility functions (\citet{yaari1987dual}). \citet{boonen2021optimal} analyze optimal reinsurance contracts when customers evaluate risk using distortion risk measures but the insurer does not observe the distortion function. Similarly, \citet{ghossoub2025optimal} examine contract design between a risk-neutral insurer and consumers with dual utility preferences.

Despite these advances, several limitations remain in the existing literature. First, the aforementioned work primarily focuses on decision-making under monopoly, whereas real insurance markets lie somewhere between competition and monopoly. Second, most models rely on single-period frameworks, making it difficult to capture the dynamic nature of insurance relationships. In addition, many studies assume discrete risk types or focus on a single dimension of private information. These simplifications limit the ability of theoretical models to capture the complexity of real insurance markets.  To address some of these limitations, \citet{han2026optimal} develop a continuous-time framework for optimal insurance contract design based on a Stackelberg equilibrium under the mean--variance (MV) decision criterion and the variance premium principle. In their framework, explicit optimal contract menus satisfying the truth-telling constraint are derived when either the customer's loss distribution or risk attitude is uncertain. A key finding is that when the insurer seeks to elicit information about a customer's risk type, the risk loading is set lower for high-risk individuals to induce truthful revelation. In contrast, when uncertainty concerns only risk attitudes, no such adjustment in the risk loading occurs.

Motivated by this finding, the present paper extends the framework of \citet{han2026optimal} by replacing the variance premium principle with the expected-value premium principle. Both principles are classical and widely used in the insurance literature, making it natural to investigate how the choice of premium principle affects optimal contract design under asymmetric information. Within a Stackelberg equilibrium framework, we characterize the optimal menu of contracts under mean--variance preferences. Each contract satisfies a truth-telling constraint that induces customers to reveal their private information by selecting the contract designed for their type.

In this paper, we first consider the case in which the insurer faces uncertainty regarding the customer’s risk attitude. Under the truth-telling constraint, the optimal coverage takes an excess-of-loss form and the associated risk loading remains constant across different levels of risk aversion.
We then study the case in which the insurer observes the customer’s risk attitude but faces uncertainty about the customer’s risk type. In contrast to \citet{han2026optimal}, we allow the risk-aversion parameter to depend on the risk type. While the variance premium principle leads to an analytically tractable ordinary differential equation (ODE) for proportional coverage, the expected-value premium principle introduces substantially greater analytical complexity. Although the customer’s optimal strategy can be expressed explicitly, its complexity makes a direct analytical verification of the  truth-telling constraint intractable. To obtain tractable results, we focus on excess-of-loss contracts, which are known to be optimal in classical mean--variance models without information asymmetry under the expected-value premium principle. Even with this restriction, the ODE in our setting does not yield an explicit analytical solution. Instead, we employ a fixed-point argument to establish the existence and uniqueness of the optimal contract menu under mild conditions. We further analyze the monotonicity properties of the resulting equilibrium contracts.

Our results show that the equilibrium pricing structure becomes nonlinear when risk types are private information, in contrast to the linear structure that arises when only risk attitudes are unobservable. Specifically, higher-risk customers receive lower risk loadings in order to induce them to select contracts with higher coverage levels. This mechanism reveals their true risk types through self-selection. The intuition is that risk type directly affects both claim costs and insurer profits, making it more valuable information for the insurer than risk preferences. Consequently, insurers strategically adjust pricing to elicit this information.

Numerical illustrations further highlight how the distributions of risk types and risk attitudes shape the optimal contract menu, revealing several insights absent in \citet{han2026optimal}. We observe that the overall level of risk loading is sensitive to both the dispersion and the distributional features of risk types, as well as the risk distribution itself. In particular, equilibrium risk loadings may increase or decrease overall as uncertainty rises, and the loading under full information may be either higher or lower than that under uncertainty. In the absence of information asymmetry, the competitive model implies zero expected profit for insurers, while the monopoly model suggests no welfare gains for customers in insurance purchase decisions. Our Stackelberg framework sits between these two extremes. Consequently, the equilibrium risk loading under asymmetric information falls between the competitive and monopoly benchmarks, and either a higher or lower loading relative to the full information case is possible depending on parameter configurations.

Furthermore, while we allow risk aversion to be a function of risk type and prove that the monotonicity of the risk loading is invariant to the specification of the risk aversion function, we still find that decreasing risk aversion yields the lowest overall risk loading, increasing risk aversion yields the highest, and constant risk aversion lies in between. This result is intuitive: individuals with higher risk aversion have a stronger intrinsic demand for insurance, which reduces the insurer's need to offer discounts. Additionally, deductible choices and premiums respond significantly to the functional form of risk aversion. Higher risk aversion strengthens insurance demand and leads to lower deductibles, whereas decreasing risk aversion may generate non monotonic coverage patterns when interacting with changes in underlying risk exposure.

The remainder of the paper is organized as follows. Section \ref{sec:2} reviews the mean--variance Stackelberg framework. Section \ref{sec:3} studies the case in which the insurer faces uncertainty regarding the customer's risk attitude. Section \ref{sec:4} analyzes the case in which the insurer faces uncertainty regarding the customer's risk type. Section \ref{sec:5} presents numerical illustrations. Section \ref{sec:6} concludes the paper.
 
\section{Stackelberg equilibrium}\label{sec:2}

Let $
(\Omega, \mathcal{F}, \mathbb{F}=\{\mathcal{F}_t\}_{t\in[0,T]}, \mathbb{P}) $
be a filtered probability space satisfying the usual conditions of completeness and right-continuity, where $T>0$ denotes a finite time horizon. On this space, let $N=\{N(t)\}_{t\in[0,T]}$ be a homogeneous Poisson process with constant intensity $\lambda>0$, and let $\{Y_i\}_{i\in\mathbb{N}^+}$ be a sequence of iid  positive random variables, independent of $N$, with common cumulative distribution function $F$. Write $Y \stackrel{\mathrm d}{=} Y_1$  for a generic random variable with the distribution $F$, and assume $\mathbb{E}[Y]<\infty$ and $\mathbb{E}[Y^2]<\infty$.  

We begin by recalling the classical mean-variance  (MV) Stackelberg game framework; see, e.g., \cite{chen2019stochastic}. We assume an insurance market involving two representative parties: a customer (agent) who is exposed to random losses over time, and an insurer (principal) who provides indemnification in exchange for premium payments.

In the absence of insurance, the customer's aggregate loss process follows the classical Cramér–Lundberg risk model
$$
\sum_{i=1}^{N(t)} Y_i = \int_0^t \int_0^\infty y \, N(\mathrm{d}s, \mathrm{d}y), \quad t\in[0,T],
$$
where $N(t)$ denotes the number of claims during $[0,t]$, and $Y_i$ represents the size of the $i$th claim. Here, $N(\mathrm{d}s, \mathrm{d}y)$ is a Poisson random measure with compensator $\nu(\mathrm{d}y)\,\mathrm{d}t = \lambda\,\mathrm{d}F(y)\,\mathrm{d}t$, and its compensated version is
$$
\tilde{N}(\mathrm{d}t, \mathrm{d}y) = N(\mathrm{d}t, \mathrm{d}y) - \nu(\mathrm{d}y)\,\mathrm{d}t.
$$

The customer transfers part of the loss to the insurer through a loss coverage function
$
l(t,y):[0,T]\times\mathbb{R}_{+}\rightarrow\mathbb{R},
$
where \(y\) denotes the size of a loss occurring at time \(t\), and 
the insurer determines a continuously charged premium rate, denoted $p(l)$. Under this arrangement, the customer’s surplus process evolves as
\begin{equation*} \label{eq:X_C}
X_C(t,l) = x_C - p(l)t - \int_0^t \int_0^\infty (y - l(s,y)) N(\mathrm{d}s, \mathrm{d}y),
\end{equation*}
with initial surplus $X_C(0) = x_C > 0$, and  the insurer’s wealth process is given by
\begin{equation*} \label{eq:X_I}
X_I(t,l) = x_I + p(l)t - \int_0^t \int_0^\infty l(s,y)  N(\mathrm{d}s, \mathrm{d}y),
\end{equation*}
with $X_I(0) = x_I > 0$. In $X_{C}$ and
$X_{I}$, subscript $``C"$ stands
for customer and $I$ stands for insurer.

Both the customer and the insurer evaluate their outcomes using a MV criterion. Within the Stackelberg framework, the insurer acts as the leader by first announcing a premium schedule $p(l)$, to which the customer, as the follower, responds by selecting an optimal retention strategy.  
For any given premium schedule $p(l)$, the customer’s optimization problem is given by
$$
\sup_{l \in \mathbb{A}_l} \left\{ \mathbb{E}[X_C(T,l)] - \frac{\gamma}{2} \mathrm{Var}[X_C(T,l)] \right\},
$$
where $\gamma > 0$ denotes the customer’s risk aversion coefficient,\footnote{Later, we will treat $\gamma$ as a variable; for notational simplicity, we omit the subscript $C$ here.} and $\mathbb{A}_l$ is the admissible set of the loss coverage.

Anticipating the customer’s optimal response $l^*(p)$, the insurer solves
$$
\sup_{p \in \mathbb{A}_p} \left\{ \mathbb{E}[X_I(T, l^*(p))] - \frac{\gamma_I}{2} \mathrm{Var}[X_I(T, l^*(p))] \right\},
$$
where $\gamma_I > 0$ is the insurer’s risk aversion coefficient, and $\mathbb{A}_p$ denotes the admissible set of premium strategies.


A key assumption in the classical Stackelberg framework is that the insurer has complete knowledge of the customer’s private information. As noted in the introduction, this assumption is rarely satisfied in real-world contexts. To address this limitation, we instead consider a setting in which the customer’s risk attitude, or risk type, constitutes private information that is unobservable to the insurer.

\section{Risk-attitude uncertainty}\label{sec:3}
\subsection{Modelling}

In this section, we first consider the setting where the insurer faces uncertainty about the customer’s true risk attitude,  as captured by the risk aversion parameter $\gamma$. From the insurer’s perspective,  $\gamma$ is modeled as an independent random variable with cumulative distribution function $G(\gamma)$ supported on the compact interval $[\gamma_L, \gamma_H] \subset [0, \infty)$.

Let $\tilde{\gamma}$ denote the risk attitude parameter that the customer strategically reports when selecting an insurance contract. Based on this reported parameter, the customer chooses a loss coverage $l \in \mathcal{A}_l$. The insurer, observing only the distribution of $\gamma$, sets a continuously paid premium according to the expected-value premium principle, given by
\begin{equation}\label{eq:ex}
p(l; \tilde{\gamma}) = \bigl(1 + \xi(\tilde{\gamma})\bigr) \lambda \int_0^\infty l(t,y) f(y) \,\mathrm{d}y,
\end{equation}
where $\xi(\tilde{\gamma}) \geq 0$ is the risk loading factor corresponding to risk aversion $\tilde{\gamma}$. The set of admissible loading functions is
$$
\mathcal{A}_\xi = \left\{ \xi : \xi(\tilde{\gamma}) \geq 0, \quad \forall \tilde{\gamma} \in [\gamma_L, \gamma_H] \right\}.
$$

Under this framework, the customer’s surplus process is given by
\begin{equation}\label{XC gamma}
X^\gamma_C(t, l; \tilde{\gamma}) = x_C - p(l; \tilde{\gamma}) t - \int_0^t \int_0^\infty (y - l(s,y)) N(\mathrm{d}s, \mathrm{d}y),
\end{equation}
whereas the insurer’s wealth process evolves as
\begin{equation}\label{XI gamma}
X^\gamma_I(t, l; \tilde{\gamma}) = p(l; \tilde{\gamma}) t - \int_0^t \int_0^\infty l(s,y) N(\mathrm{d}s, \mathrm{d}y).
\end{equation}
Here, the superscript $``\gamma"$
 represents the customer's true
risk attitude, while the $``\tilde
{\gamma}"$  after the semicolon  represents his chosen contract type.

We next formulate the respective optimization problems. For a customer with true risk aversion $\gamma$ selecting a contract designed for perceived risk aversion $\tilde{\gamma}$, the objective is to maximize the MV criterion of the surplus by choosing an optimal loss coverage $l$.

\begin{problem}[Customer’s problem]\label{prob PH gamma}
Define the value function for a customer with true risk aversion $\gamma$ choosing a contract indexed by $\tilde{\gamma}$ as
\begin{equation}\label{eq:V_c1}
V_C^\gamma(\tilde{\gamma}) = \sup_{l \in \mathcal{A}_l} \left\{ \mathbb{E}[X^\gamma_C(T, l; \tilde{\gamma})] - \frac{\gamma}{2} \mathrm{Var}[X^\gamma_C(T, l; \tilde{\gamma})] \right\}.
\end{equation}
The corresponding optimal insurance strategy is denoted by $\hat{l}^\gamma(\tilde{\gamma}) := \{\hat{l}^\gamma(t,y; \tilde{\gamma})\}_{y>0}$.
\end{problem}

Note that  Problem \ref{prob PH gamma} automatically ensures the satisfaction of the customer’s {participation constraint}. That is, the customer’s optimal value function is always at least as high as the value they would obtain by purchasing no insurance. This follows directly from the fact that adopting the strategy with a zero loss coverage function, $l \equiv 0$, constitutes an admissible strategy for the customer.

After solving the customer’s problem, the insurer’s task is to design a contract menu that satisfies the truth-telling constraint (\ref{TL gamma}), ensuring each customer selects the contract corresponding to their true risk attitude parameter.  Subject to this condition, the insurer maximizes the distribution-weighted MV utility across all customer types, thus achieving both truthful self-selection and optimal contract performance.

\begin{problem}\label{prob IN gamma}
The insurer’s problem is given by$$
\sup_{\xi(\cdot) \in \mathcal{A}_\xi} \int_{\gamma_L}^{\gamma_H} \left\{ \mathbb{E}\left[ X^\gamma_I\bigl(T, \hat{l}^\gamma(\gamma); \gamma \bigr) \right] - \frac{\gamma_I}{2} \mathrm{Var}\left[ X^\gamma_I\bigl(T, \hat{l}^\gamma(\gamma); \gamma \bigr) \right] \right\} \mathrm{d}G(\gamma)
$$
subject to the truth-telling constraint
\begin{equation}\label{TL gamma}
\gamma \in \mathrm{argmax}_{\tilde{\gamma} \in [\gamma_L, \gamma_H]} V_C^\gamma(\tilde{\gamma}).
\end{equation}
\end{problem}
The truth-telling constraint \eqref{TL gamma} ensures that each customer maximizes the MV utility by selecting the contract corresponding to his real risk aversion parameter $\gamma$. In other words, no customer can improve the outcome by choosing a contract intended for a different risk  attitude parameter $\tilde{\gamma}$. 

\subsection{Verification theorem}\label{sec:vt}
Problems \ref{prob PH gamma} and \ref{prob IN gamma} describe the dynamic MV  optimization problem, which is well known to suffer from the issue of time inconsistency. To address this, our paper adopts the widely used equilibrium framework, treating the decision-making process of a customer as a non-cooperative game played with their future selves. 
For the sake of simplicity in discussion, we continue to refer to it as an optimal strategy, while keeping in mind that optimality here is meant in the equilibrium sense.
 
 For $(t,x)\in
\lbrack0,T]\times%
\mathbb{R}
$, we define%
$$
J^{\gamma}(t,x,l;\tilde{\gamma})=\mathbb {E}\left[  \left.  X^\gamma_{C}%
(T,l;\tilde{\gamma})\right\vert X^\gamma_{C}(t,l;\tilde{\gamma})=x\right]
-\frac{\gamma}{2}\mathrm{Var}\left[  \left.  X^\gamma_{C}(T,l;\tilde{\gamma
})\right\vert X^\gamma_{C}(t,l;\tilde{\gamma})=x\right]  .
$$

\begin{definition}
\label{def equilibrium0} Given $(t,x)\in\lbrack0,T]\times%
\mathbb{R}
$ and $\hat{l}\in\mathbb{A}_{l}$, we define a perturbed strategy
$l^{\varepsilon}$ as
\begin{equation*}
l^{\varepsilon}(s,y)=\left\{
\begin{array}
[c]{ll}%
\bar{l}(y), & s\in\lbrack t,t+\varepsilon),\text{ }y\in%
\mathbb{R}
\text{,}\\
\hat{l}(s,y), & s\in\lbrack t+\varepsilon,T],\text{ }y\in%
\mathbb{R}
\text{,}%
\end{array}
\right.  \label{l eps}%
\end{equation*}
where $\bar{l}:%
\mathbb{R}
\rightarrow%
\mathbb{R}
$ and $\varepsilon>0$. Suppose that, for any given $(t,x)\in\lbrack0,T]\times%
\mathbb{R}
$, and any such perturbed strategy $l^{\varepsilon}\in\mathbb{A}_{l}$, we
have
$$
\liminf_{\varepsilon\rightarrow0}{\frac{J^{\gamma}(t,x,\hat{l};\tilde{\gamma
})-J^{\gamma}(t,x,l^{\varepsilon};\tilde{\gamma})}{\varepsilon}}\geq0.
$$
Then, $\hat{l}$ is called an equilibrium strategy for the dynamic MV problem
(\ref{eq:V_c1}).
\end{definition}

For any $l\in\mathbb{A}_{l}$ and $\varphi(t,x)\in C^{1,2}([0,T]\times
\mathbb{R})$, i.e., first-order continuously differentiable in $t$ and
second-order continuously differentiable in $x$, we define the
integro-differential operator%
\begin{equation}\begin{aligned}
\mathcal {L}_{l} \phi(t,x)  &  ={\frac{\phi(t,x)}{\partial
t}-\frac{\phi(t,x)}{\partial x}}\lambda\int_{0}^{\infty}\left(
1+\xi(\tilde{\gamma}))l(t,y)\right)  \mathrm{d}F(y)\\
&  +\lambda\int_{0}^{\infty}\left[  \phi(t,x-y+l(t,y))-\phi(t,x)\right]
\mathrm{d}F(y).\label{operator0}
\end{aligned}\end{equation}

We next present the extended HJB system that characterizes the value function 
$V$ and the corresponding equilibrium strategy in Theorem~\ref{veri_thm}. The proof of this result follows standard arguments; see e.g., \cite{bjork2010general} for details.

\begin{theorem}Let the integro-differential operator $\mathcal{L}_l$ be defined as in \eqref{operator0}. Suppose there exist $V(t,x), g(t,x)\in C^{1,2}([0,T])\times\mathbb R$ satisfy the following  conditions:
\begin{itemize} 
\item[(1)] For any $(t,x)\in[0,T]\times\mathbb R$, \begin{equation}
\sup_{l\in \mathbb{A}_l
}\left\{  \mathcal{L}_{l}V(t,x)-\frac{1}{2}\gamma\mathcal{L}_{l}%
g^{2}(t,x)+\gamma g(t,x)\mathcal{L}_{l}g(t,x)\right\}  =0 \label{HJB 0},%
\end{equation}
and $\hat{l}$ denotes the optimal value to achieve the supremum in $l$. 
\item[(2)] For any $(t, x) \in[0, T] \times \mathbb{R}$,
\begin{equation*}
\left\{\begin{array}{l}
\mathcal{L}_{\hat{l}}g(t,x)=0,  \\
V(T,x)=g(T,x)=x.
\end{array}\right.\label{equiva}
\end{equation*}
\item[(3)] For any $(t, x) \in[0, T] \times \mathbb{R}$, $\hat{l}(t)$, $\mathcal{L}_{\hat{l}}V(t,x)$ and $\mathcal{L}_{\hat{l}}g^2(t, x)$ are  deterministic functions of $t$ and independent of $x$.
\end{itemize}
Then $\hat{l}$  is the equilibrium strategy  and $V(t,x)=J^{\gamma}(t,x,\hat{l};\tilde{\gamma})$ is the equilibrium value function to the dynamic MV problem \eqref{eq:V_c1}. Besides, $g(t,x)=\mathbb{E}%
\left[  \left.  X^\gamma_{C}(T,\hat{l};\tilde{\gamma})\right\vert
X^\gamma_{C}(t,\hat{l};\tilde{\gamma})=x\right]$. \label{veri_thm}
\end{theorem}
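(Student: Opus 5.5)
We follow the standard verification argument of \cite{bjork2010general}, adapted to the jump-driven dynamics generated by $\mathcal{L}_l$ in \eqref{operator0}. The proof has three stages. First we identify $g$ and $V$ as conditional expectations. Then we use this to rewrite $J^{\gamma}(t,x,l^{\varepsilon};\tilde{\gamma})$ for a spike perturbation. Finally we take $\varepsilon\to0$ and use the HJB inequality \eqref{HJB 0}.

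\emph{Step 1 (identification of $g$ and $V$).} Fix $(t,x)$ and write $X=X^\gamma_C(\cdot,\hat l;\tilde\gamma)$, started from $X(t)=x$. Applying Dynkin's formula (It\^o's formula for jump processes) to $g(s,X(s))$ on $[t,T]$, and using $\mathcal{L}_{\hat l}g=0$, gives
\[
g(t,x)=\mathbb{E}\left[g(T,X(T))\mid X(t)=x\right]=\mathbb{E}\left[X(T)\mid X(t)=x\right],
\]
since $g(T,x)=x$. The compensated-measure martingale term vanishes in expectation; we assume the admissibility/integrability conditions in $\mathbb{A}_l$ (together with $\mathbb{E}[Y^2]<\infty$) guarantee this.

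Next, apply the same argument to $g^2$ to obtain
\[
\mathbb{E}[X(T)^2\mid X(t)=x]=g^2(t,x)+\mathbb{E}\!\int_t^T\mathcal{L}_{\hat l}g^2(s,X(s))\,ds .
\]
Now evaluate \eqref{HJB 0} at $\hat l$ and use $\mathcal{L}_{\hat l}g=0$. This gives $\mathcal{L}_{\hat l}V=\tfrac{\gamma}{2}\mathcal{L}_{\hat l}g^2$. Applying Dynkin to $V$ with $V(T,x)=x$ then yields
\[
V(t,x)=\mathbb{E}[X(T)\mid X(t)=x]-\tfrac{\gamma}{2}\bigl(\mathbb{E}[X(T)^2\mid X(t)=x]-g^2(t,x)\bigr)=J^\gamma(t,x,\hat l;\tilde\gamma).
\]
Condition (3) is used here: $\mathcal{L}_{\hat l}V$ and $\mathcal{L}_{\hat l}g^2$ are deterministic, so the integrals are deterministic and no $x$-dependence spoils the identification.

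\emph{Step 2 (perturbation).} Let $l^\varepsilon$ be as in Definition~\ref{def equilibrium0}. By the Markov property and Step 1, after time $t+\varepsilon$ the process follows $\hat l$. Hence
\[
J^\gamma(t,x,l^\varepsilon)=\mathbb{E}_{t,x}\bigl[V(t+\varepsilon,X^\varepsilon_{t+\varepsilon})\bigr]-\tfrac{\gamma}{2}\mathbb{E}_{t,x}\bigl[g^2(t+\varepsilon,X^\varepsilon_{t+\varepsilon})\bigr]+\tfrac{\gamma}{2}\bigl(\mathbb{E}_{t,x}[g(t+\varepsilon,X^\varepsilon_{t+\varepsilon})]\bigr)^2 .
\]
This follows from the law of total variance, using $g=$ conditional mean and $V=$ conditional MV value under $\hat l$.

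\emph{Step 3 (limit).} Expand each expectation to first order in $\varepsilon$ via Dynkin's formula with the constant control $\bar l$ on $[t,t+\varepsilon)$. For example,
\[
\mathbb{E}_{t,x}[\phi(t+\varepsilon,X^\varepsilon_{t+\varepsilon})]=\phi(t,x)+\varepsilon\,\mathcal{L}_{\bar l}\phi(t,x)+o(\varepsilon).
\]
Squaring the expansion of $\mathbb{E}[g]$ gives $g^2+2\varepsilon g\mathcal{L}_{\bar l}g+o(\varepsilon)$. Collecting terms,
\[
J^\gamma(t,x,l^\varepsilon)=V(t,x)+\varepsilon\Bigl[\mathcal{L}_{\bar l}V-\tfrac{\gamma}{2}\mathcal{L}_{\bar l}g^2+\gamma g\mathcal{L}_{\bar l}g\Bigr](t,x)+o(\varepsilon).
\]
By \eqref{HJB 0} the bracket is $\le0$, while $V(t,x)=J^\gamma(t,x,\hat l)$. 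Therefore the $\liminf$ in Definition~\ref{def equilibrium0} is $\ge0$, and $\hat l$ is an equilibrium strategy.

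\emph{Main obstacle.} The delicate step is justifying the $o(\varepsilon)$ expansions and the vanishing of the martingale parts for the jump process. This needs continuity of $\mathcal{L}_{\bar l}\phi(s,\cdot)$ along paths and dominated convergence for the jump integrals $\int[\phi(s,x-y+l)-\phi(s,x)]dF(y)$. We would impose (as implicit in the admissible sets) polynomial growth of $V$ and $g$ with $\mathbb{E}[Y^2]<\infty$, and square-integrability of $l$. Under these conditions the remaining computations are routine, as in \cite{bjork2010general}.
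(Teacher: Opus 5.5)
Your proposal is correct and is the standard Bj\"ork--Murgoci verification argument: identify $g$ and $V$ by Dynkin's formula, rewrite $J^{\gamma}(t,x,l^{\varepsilon};\tilde\gamma)$ via the law of total variance, expand to first order in $\varepsilon$, and invoke the HJB inequality; this is exactly what the paper relies on, since it gives no proof beyond citing \cite{bjork2010general}. One minor remark: condition (3) is not actually needed in your Step 1, because Dynkin's formula identifies $g$ and $V$ without it, so it serves at most as a technical convenience for the integrability you already flag.
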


\subsection{Solution of the optimal menu}\label{sec:3.3}

In this section, we determine the optimal menu of insurance contracts through a four-step procedure:
\begin{itemize}
\item[(i)] Given a contract, the customer chooses the optimal coverage level based on the reported risk attitude and the corresponding risk loading offered by the insurer.

\item[(ii)] To elicit private information, each contract must satisfy a truth-telling constraint, ensuring that the customer maximizes utility by reporting their true type. This condition yields a first-order restriction on the insurer’s pricing strategy.

\item[(iii)] Using the restriction obtained in step (ii), we characterize the insurer’s optimal menu of contracts, consisting of premium–coverage pairs that maximize the insurer’s objective subject to the truth-telling constraint.

\item[(iv)] Finally, we verify that the proposed menu indeed satisfies the truth-telling constraint, since the first-order condition provides only a necessary condition for truthfulness.
\end{itemize}

We begin by solving the customer’s problem under the premium function specified in \eqref{eq:ex}. The optimal coverage choice and the corresponding value function are characterized in Theorem~\ref{thm PH gamma}.
\begin{theorem}
\label{thm PH gamma}An optimal insurance coverage to the customer's problem
(\ref{eq:V_c1}) is given by
\begin{equation}
\hat{l}^{\gamma}(\tilde{\gamma}):=\hat{l}^{\gamma}(y;\tilde{\gamma}%
)=\left(y-\frac{\xi(\tilde{\gamma})}{\gamma}\right)_+,\label{l* gamma}%
\end{equation}
and the value function is given by%
\begin{equation}
V_{C}^{\gamma}(\tilde{\gamma})=x_C-  \lambda T \left(\mathbb  E[Y]+\xi(\tilde{\gamma})\int_{\frac{\xi(\tilde{\gamma})}{\gamma}}^{\infty}  \left(y-\frac{\xi(\tilde{\gamma})}{\gamma}\right)  \mathrm{d} F(y)+\frac{\gamma}{2}   \int_0^{\infty}  \left(y\wedge  \frac{\xi(\tilde{\gamma})}{\gamma}\right) ^2  \mathrm{d} F(y)\right).\label{VP gamma2}%
\end{equation}

\end{theorem}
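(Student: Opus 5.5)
We will apply the verification Theorem~\ref{veri_thm} with the linear ansatz
\[
V(t,x)=x+A(t),\qquad g(t,x)=x+B(t),\qquad A(T)=B(T)=0.
\]
This is the standard guess for mean--variance problems with a compound Poisson driver and a state-independent premium.

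\textbf{Step 1: reduce the HJB to a pointwise problem.} For a coverage $l$ that depends only on $y$, the operator \eqref{operator0} gives
\[
\mathcal L_l V = A'(t)-(1+\xi)\lambda\int l\,\mathrm dF-\lambda\int (y-l)\,\mathrm dF.
\]
Here we read the drift term in \eqref{operator0} as the premium rate $(1+\xi)\lambda\int l\,\mathrm dF$ from \eqref{eq:ex}. The combination $-\tfrac12\gamma\mathcal L_l g^2+\gamma g\mathcal L_l g$ collapses, because the $x$- and $t$-dependent terms cancel. It equals
\[
-\tfrac{\gamma}{2}\lambda\int (y-l(y))^2\,\mathrm dF(y),
\]
which is the jump part of the quadratic variation. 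Writing $r(y)=y-l(y)$ for the retained loss, \eqref{HJB 0} becomes
\[
A'(t)-\lambda\mathbb E[Y]+\sup_{r}\lambda\int\Bigl(\xi\,r(y)-\xi y-\tfrac{\gamma}{2}r(y)^2\Bigr)\mathrm dF(y)=0 .
\]
We will take the admissible set to impose $0\le l(y)\le y$, i.e.\ $0\le r(y)\le y$. The supremum can then be taken pointwise in $y$. For each $y$ we maximize the concave quadratic $\xi r-\tfrac{\gamma}{2}r^2$ over $[0,y]$. Its maximizer is $r=\min(y,\xi/\gamma)$, which gives $\hat l(y)=(y-\xi(\tilde\gamma)/\gamma)_+$, as in \eqref{l* gamma}. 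This strategy is deterministic and independent of $(t,x)$, so condition (3) holds.

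\textbf{Step 2: solve for $A$ and $B$.} We substitute $\hat l$ back, so that $A'(t)$ equals the constant
\[
\lambda\Bigl(\mathbb E[Y]+\xi\int(y-\xi/\gamma)_+\,\mathrm dF+\tfrac{\gamma}{2}\int (y\wedge\xi/\gamma)^2\,\mathrm dF\Bigr).
\]
To get this we use the identity
\[
\xi\int (y-r)\,\mathrm dF+\tfrac{\gamma}{2}\int r^2\,\mathrm dF-\xi\int(y-r)\,\mathrm dF+\xi\int l\,\mathrm dF
\]
and regroup the terms, with the premium part written as $\mathbb E[Y]+\xi\int\hat l\,\mathrm dF$. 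Integrating from $t$ to $T$ yields $A(t)=-(T-t)\cdot(\text{this constant})$. Similarly $\mathcal L_{\hat l}g=0$ gives $B'(t)=\lambda(\mathbb E[Y]+\xi\int\hat l\,\mathrm dF)$. Evaluating $V(0,x_C)$ then reproduces \eqref{VP gamma2}. The finiteness of $\mathbb E[Y^2]$ ensures that all the integrals converge.

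\textbf{Main obstacle.} The delicate point is Step 1. We must justify that $g\mathcal L g-\tfrac12\mathcal L g^2$ reduces exactly to $-\tfrac12\lambda\int(y-l)^2\,\mathrm dF$ for the linear $g$. We must also fix the admissible class: without the constraint $0\le l\le y$, the unconstrained maximizer would be $r=\xi/\gamma$ for all $y$, which is not the stated form. So we will specify $\mathbb A_l$ as measurable $l$ with $0\le l(t,y)\le y$ and verify the pointwise maximization on that class. The remaining checks are routine: the $C^{1,2}$ regularity of $V$ and $g$, which are affine in $x$, and the terminal conditions.
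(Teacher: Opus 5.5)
Your proposal is correct and follows the paper's proof: the same affine ansatz for $V$ and $g$, the reduction of the extended HJB to a pointwise minimization of $\xi l+\frac{\gamma}{2}(y-l)^2$ with minimizer $(y-\xi(\tilde\gamma)/\gamma)_+$, and integration of the resulting constant drift using the terminal condition. Your explicit check that the $g$-terms collapse to $-\frac{\gamma}{2}\lambda\int(y-l)^2\,\mathrm dF$, and your specification of the admissible class as $0\le l\le y$, make precise what the paper leaves implicit: its $(\cdot)_+$ minimizer already presupposes $l\ge 0$.
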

\begin{proof}
By  \eqref{eq:ex}  and  \eqref{XC gamma}, the customer's surplus evolves as follows:
$$
\mathrm{d}X_{C}^{\gamma}(t,l;\tilde{\gamma})=-\lambda\int_{0}^{\infty}
(1+\xi(\tilde{\gamma}))l(t,y)  \mathrm{d}F(y)\mathrm{d}t-\int_{0}^{\infty}(y-l(t,y))N(\mathrm{d}%
t,\mathrm{d}y).
$$
We adopt the following ansatz for the value functions $V$ and $g$ in Theorem \ref{veri_thm}
$$
V(t,x)=x+B(t)\text{\quad and\quad}g(t,x)=x+b(t),
$$
with $B(T)=b(T)=0$. Substituting the above two forms into the extended HJB 
equation \eqref{HJB 0} yields%
\begin{align*}
0  &  =B^{\prime}(t)-\lambda\inf_{l\in
\mathbb{A}_l
}\int_{0}^{\infty}\left(  (1+\xi(\tilde{\gamma}))l(t,y)
+\left(  y-l(t,y)\right)  +\frac{\gamma}{2}\left(  y-l(t,y)\right)
^{2}\right)  \mathrm{d} F(y)\\
&  =B^{\prime}(t)-\lambda\inf_{l\in
\mathbb{A}_l
}\int_{0}^{\infty}\left(  y+\xi(\tilde{\gamma})
l(t,y)+\frac{\gamma}{2}(y-l(t,y))^{2} \right)  \mathrm{d} F(y).
\end{align*}
It follows that the minimizer, i.e. the equilibrium strategy, is given by%
$$
\hat{l}^\gamma( \tilde{\gamma})=\left(y-\frac{\xi(\tilde{\gamma})}{\gamma}\right)_+.
$$
By the boundary condition $B(T)=0$, one obtains \eqref{VP gamma2}.
\end{proof}

We then impose the truth-telling constraint \eqref{TL gamma}, whose first-order condition is 
\begin{equation}\label{FOC_TT}
\left. \frac{\partial V_{C}^{\gamma}(\tilde{\gamma})}{\partial \tilde{\gamma}} \right|_{\tilde{\gamma}=\gamma} = 0.
\end{equation}
Differentiating \eqref{VP gamma2} with respect to $\tilde{\gamma}$ and evaluating at $\tilde{\gamma} = \gamma$ yields
\begin{equation}\label{FOC_explicit}
\int_{\frac{\xi(\gamma)}{\gamma}}^{\infty} \xi'(\gamma) \Bigl( y - \frac{2\xi(\gamma)}{\gamma} \Bigr) \, \mathrm{d}F(y)
+ \frac{\xi(\gamma)\,\xi'(\gamma)}{\gamma} \left( 1 - F\left( \frac{\xi(\gamma)}{\gamma} \right) \right) = 0.
\end{equation}
This simplifies to
$$
\xi^{\prime}(\gamma) \mathbb{E}\left[\left(Y-\frac{\xi(\gamma)}{\gamma}\right)_+\right] =0,
$$
which implies that
\begin{equation}\label{eq: fC}
\xi^{\prime}(\gamma)=0 \quad\text{for all }\tilde{\gamma}\in[\gamma_L,\gamma_H].
\end{equation} 
Hence, the risk loading factor associated with truth-telling will remain constant
for all degrees of risk aversion $\tilde{\gamma}$.  

The next theorem summarizes the optimal loading factor, optimal menu of
contracts with risk-attitude uncertainty, and verifies the truth-telling constraint.

\begin{theorem}
    
\label{thm IN gamma} \hspace{1cm}

(1) An optimal loading factor to 
Problem \ref{prob IN gamma} 
subject to \eqref{eq: fC} is given by
\begin{equation}
\hat{\xi}=\mathrm{argmax}_{\xi\geq0} \int_{\gamma_{L}}^{\gamma_{H}%
} \left(\xi \mathbb  E\left[\left(Y- \frac{\xi}{\gamma}\right)_+\right]-\frac{\gamma_I}{2}  \mathbb E\left[\left(Y- \frac{\xi}{\gamma}\right)^2_+\right] \right) \mathrm{d}G(\gamma)
.\label{eta* gamma}%
\end{equation}

(2) An\ optimal menu of contracts with loss coverage $\hat{l}(\gamma):=\hat
{l}^{\gamma}(y;\gamma)$ and premium rate $\hat{p}(\gamma):=\hat{p}(\hat
{l}(\gamma);\gamma)$ for $\gamma\in\lbrack\gamma_{L},\gamma_{H}]$ is given by
\begin{equation}
\left\{
\begin{array}
[c]{l}%
\hat{l}(\gamma)=\left(y-  \frac{\hat{\xi}}{\gamma}\right)_+    ,\\
\hat{p}(\gamma)=(1+ \hat{\xi})\lambda\mathbb  E\left[\left(Y- \frac{\hat{\xi}}{\gamma}\right)_+\right] ,
\end{array}
\right.  \label{opt menu gamma}%
\end{equation}
where $\hat{\xi}$ is given in (\ref{eta* gamma}).

(3) The insurer's optimal value function is given by
$$\begin{aligned}
&\int_{\gamma_{L}}^{\gamma_{H}} \left(\mathbb{E}\left[  X^\gamma_{I}(T,\hat{l}^{\gamma
}(\gamma);\gamma)\right]   -\frac{\gamma_I}{2} \mathrm{Var}   [X^\gamma_{I}(T,\hat{l}^{\gamma}(\gamma);\gamma)]\right)\mathrm{d}G(\gamma) \\&=\lambda T%
\int_{\gamma_{L}}^{\gamma_{H}%
}\left(\hat{\xi} \mathbb  E\left[\left(Y- \frac{\hat{\xi}}{\gamma}\right)_+\right]-\frac{\gamma_I}{2}  E\left[\left(Y- \frac{\hat{\xi}}{\gamma}\right)^2_+\right]  \right)\mathrm{d}G(\gamma
).
\end{aligned}
$$

(4) The\ optimal menu of contracts $(\hat{l}(\gamma),\hat{p}(\gamma
))_{\gamma\in\lbrack\gamma_{L},\gamma_{H}]}$ satisfies the truth-telling
constraint in the sense that
$$
\gamma=\mathrm{argmax}_{\tilde{\gamma}\in\lbrack\gamma_{L},\gamma_{H}%
]}\left\{  \mathbb{E}\left[  X^\gamma_{C}(T,\hat{l}(\tilde{\gamma});\tilde{\gamma
})\right]  -\frac{\gamma}{2}\mathrm{Var}\left[  X^\gamma_{C}(T,\hat{l}(\tilde{\gamma
});\tilde{\gamma})\right]  \right\}  ,
$$
where $\{X^\gamma_{C}(t,\hat{l}(\tilde{\gamma});\tilde{\gamma})\}_{t\in\lbrack0,T]}$
is the customer's surplus process given by   \eqref{XC gamma} if he chooses the contract $(\hat{l}%
(\tilde{\gamma}),\hat{p}(\tilde{\gamma}))$ for risk aversion $\tilde{\gamma}$.
\end{theorem}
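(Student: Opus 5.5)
The plan is to handle parts (1)--(3) by substituting the customer's best response from Theorem~\ref{thm PH gamma} into the insurer's objective. Parts (1)--(3) are then a one-dimensional optimization in $\xi$. Part (4) follows from a direct comparison of the customer's value across reported types.

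\emph{Insurer's objective.} Fix a constant loading $\xi\ge0$, as forced by \eqref{eq: fC}. A customer of true type $\gamma$ reporting $\gamma$ chooses $\hat l(\gamma)=(y-\xi/\gamma)_+$ by Theorem~\ref{thm PH gamma}. By \eqref{XI gamma} and the standard compound Poisson moment formulas,
\[
\mathbb E[X_I^\gamma(T)]=pT-\lambda T\,\mathbb E[\hat l(Y)]=\lambda T\,\xi\,\mathbb E[(Y-\xi/\gamma)_+],\qquad
\mathrm{Var}[X_I^\gamma(T)]=\lambda T\,\mathbb E[(Y-\xi/\gamma)_+^2],
\]
using $p=(1+\xi)\lambda\mathbb E[(Y-\xi/\gamma)_+]$. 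Integrating against $dG$ shows that the insurer's criterion equals $\lambda T$ times the integral in \eqref{eta* gamma}. Hence maximizing over admissible (constant) $\xi$ gives \eqref{eta* gamma}, and the menu \eqref{opt menu gamma} together with the value in (3) follow by substitution.

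\emph{Existence of the argmax.} Denote the objective by $\Phi(\xi)$. It is continuous by dominated convergence, since $\mathbb E[Y^2]<\infty$ and $\gamma$ lies in a compact set. It vanishes for $\xi\ge \gamma_H\,\mathrm{ess\,sup}\,Y$ when that quantity is finite. Otherwise,
\[
\xi\,\mathbb E[(Y-\xi/\gamma)_+]\le \gamma\,\mathbb E\bigl[Y\mathbf 1_{\{Y>\xi/\gamma\}}\bigr]\to0 \quad\text{as } \xi\to\infty,
\]
so $\Phi(\xi)\to0$ from above or below. Moreover $\Phi(0)=-\tfrac{\gamma_I}{2}\mathbb E[Y^2]\le0$. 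Thus $\sup\Phi$ is attained on a compact interval, or else it equals $0$ and is approached as $\xi\to\infty$; I expect the former generically. I would note that well-posedness of the argmax is the only delicate point of parts (1)--(3), and handle the degenerate case by interpreting $\hat\xi$ accordingly.

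\emph{Truth-telling (part (4)).} This is the step that needs care, since \eqref{FOC_TT} is only necessary. Because $\hat\xi$ is the same for every report, the contract $(\hat l(\tilde\gamma),\hat p(\tilde\gamma))$ for a report $\tilde\gamma\neq\gamma$ is simply the deductible $\hat\xi/\tilde\gamma$ priced at the common loading $\hat\xi$. By \eqref{eq:ex}, the true-$\gamma$ customer's value from this contract is the MV value of an admissible coverage under the same premium principle with loading $\hat\xi$. By Theorem~\ref{thm PH gamma}, that value is at most $V_C^\gamma(\gamma)$, whose optimizer is the deductible $\hat\xi/\gamma$.

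To make this rigorous without equilibrium subtleties, I would compute the value of a fixed deductible $d$ under loading $\hat\xi$ directly. It equals $x_C-\lambda T\,h(d)$ with
\[
h(d)=\mathbb E[Y]+\hat\xi\,\mathbb E[(Y-d)_+]+\tfrac{\gamma}{2}\mathbb E[(Y\wedge d)^2].
\]
Then $h'(d)=(\gamma d-\hat\xi)(1-F(d))$. This is nonpositive for $d<\hat\xi/\gamma$ and nonnegative for $d>\hat\xi/\gamma$, so $d=\hat\xi/\gamma$ minimizes $h$. Hence $\tilde\gamma=\gamma$ attains the argmax, with uniqueness wherever $F(\hat\xi/\tilde\gamma)<1$ strictly distinguishes reports.
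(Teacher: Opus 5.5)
Your proposal is correct and follows the paper's proof. You substitute the best response $(y-\xi/\gamma)_+$ to obtain the insurer's criterion as $\lambda T$ times the integral in \eqref{eta* gamma}, and you verify truth-telling by differentiating the customer's value of the fixed contract, which you do in the deductible $d=\hat\xi/\tilde\gamma$ rather than in $\tilde\gamma$; your factor $h'(d)=(\gamma d-\hat\xi)(1-F(d))$ is the correct one (the paper's displayed derivative has $-F$ in place of $1-F$, although its sign conclusion is right). One small fix concerns your optional existence remark: the bound $\xi\,\mathbb E[(Y-\xi/\gamma)_+]\le\gamma\,\mathbb E[Y\mathbf 1_{\{Y>\xi/\gamma\}}]$ is false in general (take $Y\equiv 10$, $\gamma=1$, $\xi=5$), but $\xi(Y-\xi/\gamma)_+\le\gamma Y^2\mathbf 1_{\{Y>\xi/\gamma\}}$ does hold and gives the same limit $0$ because $\mathbb E[Y^2]<\infty$.
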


    \begin{proof}
 (1) Combinging (\ref{eq:ex}), (\ref{XI gamma}), and (\ref{l* gamma}), the insurer's
surplus follows the dynamics%
\begin{align*}
\mathrm{d}X^\gamma_{I}(t,\hat{l}^{\gamma}(\gamma);\gamma)  &  =\lambda(1+\xi)\int
_{0}^{\infty}\left(y-\frac{\xi}{\gamma}\right)_+%
 \mathrm{d}F(y)\mathrm{d}t-\int_{0}^{\infty
} \left(y-\frac{\xi}{\gamma}\right)_+ N (\mathrm{d}t,\mathrm{d}y).
\end{align*}
It follows that%
\begin{equation}\begin{aligned}
&\int_{\gamma_{L}}^{\gamma_{H}}\left(\mathbb{E}\left[  X^\gamma_{I}(T,\hat{l}^{\gamma
}(\gamma);\gamma)\right]  -\frac{\gamma_I}{2} \mathrm{Var}   [X^\gamma_{I}(T,\hat{l}^{\gamma}(\gamma);\gamma)]\right)\mathrm{d}G(\gamma) \\&  = \lambda T\int_{\gamma_{L}}^{\gamma_{H}%
} \left( \xi \mathbb  E\left[\left(Y- \frac{\xi}{\gamma}\right)_+\right]-\frac{\gamma_I}{2}  \mathbb E\left[\left(Y- \frac{\xi}{\gamma}\right)^2_+\right] \right) \mathrm{d}G(\gamma
).
\end{aligned}\label{VI00}\end{equation}
By \eqref{eq: fC}, the risk loading factor is  constant, and thus 
\begin{align*}
\hat{\xi}  &  =\mathrm{argmax}_{\xi\geq0}\int_{\gamma_{L}}^{\gamma_{H}}\left(\mathbb{E}\left[  X^\gamma_{I}(T,\hat{l}^{\gamma
}(\gamma);\gamma)\right]  -\frac{\gamma_I}{2} \mathrm{Var}   [X^\gamma_{I}(T,\hat{l}^{\gamma}(\gamma);\gamma)]\right)\mathrm{d}G(\gamma) \\
&  =\mathrm{argmax}_{\xi\geq0}\int_{\gamma_{L}}^{\gamma_{H}%
}\xi \left(\mathbb  E\left[\left(Y- \frac{\xi}{\gamma}\right)_+\right]-\frac{\gamma_I}{2}  \mathbb E\left[\left(Y- \frac{\xi}{\gamma}\right)^2_+\right] \right) \mathrm{d}G(\gamma
).
\end{align*}

(2) This is an immediate result of \eqref{eq:ex} and  \eqref{l* gamma}.

(3) This is an immediate result of \eqref{eta* gamma} and  \eqref{VI00}.

(4) If a customer with risk aversion $\gamma$ selects the contract for type
$\tilde{\gamma}$, i.e., $(\hat{l}(\tilde{\gamma}),\hat{p}(\tilde{\gamma}))$,
the customer's surplus follows the dynamics%
\begin{align*}
\mathrm{d}X^\gamma_{C}(t,\hat{l}(\tilde{\gamma});\tilde{\gamma})  &  =-\lambda(1+\hat \xi)
\int_{0}^{\infty}\left( x-\frac{\hat{\xi}}{\tilde \gamma}  \right)_+  \mathrm{d}F(y)\mathrm{d}t-\int_{0}^{\infty}\left(y \wedge \frac{\hat\xi}{\tilde \gamma}\right)_+N (\mathrm{d}t,\mathrm{d}y).
\end{align*}
Correspondingly, the customer's value function is equal to%
\begin{align*}
J^{\gamma}(\tilde{\gamma})  &  :=\mathbb{E}\left[  X^\gamma_{C}(T,\hat{l}%
(\tilde{\gamma});\tilde{\gamma})\right]  -\frac{\gamma}{2}\mathrm{Var}\left[
X^\gamma_{C}(T,\hat{l}(\tilde{\gamma});\tilde{\gamma})\right] \\
&  =x_C-\lambda T \mathbb  E[Y]- \lambda T \int_{\frac{\hat\xi}{\tilde\gamma}}^{\infty} \hat\xi \left(y-\frac{\hat\xi }{\tilde\gamma}\right)  \mathrm{d} F(y)-\frac{\gamma}{2} \lambda T   \int_0^{\infty}  \left(y\wedge  \frac{\hat\xi}{\tilde \gamma}\right) ^2  \mathrm{d} F(y)
\end{align*}
Differentiating $J^{\gamma}(\tilde{\gamma}) $ with respect to $\tilde{\gamma}$ yields
\begin{align*}
\frac{\partial J^{\gamma}(\tilde{\gamma})}{\partial\tilde{\gamma}}  &
=-\lambda T \frac{\hat \xi^2}{\tilde\gamma^2} F\left(\frac{\hat \xi}{\tilde\gamma}\right) \left(\frac{\gamma}{\tilde\gamma} -1 \right).
\end{align*}
It is obvious to see that $\frac{\partial J^{\gamma}(\tilde{\gamma})}%
{\partial\tilde{\gamma}}>0$  for $\tilde{\gamma}<\gamma$, and $\frac{\partial J^{\gamma}(\tilde{\gamma})}%
{\partial\tilde{\gamma}}<0$  for $\tilde{\gamma}>\gamma$.  Thus
$$
\gamma=\mathrm{argmax}_{\tilde{\gamma}\in\lbrack\gamma_{L},\gamma_{H}%
]}J^{\gamma}(\tilde{\gamma}),
$$  which we completes the proof.
\end{proof}

Theorem \ref{thm IN gamma} provides a characterization of the optimal contract menu when the insurer faces uncertainty about the customer’s risk attitude. The optimal menu
$
(\hat{l}(\gamma), \hat{p}(\gamma))_{\gamma \in [\gamma_{L}, \gamma_{H}]},
$
presented in \eqref{opt menu gamma}, is composed of corresponding loss coverage rules and premium schedules, and is shown to satisfy the truth-telling requirement. An important observation is that the optimal risk loading $\hat{\xi}$ remains unchanged for all admissible levels of risk aversion. This suggests that a
{linear pricing} strategy is employed to elicit information regarding the customer's true risk aversion.
Moreover, in the special case where the insurer has perfect information on the customer’s risk aversion parameter $\gamma$, we obtain the following corollary.  In the absence of uncertainty, the problem naturally simplifies, and the contract menu reduces to a single contract.

\begin{corollary}\label{cor: no uncertainty theta}
Suppose the insurer has no uncertainty about the customer's risk aversion parameter $\gamma$.
Then, the optimal menu contains a single contract given by
$$
\left\{
\begin{array}
[c]{l}%
\hat{l}=\left(y-  \frac{\hat{\xi}}{\gamma}\right)_+ ,\\
\hat{p}=(1+ \hat{\xi})\mathbb  E\left[\left(Y- \frac{\hat{\xi}}{\gamma}\right)_+\right],
\end{array}
\right.
$$
where $$\hat{\xi}=\mathrm{argmax}_{\xi\geq0}
 \left\{\xi \mathbb  E\left[\left(Y- \frac{\xi}{\gamma}\right)_+\right]-\frac{\gamma_I}{2}  \mathbb E\left[\left(Y- \frac{\xi}{\gamma}\right)^2_+\right] \right\}.$$ 
\end{corollary}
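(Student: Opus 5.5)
The corollary follows by specializing Theorem~\ref{thm IN gamma} to a degenerate type distribution. If the insurer knows the customer's risk aversion $\gamma$, I will model this as $G$ being the point mass $\delta_{\gamma}$, i.e.\ $\gamma_L=\gamma_H=\gamma$. Then the type interval $[\gamma_L,\gamma_H]$ is the single point $\{\gamma\}$. The truth-telling constraint \eqref{TL gamma} holds trivially, because the only admissible report is $\tilde\gamma=\gamma$. The restriction \eqref{eq: fC} is also vacuous, since $\xi$ is a single number. So the menu has exactly one contract, and the insurer simply chooses a scalar $\xi\ge 0$.

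The steps are as follows. First, I apply Theorem~\ref{thm PH gamma} with $\tilde\gamma=\gamma$ to get the customer's best response $\hat l=(y-\xi/\gamma)_+$. Second, I plug this into the insurer's surplus dynamics, exactly as in \eqref{VI00}. Since the surplus is a compound Poisson process, its moments are
\[
\mathbb E[X_I(T)]=\lambda T\,\xi\,\mathbb E[(Y-\xi/\gamma)_+], \qquad \mathrm{Var}[X_I(T)]=\lambda T\,\mathbb E[(Y-\xi/\gamma)_+^2].
\]
The insurer's objective is therefore $\lambda T\bigl(\xi\mathbb E[(Y-\xi/\gamma)_+]-\tfrac{\gamma_I}{2}\mathbb E[(Y-\xi/\gamma)_+^2]\bigr)$. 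Because $\lambda T>0$ is a positive constant, maximizing this is the same as taking the stated argmax. Third, the premium is read off from \eqref{eq:ex} with $l=\hat l$, which gives $\hat p$. Equivalently, all three steps are just \eqref{eta* gamma} and \eqref{opt menu gamma} with the integral against $\mathrm dG$ replaced by evaluation at the single point $\gamma$.

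The main obstacle is not analytic but a matter of justification. One has to confirm that the point-mass case is a legitimate instance of the framework (the earlier analysis assumed an interval, although nothing in the argument needs a nondegenerate one). One also has to check that the argmax exists. For existence, I would note that the objective is continuous in $\xi$, equals $-\tfrac{\gamma_I}{2}\mathbb E[Y^2]$ at $\xi=0$, and tends to $0$ as $\xi\to\infty$, since $\xi\,\mathbb E[(Y-\xi/\gamma)_+]\le \gamma\,\mathbb E[Y^2\mathbf 1_{\{Y>\xi/\gamma\}}]\to 0$. Hence its supremum over $[0,\infty)$ is either attained or approached at infinity, which corresponds to no coverage. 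The paper treats $\hat\xi$ as given by the argmax, so it suffices to take it as defined there, with the convention that it may correspond to the zero-coverage limit.

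Finally, I note a small discrepancy. The premium as written in the corollary omits the factor $\lambda$ that appears in \eqref{opt menu gamma}. I would either state it as a premium per unit intensity or include $\lambda$, so that it stays consistent with \eqref{eq:ex}.
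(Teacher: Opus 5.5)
Your proposal is correct and takes the same route as the paper. The paper gives no separate proof: it obtains the corollary as the degenerate case of Theorem~\ref{thm IN gamma}, where the truth-telling constraint and \eqref{eq: fC} become vacuous and \eqref{eta* gamma}--\eqref{opt menu gamma} reduce to a single contract, and it notes agreement with Proposition~4.2 of \cite{chen2019stochastic}. Your existence argument for the argmax and your observation that the stated premium is missing the factor $\lambda$ relative to \eqref{opt menu gamma} are both valid refinements of the paper's statement.
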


Note that Corollary \ref{cor: no uncertainty theta} coincides with Proposition 4.2 in \cite{chen2019stochastic} where no uncertainty is considered and the problem of  Stackelberg equilibrium  is studied as presented in Section \ref{sec:2}.

The optimal contract under uncertainty in risk attitude exhibits several notable properties, summarized as follows. Their proofs directly follow from \eqref{eta* gamma} and \eqref{opt menu gamma}.

\begin{proposition}  
\label{prop opt gamma}  

(1) $\hat{\xi}$ is independent of  $\gamma$.
   
(2) $\hat{l}(\gamma)$  is increasing in $\gamma$.

(3) $\hat{p}(\gamma)$  is increasing in $\gamma$.
 \end{proposition}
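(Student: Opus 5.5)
The plan is to read each claim directly off the closed-form menu \eqref{opt menu gamma} together with the characterization \eqref{eta* gamma} of $\hat{\xi}$. No further optimization is needed.

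For (1), observe that $\hat{\xi}$ in \eqref{eta* gamma} is the maximizer of a functional in which $\gamma$ is the variable of integration against $\mathrm{d}G(\gamma)$. The functional therefore depends on $\gamma$ only through the distribution $G$, not on any particular customer's type. This is exactly the content of the first-order truth-telling restriction \eqref{eq: fC}, namely $\xi'(\gamma)=0$. Hence $\hat{\xi}$ is a single nonnegative constant shared by all contracts in the menu.

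For (2), fix $y\ge 0$ and view $\hat{l}(\gamma)=\bigl(y-\hat{\xi}/\gamma\bigr)_+$ as a function of $\gamma$. Since $\hat{\xi}\ge 0$ is constant by (1), the map $\gamma\mapsto \hat{\xi}/\gamma$ is nonincreasing on $[\gamma_L,\gamma_H]$. It follows that $y-\hat{\xi}/\gamma$ is nondecreasing, and composing with the nondecreasing map $z\mapsto z_+$ preserves this monotonicity pointwise in $y$. So a more risk-averse customer faces a lower deductible $\hat{\xi}/\gamma$ and receives more coverage. The monotonicity is strict on $\{y>\hat{\xi}/\gamma\}$ whenever $\hat{\xi}>0$; if $\hat{\xi}=0$, the coverage is full and constant, so ``increasing'' is understood in the weak sense.

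For (3), write $\hat{p}(\gamma)=(1+\hat{\xi})\lambda\,\mathbb{E}[(Y-\hat{\xi}/\gamma)_+]$. The prefactor $(1+\hat{\xi})\lambda>0$ does not depend on $\gamma$. The expectation is monotone in its integrand, so it is nondecreasing by (2). For strictness one can note that
\[
\frac{\mathrm{d}}{\mathrm{d}\gamma}\,\mathbb{E}\Bigl[\Bigl(Y-\frac{\hat{\xi}}{\gamma}\Bigr)_+\Bigr]=\frac{\hat{\xi}}{\gamma^2}\Bigl(1-F\Bigl(\frac{\hat{\xi}}{\gamma}\Bigr)\Bigr)\ge 0,
\]
which is strictly positive when $\hat{\xi}>0$ and $F(\hat{\xi}/\gamma)<1$.

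The only point requiring care is the one used in (2) and (3): $\hat{\xi}\ge 0$ must be a genuine constant, which is guaranteed by admissibility in $\mathcal{A}_\xi$ together with (1). Beyond that, one only has to state the degenerate cases, $\hat{\xi}=0$ or a deductible beyond the support of $Y$, in which the monotonicity is weak rather than strict.
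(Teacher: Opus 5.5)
Your proposal is correct and follows the same route as the paper, which states that the three claims follow directly from \eqref{eta* gamma} and \eqref{opt menu gamma}. In both, $\hat{\xi}$ is a single constant fixed by the $G$-weighted objective, so the deductible $\hat{\xi}/\gamma$ falls as $\gamma$ rises, which raises coverage and hence the premium; your explicit derivative for $\hat{p}$ and your remarks on weak versus strict monotonicity add detail that the paper leaves implicit.
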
 

As noted above, Proposition~\ref{prop opt gamma}(1) shows that the optimal risk loading $\hat{\xi}$ is constant across all admissible levels of risk aversion. In contrast, as we will demonstrate in Section~\ref{sec:4}, when the uncertainty lies in customers’ risk types, which directly influence their expected losses, the optimal risk loading becomes a nonlinear function, and high-risk customers are  offered price discounts as part of the insurer’s screening mechanism. The linear  pattern for risk-attitude uncertainty is expected because risk attitude influences only the willingness to purchase insurance and does not directly impact expected claims or the insurer’s profit, thereby providing limited informational value and not warranting differentiated loading.
Propositions \ref{prop opt gamma}(2) and (3) formalize the natural monotonicity: more risk-averse customers optimally demand higher coverage and correspondingly pay higher premiums, in line with standard economic intuition.

\section{Risk-type uncertainty}\label{sec:4}
\subsection{Modelling}

We now consider the case in which the insurer observes the customer’s true risk-attitude parameter $\gamma$ but faces uncertainty regarding the customer’s true risk type. The risk type is represented by a parameter $\theta$, which governs the distribution of individual claim sizes and thereby reflects the customer’s underlying risk. We model $\theta$ as a random variable with cumulative distribution function $G(\theta)$ supported on a compact interval $[\theta_L, \theta_H] \subset \mathbb{R}$. 
For any function $h(x_1,\dots,x_n)$,  we use the notation $\partial_{x_i} h = {\partial h}/{\partial x_i}$ to denote the partial derivative with respect to the variable $x_i$.

Let $\{Y_i^\theta\}_{i\in\mathbb{N}}$ be a sequence of i.i.d. positive random variables representing individual claims for a customer of type $\theta$, independent of the counting process $\{N(t)\}_{t\ge 0}$. Let $F(y;\theta)$ denote the cumulative distribution function of a single claim. 
Assume that $F(\cdot;\theta)$ is differentiable with density
$
f(y;\theta)=\partial_y F(y;\theta), ~y>0 .
$
The cumulative claims up to time $t$ are given by
$$
\sum_{i=1}^{N(t)} Y_i^\theta = \int_0^t \int_0^\infty y \, N^\theta(\mathrm{d}s, \mathrm{d}y).
$$

The following assumption will be used in Theorem~\ref{thm:optimal_stop_loss}. 

\begin{assumption}\label{ass:FOD}
For all $y \ge 0$, the distribution function $F(y;\theta)$ is continuously differentiable with respect to $\theta$ and satisfies $\partial_\theta F(y;\theta) < 0$.
\end{assumption}

Let $Y_\theta$ denote a random variable with distribution function $F(\cdot;\theta)$. 
Assumption \ref{ass:FOD} implies that for any $\theta' > \theta$,
\[
F(y;\theta') \le F(y;\theta), \quad \forall\, y \ge 0.
\]
Hence, $Y_{\theta'}$ first-order stochastically dominates $Y_\theta$. 
In other words, individuals with larger $\theta$ have stochastically larger losses and are therefore riskier. 
The derivative condition in Assumption \ref{ass:FOD} is satisfied by many commonly used parametric families of loss distributions.

\begin{example}\label{special_dist}
\begin{itemize}
\item[(i)] {Uniform distribution} on $(0, \theta)$:
$$
F(y; \theta) =
\begin{cases}
0, & y \leq 0, \\
\frac{y}{\theta}, & 0 < y < \theta, \\
1, & y \geq \theta,
\end{cases}
\quad \Longrightarrow \quad
\partial_\theta F(y; \theta) = -\frac{y}{\theta^2} \leq 0.
$$

\item[(ii)] {Exponential distribution} with mean $\theta$:
$$
F(y; \theta) = 1 - e^{-y/\theta}, \quad y > 0,
\quad \Longrightarrow \quad
\partial_\theta F(y; \theta) = -\frac{y}{\theta^2} e^{-y/\theta} \leq 0.
$$

\item[(iii)] {Pareto distribution} with shape parameter $1/\theta$ and scale $y_m > 0$:
$$
F(y; \theta) = 1 - \left( \frac{y_m}{y} \right)^{1/\theta}, \quad y \geq y_m,
\quad \Longrightarrow \quad
\partial_\theta F(y; \theta) = \frac{1}{\theta^2} \left( \frac{y_m}{y} \right)^{1/\theta} \ln\left( \frac{y_m}{y} \right) \leq 0.
$$
\end{itemize}
\end{example}

Given a true risk type $\theta$, if the customer chooses a type-$\tilde{\theta}$ contract with coverage function $l$, the {customer's surplus} is
\begin{align}
X_{C}^{\theta}(t,l;\tilde{\theta}) 
&= x_{C} - p(l; \tilde{\theta}) t 
- \int_{0}^{t} \int_{0}^{\infty} ( y - l(s,y) ) N^{\theta}(\mathrm{d}s,\mathrm{d}y), \label{XP}
\end{align}
and the insurer's profit is 
\begin{align}
X_{I}^{\theta}(t,l;\tilde{\theta}) 
&= x_I+p(l; \tilde{\theta}) t 
- \int_{0}^{t} \int_{0}^{\infty} l(s,y) N^{\theta}(\mathrm{d}s,\mathrm{d}y). \label{XI}
\end{align} Under the expected-value premium principle, the premium rate is given by
\begin{equation}
p(l; \tilde{\theta}) 
= \left( 1 + \xi(\tilde{\theta}) \right) \lambda \int_{0}^{\infty} l(t,y) f(y; \tilde{\theta}) \, \mathrm{d}y.
\label{premium2}
\end{equation}
The set of admissible loading functions can be revised to 
$$
\mathcal{A}_\xi = \left\{ \xi : \xi(\tilde{\theta}) \geq 0, \quad \forall \tilde{\theta} \in [\theta_L, \theta_H] \right\}.
$$  We present the customer's and insurer's problems as follows.
\begin{problem}[Customer's problem]\label{prob ph theta}
For a type-$\theta$ customer selecting a contract designed for type $\tilde{\theta}$, the value function is
\begin{equation}
V_{C}^{\theta}(\tilde{\theta}) 
:= \sup_{l \in \mathbb{A}_{l}} 
\left\{ \mathbb{E}\left[ X_{C}^{\theta}(T,l;\tilde{\theta}) \right] 
- \frac{\gamma(\theta)}{2} \mathrm{Var} \left[ X_{C}^{\theta}(T,l;\tilde{\theta}) \right] \right\}.
\label{VP 12}
\end{equation}
The corresponding optimal coverage is denoted by 
$\hat{l}^{\theta}(\tilde{\theta}) = \{\hat{l}^{\theta}(y;\tilde{\theta})\}_{y>0}$.
\end{problem}

Here we allow the risk-aversion parameter to depend on the risk type, i.e., $\gamma=\gamma(\theta)$. 
If $\gamma(\theta)$ is increasing in $\theta$, then customers with higher risk types are more risk-averse; 
if $\gamma(\theta)$ is decreasing in $\theta$, higher-risk customers are less risk-averse.

The insurer aims to maximize her weighted MV utility across all priors of
the customer's risk type, while also seeking to elicit the true
information regarding the customer's risk type. Condition (\ref{TL theta})
is the corresponding truth-telling constraint for risk-type uncertainty.
\begin{problem}[Insurer's problem]\label{prob in theta}
The insurer's objective is to maximize its weighted MV utility:
$$
\sup_{\xi(\cdot) \in \mathbb{A}_{\xi}} 
\int_{\theta_{L}}^{\theta_{H}} 
\left[ \mathbb{E} \left[ X_{I}^{\theta}(T,\hat{l}^{\theta}(\theta); \theta) \right] 
- \frac{\gamma_I}{2} \mathrm{Var} \left[ X_{I}(T,\hat{l}^{\gamma}(\gamma); \gamma) \right] \right] 
\, \mathrm{d}G(\theta),
$$
subject to the \textit{truth-telling constraint}:
\begin{equation}
\theta \in \operatorname{argmax}_{\tilde{\theta} \in [\theta_{L},\theta_{H}]} 
V_{C}^{\theta}(\tilde{\theta}).
\label{TL theta}
\end{equation}
\end{problem}

\subsection{Verification theorem}

The verification theorem  is similar to that in Section \ref{sec:vt}.  We first give the   definition of equilibrium strategy below. For $(t,x)\in
\lbrack0,T]\times%
\mathbb{R}
$, we define%
\begin{equation*}
J^{\theta}(t,x,l;\tilde{\theta})=\mathbb{E}\left[  \left.  X_{C}^{\theta
}(T,l;\tilde{\theta})\right\vert X_{C}^{\theta}(t,l;\tilde{\theta})=x\right]
-\frac{\gamma(\theta)}{2}\mathrm{Var}\left[  \left.  X_{C}^{\theta}(T,l;\tilde{\theta
})\right\vert X_{C}^{\theta}(t,l;\tilde{\theta})=x\right].\label{J}
\end{equation*}

\begin{definition}
\label{def equilibrium1} Given $(t,x)\in\lbrack0,T]\times%
\mathbb{R}
$ and $\hat{l}\in\mathbb{A}_{l}$, we define a perturbed strategy
$l^{\varepsilon}$ as
\begin{equation*}
l^{\varepsilon}(s,y)=\left\{
\begin{array}
[c]{ll}%
\bar{l}(y), & s\in\lbrack t,t+\varepsilon),\text{ }y\in%
\mathbb{R}
\text{,}\\
\hat{l}(s,y), & s\in\lbrack t+\varepsilon,T],\text{ }y\in%
\mathbb{R}
\text{,}%
\end{array}
\right.  \label{l eps1}%
\end{equation*}
where $\bar{l}:%
\mathbb{R}
\rightarrow%
\mathbb{R}
$ and $\varepsilon>0$. Suppose that, for any given $(t,x)\in\lbrack0,T]\times%
\mathbb{R}
$, and any such perturbed strategy $l^{\varepsilon}\in\mathbb{A}_{l}$, we
have
$$
\liminf_{\varepsilon\rightarrow0}{\frac{J^{\theta}(t,x,\hat{l};\tilde{\theta
})-J^{\theta}(t,x,l^{\varepsilon};\tilde{\theta})}{\varepsilon}}\geq0.
$$
Then, $\hat{l}$ is called an equilibrium strategy for the dynamic MV problem
(\ref{VP 12}).
\end{definition}
For any $l\in\mathbb{A}_{l}$ and {$\varphi(t,x)\in C^{1,2}([0,T]\times
\mathbb{R})$, i.e., first-order continuously differentiable in $t$ and
second-order continuously differentiable in }$x${, we define the
integro-differential operator}%
\begin{equation}\begin{aligned}
\mathcal{L}_{l}\varphi(t,x) : &  ={\frac{\partial\varphi(t,x)}{\partial
t}-\frac{\partial\varphi(t,x)}{\partial x}}\lambda\int_{0}^{\infty}
(1+\xi(\tilde{\theta}))l(t,y))  f(y;\tilde{\theta
})\mathrm{d}y\\
  &+\lambda\int_{0}^{\infty}\left[  \varphi(t,x-y+l(t,y))-\varphi(t,x)\right]
f(y;\theta)\mathrm{d}y.\label{operator1}
\end{aligned}
\end{equation}

\begin{theorem} Let the integro-differential operator $\mathcal{L}_l$ be defined as in \eqref{operator1}.  Suppose there exist $V(t,x), g(t,x)\in C^{1,2}([0,T])\times\mathbb R$ satisfy the following  conditions:
\begin{itemize} 
\item[(1)] For any $(t,x)\in[0,T]\times\mathbb R$, \begin{equation}
\sup_{l\in \mathbb{R}
}\left\{  \mathcal{L}_{l}V(t,x)-\frac{1}{2}\gamma(\theta)\mathcal{L}_{l}%
g^{2}(t,x)+\gamma(\theta) g(t,x)\mathcal{L}_{l}g(t,x)\right\}  =0 \label{HJB 1},%
\end{equation}
and $\hat{l}$ denotes the optimal value to achieve the supremum in $l$. 
\item[(2)] For any $(t, x) \in[0, T] \times \mathbb{R}$,
\begin{equation*}
\left\{\begin{array}{l}
\mathcal{L}_{\hat{l}}g(t,x)=0,  \\
V(T,x)=g(T,x)=x.
\end{array}\right.\label{equiva1}
\end{equation*}
\item[(3)] For any $(t, x) \in[0, T] \times \mathbb{R}$, $\hat l(t)$, $\mathcal{L}_{\hat l}V(t,x)$ and $\mathcal{L}_{\hat l}g^2(t, x)$ are  deterministic functions of $t$ and independent of $x$.
\end{itemize}
Then $\hat{l}$  is the equilibrium strategy  and $V(t,x)=J^{\theta}(t,x,\hat{l};\tilde{\theta})$ is the equilibrium value function to the dynamic MV problem \eqref{VP 12}. Besides, $g(t,x)=\mathbb{E}%
\left[  \left.  X_{C}^{\theta}(T,\hat{l};\tilde{\theta})\right\vert
X_{C}^{\theta}(t,\hat{l};\tilde{\theta})=x\right]$. \label{veri_thm1}
\end{theorem}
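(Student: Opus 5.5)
The proof follows the standard verification argument for time-inconsistent mean--variance problems of \cite{bjork2010general}, adapted to the two-measure structure of the operator \eqref{operator1}. Here premiums are computed under the reported density $f(\cdot;\tilde\theta)$, while jumps arrive according to the true density $f(\cdot;\theta)$. Fix $\theta,\tilde\theta$ and write $X=X_C^\theta(\cdot,l;\tilde\theta)$.

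\textbf{Step 1: generator and Dynkin formula.} By \eqref{XP}, $X$ has drift $-(1+\xi(\tilde\theta))\lambda\int l f(y;\tilde\theta)\,\mathrm{d}y$ and jumps $-(y-l)$ driven by $N^\theta$, whose compensator is $\lambda f(y;\theta)\,\mathrm{d}y\,\mathrm{d}t$. Hence $\mathcal{L}_l$ is its generator. Dynkin's formula, justified by the moment conditions $\mathbb{E}[Y_\theta^2]<\infty$ and admissibility of $l$ so that the compensated integrals are martingales, gives
\[
\mathbb{E}_{t,x}[\varphi(t+\varepsilon,X_{t+\varepsilon})]=\varphi(t,x)+\mathbb{E}_{t,x}\Bigl[\int_t^{t+\varepsilon}\mathcal{L}_l\varphi(s,X_s)\,\mathrm{d}s\Bigr].
\]

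\textbf{Step 2: identify $g$ and $V$ under $\hat l$.} Since $\mathcal{L}_{\hat l}g=0$ and $g(T,x)=x$, Dynkin's formula on $[t,T]$ yields $g(t,x)=\mathbb{E}_{t,x}[X_T^{\hat l}]$. Plugging $\hat l$ into \eqref{HJB 1} gives $\mathcal{L}_{\hat l}V-\tfrac{\gamma(\theta)}{2}\mathcal{L}_{\hat l}g^2=0$, using $\mathcal{L}_{\hat l}g=0$. Applying Dynkin to $V$ and to $g^2$ with terminal values $x$ and $x^2$ then shows
\[
V(t,x)=\mathbb{E}_{t,x}[X_T]-\tfrac{\gamma(\theta)}{2}\bigl(\mathbb{E}_{t,x}[X_T^2]-g(t,x)^2\bigr)=J^\theta(t,x,\hat l;\tilde\theta).
\]
Condition (3) is used here: because $\mathcal{L}_{\hat l}V$ and $\mathcal{L}_{\hat l}g^2$ are deterministic and independent of $x$, the integrands are nonrandom and the expectations close without any feedback from the path.

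\textbf{Step 3: perturbation inequality.} For $l^\varepsilon$, use the tower property at $t+\varepsilon$:
\[
J^\theta(t,x,l^\varepsilon)=\mathbb{E}_{t,x}[V(t+\varepsilon,X^{\bar l}_{t+\varepsilon})]-\tfrac{\gamma}{2}\Bigl(\mathbb{E}_{t,x}[g^2(t+\varepsilon,X^{\bar l}_{t+\varepsilon})]-\bigl(\mathbb{E}_{t,x}[g(t+\varepsilon,X^{\bar l}_{t+\varepsilon})]\bigr)^2\Bigr)+\tfrac{\gamma}{2}g(t,x)^2-\tfrac{\gamma}{2}g(t,x)^2.
\]
Expanding each term by Dynkin to first order in $\varepsilon$ gives
\[
J^\theta(t,x,l^\varepsilon)=V(t,x)+\varepsilon\bigl[\mathcal{L}_{\bar l}V-\tfrac{\gamma}{2}\mathcal{L}_{\bar l}g^2+\gamma g\,\mathcal{L}_{\bar l}g\bigr](t,x)+o(\varepsilon).
\]
By \eqref{HJB 1} the bracket is at most $0$, so $\liminf_{\varepsilon\to0}\bigl(J(\hat l)-J(l^\varepsilon)\bigr)/\varepsilon\ge0$. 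This is exactly Definition \ref{def equilibrium1}.

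\textbf{Main obstacle.} The main difficulty is the $o(\varepsilon)$ control in Step 3, especially the squared-expectation term $(\mathbb{E}[g])^2$, whose first-order expansion is $g^2+2\varepsilon g\,\mathcal{L}_{\bar l}g$. This requires continuity of $s\mapsto\mathbb{E}[\mathcal{L}_{\bar l}\varphi(s,X_s)]$ at $s=t$. I would obtain it from the $C^{1,2}$ regularity together with square-integrability of the jumps. The remaining steps, including verifying that $f(\cdot;\tilde\theta)$ enters only the drift, are routine and go exactly as in Theorem \ref{veri_thm}.
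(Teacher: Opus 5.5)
Your proposal is correct and is exactly the standard Bj\"{o}rk--Murgoci verification argument: Dynkin's formula identifies $g$ and $V=J^{\theta}$, then the tower property at $t+\varepsilon$ and a first-order expansion give the bracket $\mathcal{L}_{\bar l}V-\frac{\gamma(\theta)}{2}\mathcal{L}_{\bar l}g^{2}+\gamma(\theta)g\,\mathcal{L}_{\bar l}g\le 0$; the paper does not write this out but simply invokes it via \cite{bjork2010general}. The only caveats are that the $o(\varepsilon)$ control in Step 3 needs a growth hypothesis rather than $C^{1,2}$ regularity alone (e.g.\ at most quadratic growth of $V$ and $g^{2}$ in $x$, so the jump integrals are finite under $\mathbb{E}[Y_\theta^2]<\infty$), and that condition (3) is not logically needed in Step 2 beyond easing integrability---but the paper's own statement leaves these technicalities implicit too.
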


\subsection{Solution of the optimal menu}
 We determine the optimal menu of insurance contracts by following the four-step procedure introduced in Section \ref{sec:3.3}.   
 Firstly, following a similar approach to that in Theorem \ref{thm PH gamma}, the optimal loss coverage function for the customer's problem \eqref{VP 12} is given by
 the following proposition. 
 \begin{proposition}\label{prop:4.1}
     An optimal insurance coverage to the customer's problem
(\ref{VP 12}) is given by
\begin{equation}\label{eq:l_0}
\hat{l}^{\theta}(y; \tilde{\theta}) = \left( y + \frac{1}{\gamma(\theta)} - \frac{f(y;\tilde \theta)(1 + \xi(\tilde{\theta}))}{\gamma(\theta) f(y; \theta)} \right) \vee 0 \wedge y,
\end{equation}
and the value function is given by%
\begin{equation}
\begin{aligned}
V_{C}^{\theta}(\tilde{\theta}) &  =x_{C}-\lambda T\left[\int_{0}^{\infty}\left[
 (1+\xi(\tilde{\theta}))  \hat
{l}^{\theta}(y;\tilde{\theta})\right]  f(y;\tilde{\theta})\mathrm{d}y\right.\\
&\left.  +\int_{0}^{\infty}\left(  \left(  y-\hat{l}^{\theta}%
(y;\tilde{\theta})\right)  +\frac{\gamma(\theta)}{2}\left(  y-\hat{l}^{\theta
}(y;\tilde{\theta})\right)  ^{2}\right)  f(y;\theta)\mathrm{d}y\right].
\end{aligned}\label{eq:general}
\end{equation}
\end{proposition}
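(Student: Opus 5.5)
We mirror the proof of Theorem~\ref{thm PH gamma}, now using Theorem~\ref{veri_thm1} with the operator \eqref{operator1}. There the premium is computed under the reported density $f(\cdot;\tilde\theta)$, while claims arrive under the true density $f(\cdot;\theta)$. We take the ansatz
$$
V(t,x)=x+B(t),\qquad g(t,x)=x+b(t),\qquad B(T)=b(T)=0 .
$$

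\textbf{Step 1: apply the operator.} For $g(t,x)=x+b(t)$ one has
$$
\mathcal L_l g^2-2g\mathcal L_l g=\lambda\int_0^\infty (y-l(t,y))^2 f(y;\theta)\,\mathrm dy,
$$
since the drift parts cancel and the jump part gives $(g-(y-l))^2-g^2+2g(y-l)=(y-l)^2$. Also
$$
\mathcal L_l V=B'(t)-\lambda\int_0^\infty(1+\xi(\tilde\theta))\,l\,f(y;\tilde\theta)\,\mathrm dy-\lambda\int_0^\infty (y-l)f(y;\theta)\,\mathrm dy .
$$
Hence \eqref{HJB 1} becomes
$$
0=B'(t)-\lambda\inf_{l}\int_0^\infty\Big[(1+\xi(\tilde\theta))\,l\,f(y;\tilde\theta)+\Big((y-l)+\tfrac{\gamma(\theta)}{2}(y-l)^2\Big)f(y;\theta)\Big]\mathrm dy .
$$
The infimum does not depend on $x$, so condition (3) of Theorem~\ref{veri_thm1} holds.

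\textbf{Step 2: pointwise minimization.} The admissible set $\mathbb A_l$ is taken to impose $0\le l\le y$, which the clipping in \eqref{eq:l_0} implicitly assumes. Under this constraint we minimize the integrand for each fixed $y$ with $f(y;\theta)>0$. The map
$$
l\mapsto (1+\xi(\tilde\theta))f(y;\tilde\theta)\,l+f(y;\theta)\Big[(y-l)+\tfrac{\gamma(\theta)}2(y-l)^2\Big]
$$
is a strictly convex quadratic in $l$. Setting its derivative to zero gives
$$
(1+\xi(\tilde\theta))f(y;\tilde\theta)-f(y;\theta)\big(1+\gamma(\theta)(y-l)\big)=0,
$$
that is,
$$
l=y+\frac{1}{\gamma(\theta)}-\frac{(1+\xi(\tilde\theta))f(y;\tilde\theta)}{\gamma(\theta)f(y;\theta)} .
$$
By convexity, the constrained minimizer is this value projected onto $[0,y]$, which is exactly \eqref{eq:l_0}. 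Where $f(y;\theta)=0$ the pointwise problem is degenerate: the ratio diverges to $+\infty$, the expression is clipped at $0$, and $l=0$ is optimal. We will note this convention explicitly.

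\textbf{Step 3: recover the value function.} Substituting $\hat l^\theta$ back in, $B'(t)$ equals $\lambda$ times a constant $K$, namely the integral appearing in \eqref{eq:general}. Integrating with $B(T)=0$ gives $B(0)=-\lambda TK$. Hence $V_C^\theta(\tilde\theta)=V(0,x_C)=x_C-\lambda TK$, which is \eqref{eq:general}.

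For condition (2) we need $\mathcal L_{\hat l} g=0$. This gives $b'(t)$ as a constant determined by $\hat l$, and $b(t)$ is then determined by $b(T)=0$. This is consistent with the ansatz.

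\textbf{Main obstacle.} The conceptual content is light. The main point is careful bookkeeping of the two densities in \eqref{operator1}: the premium uses $f(y;\tilde\theta)$, while jumps occur under $f(y;\theta)$. Beyond that, we must justify the pointwise minimization: the resulting $\hat l$ should be measurable and admissible with finite second moments, which follows from $0\le\hat l\le y$ and $\mathbb E[Y^2]<\infty$. We also need the degenerate case $f(y;\theta)=0$ handled as in Step 2.
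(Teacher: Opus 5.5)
Your proposal is correct and follows the same route as the paper. Both use the ansatz $V(t,x)=x+B(t)$, $g(t,x)=x+b(t)$ in the extended HJB equation (\ref{HJB 1}), minimize the resulting integrand pointwise in $l$, and integrate $B'$ with $B(T)=0$. Your added bookkeeping, namely the explicit identity $\mathcal L_l g^2-2g\,\mathcal L_l g=\lambda\int_0^\infty (y-l)^2 f(y;\theta)\,\mathrm dy$ and the remark that the truncation to $[0,y]$ in \eqref{eq:l_0} needs the constraint $0\le l\le y$ (the paper writes $\inf_{l\in\mathbb R}$), tightens the paper's argument without changing it.
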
 
\begin{proof} By  (\ref{XP}) and  (\ref{premium2}), the customer's surplus evolves as follows:
$$
\mathrm{d}X_{C}^{\theta}(t,l;\tilde{\theta})=-\lambda\int_{0}^{\infty}
(1+\xi(\tilde{\theta}))l(t,y)  f(y;\tilde{\theta
})\mathrm{d}y\mathrm{d}t-\int_{0}^{\infty}(y-l(t,y))N^{\theta}(\mathrm{d}%
t,\mathrm{d}y).
$$
We aim to solve for $V$ and $g$ satisfying all the conditions of Theorem \ref{veri_thm1}. 
Again, we use the following ansatz for the value functions:
\begin{equation}
V(t,x)=x+B(t)\text{\quad and\quad}g(t,x)=x+b(t), \label{ans theta1}%
\end{equation}
with $B(T)=b(T)=0$. Substituting (\ref{ans theta1}) into the extended HJB
equation (\ref{HJB 1}) yields%
\begin{equation*}
B^{\prime}(t)-\lambda\inf_{l\in
\mathbb{R}
}\int_{0}^{\infty}\left\{ (1+\xi(\tilde{\theta}))l(t,y) f(y;\tilde{\theta})+\left[  \left(  y-l(t,y)\right)
+\frac{\gamma(\theta)}{2}\left(  y-l(t,y)\right)  ^{2}\right]  f(y;\theta)\right\}
\mathrm{d}y=0. \label{B}%
\end{equation*}
It follows that the minimizer, that is the equilibrium strategy, is given by%
$$
\hat{l}^{\theta}(y; \tilde{\theta}) = \left( y + \frac{1}{\gamma(\theta)} - \frac{f(y;\tilde \theta)(1 + \xi(\tilde{\theta}))}{\gamma(\theta) f(y; \theta)} \right) \vee 0 \wedge y.
$$
By the boundary condition $B(T)=0$, one obtains \eqref{eq:general}.
\end{proof}

The customer's optimal strategy in \eqref{eq:l_0} is explicit, but its complexity prevents a direct analytical verification of the truth-telling constraint \eqref{TL theta} via the first-order condition. According to \eqref{eq:l_0}, when the customer reports truthfully, i.e., $\tilde{\theta}=\theta$, we obtain
\begin{equation}\label{ltt0}
\hat{l}^{\theta}(y;\theta) = \left(y - \frac{\xi(\theta)}{\gamma(\theta)}\right)_+,
\end{equation}
which corresponds to an excess-of-loss insurance strategy. To facilitate a tractable analysis, we therefore restrict attention to strategies of the excess-of-loss form.  Moreover, in the classical MV framework without information asymmetry, it is well known that an excess-of-loss strategy is optimal under the expected-value premium. Hence, this assumption is also reasonable in our context.

Accordingly, given a true risk type $\theta$, we assume that if the customer selects a type-$\tilde{\theta}$ contract with coverage function $l$  taking the form of an excess-of-loss contract:
\begin{equation}\label{eq:stop_loss}
l(t,y;\tilde{\theta}) = (y - d(t,y;\tilde{\theta}))_+, \quad d(t,y;\tilde{\theta}) \ge 0,
\end{equation}
where $d(t,y;\tilde{\theta})$ denotes the customer’s retention level, which may depend on time $t$, the realized loss $y$, and the type parameter $\tilde{\theta}$.

\begin{theorem}\label{thm:optimal_stop_loss}
When the coverage takes the excess-of-loss form in \eqref{eq:stop_loss}, 
an optimal solution to the customer’s problem \eqref{VP 12} is given by
$$
\hat l^\theta(y;\tilde{\theta})=(y-\hat d^\theta(\tilde{\theta}))_+,
$$
where
\begin{equation*}\label{eq:d}
\hat{d}^\theta(\tilde\theta) = \left(\frac{1+\xi(\tilde{\theta})}{\gamma} \cdot \frac{1 - F(\hat{d}^\theta(\tilde\theta); \theta)}{1 - F(\hat{d}^\theta(\tilde\theta); \tilde{\theta})} - \frac{1}{\gamma}\right)_+.
\end{equation*}
The corresponding value function is
\begin{equation}\label{eq:value_function}
\begin{aligned}
V_C^{\theta}(\tilde{\theta}) &= x_C 
 - \lambda T \left[(1+\xi(\tilde{\theta}))\int_{\hat{d}^\theta(\tilde\theta)}^\infty  (1- F(y; \tilde{\theta})) \, \mathrm d y  + \int_0^{\hat{d}^\theta(\tilde\theta)} \left( y + \frac{\gamma(\theta)}{2} y^2 \right) f(y; \theta) \, \mathrm d y \right.\\
& \left.+ \left( \hat{d}^\theta(\tilde\theta) + \frac{\gamma(\theta)}{2} (\hat{d}^\theta(\tilde\theta))^2 \right) \left(1 - F(\hat{d}^\theta(\tilde\theta); \theta) \right)\right].
\end{aligned}
\end{equation}
In particular, when $\tilde{\theta} = \theta$,
\begin{equation}\label{eq:d_hat}
\hat l^\theta(y;{\theta})=\left(y-\frac{\xi(\theta)}{\gamma(\theta)}\right)_+,
\end{equation}
and
\begin{equation}\label{VP true}
\begin{aligned}
V_C^{\theta}(\theta) &= x_C - \lambda T \Bigg[(1+\xi(\theta))\int_{\frac{\xi(\theta)}{\gamma(\theta)}}^{\infty}(1-F(y;\theta))\mathrm d y  + \int_0^{\frac{\xi(\theta)}{\gamma(\theta)}}\left(y+\frac{\gamma(\theta)}{2}y^2\right)f(y;\theta)\mathrm d y\\
& \quad + \left(\frac{\xi(\theta)}{\gamma}+\frac{\gamma(\theta)}{2}\left(\frac{\xi(\theta)}{\gamma(\theta)}\right)^2\right)\left(1-F\left(\frac{\xi(\theta)}{\gamma(\theta)};\theta\right)\right)\Bigg].
\end{aligned}
\end{equation}
\end{theorem}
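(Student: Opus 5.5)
The proof follows the template of Theorem~\ref{thm PH gamma} and Proposition~\ref{prop:4.1}. We invoke the verification Theorem~\ref{veri_thm1} with the ansatz $V(t,x)=x+B(t)$ and $g(t,x)=x+b(t)$, where $B(T)=b(T)=0$. Substituting into \eqref{HJB 1}, the terms $-\tfrac12\gamma\mathcal L_l g^2+\gamma g\mathcal L_l g$ reduce to $-\tfrac{\gamma(\theta)}{2}\lambda\int (y-l)^2 f(y;\theta)\,\mathrm dy$. So, exactly as in Proposition~\ref{prop:4.1}, the HJB reduces to
\[
B'(t)=\lambda\inf_{l}\Phi(l),\qquad
\Phi(l)=\int_0^\infty\Bigl\{(1+\xi(\tilde\theta))\,l\,f(y;\tilde\theta)+\Bigl[(y-l)+\tfrac{\gamma(\theta)}{2}(y-l)^2\Bigr]f(y;\theta)\Bigr\}\,\mathrm dy .
\]
The infimum is now taken only over excess-of-loss coverages $l=(y-d)_+$. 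Because this integrand does not depend on $t$ or $x$, the minimizer is deterministic and time-independent, and condition (3) of Theorem~\ref{veri_thm1} holds automatically. Pointwise minimization in $y$ is no longer available, since a single retention $d$ must serve all $y$. We therefore restrict to constant $d\ge0$ and minimize the scalar function $\phi(d)=\Phi((y-d)_+)$. A $y$-dependent $d$ is either equivalent to some cutoff or collapses to the general problem, so one should note that we take $d$ constant, as the stated form of $\hat l$ presumes.

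Next, write $\phi$ explicitly. Using $\int_d^\infty (y-d)f(y;\tilde\theta)\,\mathrm dy=\int_d^\infty(1-F(y;\tilde\theta))\,\mathrm dy$ and $y-(y-d)_+=y\wedge d$, we get
\[
\phi(d)=(1+\xi(\tilde\theta))\int_d^\infty(1-F(y;\tilde\theta))\,\mathrm dy+\int_0^d\Bigl(y+\tfrac{\gamma(\theta)}2y^2\Bigr)f(y;\theta)\,\mathrm dy+\Bigl(d+\tfrac{\gamma(\theta)}2d^2\Bigr)(1-F(d;\theta)).
\]
This is precisely the bracket in \eqref{eq:value_function}. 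Differentiating, the boundary terms $\pm(d+\tfrac{\gamma}{2}d^2)f(d;\theta)$ cancel, leaving
\[
\phi'(d)=-(1+\xi(\tilde\theta))(1-F(d;\tilde\theta))+(1+\gamma(\theta)d)(1-F(d;\theta)).
\]
Setting $\phi'(d)=0$ and solving for the $d$ outside the survival ratio gives $d=\frac{1+\xi(\tilde\theta)}{\gamma}\frac{1-F(d;\theta)}{1-F(d;\tilde\theta)}-\frac1\gamma$. If $\phi'(0)\ge0$, i.e.\ $(1+\xi(\tilde\theta))\le1$ after using $F(0;\cdot)=0$, the minimum is at the corner $d=0$. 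This corner explains the positive part in the fixed-point characterization of $\hat d^\theta(\tilde\theta)$. Integrating $B'=\lambda\phi(\hat d)$ with $B(T)=0$ gives $B(0)=-\lambda T\phi(\hat d)$. Then $V(0,x_C)=x_C-\lambda T\phi(\hat d)$, which is \eqref{eq:value_function}.

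For the truthful case $\tilde\theta=\theta$, the survival ratio equals one and the equation becomes explicit: $\hat d=\xi(\theta)/\gamma(\theta)\ge0$. This yields \eqref{eq:d_hat}, and substituting it into \eqref{eq:value_function} gives \eqref{VP true}. It also agrees with \eqref{ltt0}. In this case $\phi'(d)=(1-F(d;\theta))(\gamma d-\xi)$, so $\phi$ is decreasing then increasing, and the critical point is a global minimizer.

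The main obstacle is justifying that the first-order condition characterizes a global minimizer when $\tilde\theta\neq\theta$, since $\phi$ need not be convex. The first thing to try is to write $\phi'(d)=(1-F(d;\theta))\bigl[1+\gamma d-(1+\xi(\tilde\theta))R(d)\bigr]$, where $R(d)=\frac{1-F(d;\tilde\theta)}{1-F(d;\theta)}$, and to argue via Assumption~\ref{ass:FOD} and monotonicity of $R$ that the bracket crosses zero once. Failing that, I would state the result as "an optimal solution" and note that $\phi$ is continuous and coercive, because its $\tfrac{\gamma}{2}d^2(1-F)$ and $\int_0^d$ terms grow while the first term vanishes. A minimizer therefore exists, and any interior minimizer satisfies the displayed equation. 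Any root of that equation attaining the minimum is then the claimed $\hat d^\theta(\tilde\theta)$.
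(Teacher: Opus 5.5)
Your route is the paper's own. You use the same ansatz, the same reduction of \eqref{HJB 1} to $B'(t)=\lambda\inf_{d\ge0}\phi(d)$ with the same $\phi$, and the same integration from $B(T)=0$. Your treatment of $\tilde\theta=\theta$ via $\phi'(d)=(1-F(d;\theta))(\gamma d-\xi)$ is correct, and more careful than the paper, which never checks that the critical point is a minimizer.

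The concrete error is the sentence ``setting $\phi'(d)=0$ \dots\ gives'' the displayed fixed-point equation. Your (correct) derivative vanishes iff $(1+\gamma(\theta)d)(1-F(d;\theta))=(1+\xi(\tilde\theta))(1-F(d;\tilde\theta))$, i.e.
\[
d=\frac{1+\xi(\tilde\theta)}{\gamma(\theta)}\cdot\frac{1-F(d;\tilde\theta)}{1-F(d;\theta)}-\frac{1}{\gamma(\theta)}.
\]
This has the \emph{reciprocal} of the survival ratio in the statement. Your own factorization with $R(d)$ in the last paragraph says exactly this, so the proposal contradicts itself. The statement, and the paper's proof (which minimizes the same function), have the ratio upside down. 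The formula is right only at $\tilde\theta=\theta$, where the ratio is one, so \eqref{eq:d_hat} and \eqref{VP true} are unaffected. The corrected form matches \eqref{eq:l_0}, whose ratio is $f(y;\tilde\theta)/f(y;\theta)$. It also matches intuition: cover is priced under $\tilde\theta$, so under-reporting makes it cheaper and should lower the retention. You should derive this equation and flag the discrepancy rather than reproduce the target. Also, $\phi'(0)=-\xi(\tilde\theta)$, so the corner case is just $\xi(\tilde\theta)=0$, and $\phi'(0)\ge0$ alone only gives a local minimum.

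Your fallback for $\tilde\theta\neq\theta$ also fails, because $\phi$ is not coercive. Since $\mathbb E[Y^2]<\infty$ forces $d^2(1-F(d;\theta))\to0$, you get $\phi(d)\to\mathbb E[Y_\theta]+\frac{\gamma(\theta)}{2}\mathbb E[Y_\theta^2]$, the no-insurance value. So $\phi$ is bounded and need not attain its infimum.

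The single-crossing idea breaks down too. Assumption~\ref{ass:FOD} does not make $R$ monotone, and for $\tilde\theta>\theta$ it can explode. With exponential losses, $R(d)=e^{d(1/\theta-1/\tilde\theta)}$, so the bracket $1+\gamma d-(1+\xi)R(d)$ is concave, starts at $-\xi$ and tends to $-\infty$. It therefore has at most two roots, typically a local minimum followed by a local maximum of $\phi$, after which $\phi$ decreases toward the no-cover value; it may have none. For instance, $\theta=1$, $\tilde\theta=10$, $\gamma(\theta)=1$, $\xi(\tilde\theta)=0.1$ gives $1+d-1.1e^{0.9d}<0$ for all $d\ge0$. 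Then $\phi$ is strictly decreasing and no finite optimal retention exists. So for $\tilde\theta\neq\theta$ the claim ``an optimal solution is given by'' is not established, by you or by the paper, which simply asserts the first-order condition.

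What you can prove is a local version. At $\tilde\theta=\theta$ the minimum of $\phi$ lies strictly below $\lim_{d\to\infty}\phi(d)$; the gap is $\int_{\xi/\gamma}^{\infty}(\gamma s-\xi)(1-F(s;\theta))\,\mathrm ds$, positive whenever $F(\xi/\gamma;\theta)<1$. Assume $\xi$ and $\tilde\theta\mapsto\mathbb E[Y_{\tilde\theta}]$ are continuous. Then $\phi$ depends on $\tilde\theta$ continuously, uniformly in $d$. Hence for $\tilde\theta$ near $\theta$, a root of the corrected equation near $\xi(\theta)/\gamma(\theta)$ is the global minimizer. That local statement is all Lemma~\ref{prop:xi-implicit} needs.
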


\begin{proof}
We aim to solve for $V$ and $g$ satisfying all the conditions of Theorem \ref{veri_thm1}. 
We use the same  ansatz for the value functions in \eqref{ans theta1}. Substituting (\ref{ans theta1})  and \eqref{eq:stop_loss} back into the extended HJB
equation (\ref{HJB 1}) yields%
\begin{align*}
0  &  =B^{\prime}(t)-\lambda \inf_{d\geq0}\left((1+\xi(\tilde{\theta}))
\int_{d}^{\infty} (y-d) f(y; \tilde{\theta} )\mathrm{d}y\right.\\&
\left.+\int_{0}^{d}( y   +\frac{\gamma}{2}  y^2 ) f(y; \theta )\mathrm{d}y + (d+\frac{\gamma(\theta)}{2}d^2 ) (1-F(d,\theta))\right). 
\end{align*}
It follows that the minimizer   is given implicitly by
$$ \hat{d}^\theta(\tilde\theta) = \left(\frac{1+\xi(\tilde{\theta})}{\gamma(\theta)} \cdot \frac{1 - F(\hat{d}^\theta(\tilde\theta); \theta)}{1 - F(\hat{d}^\theta(\tilde\theta); \tilde{\theta})} - \frac{1}{\gamma(\theta)}\right)_+.$$
By the boundary condition $B(T)=0$, one obtains \eqref{eq:value_function}. 
In particular, when $\tilde{\theta}=\theta$, we have
$$\hat{d}^{\theta}(y;\theta)=\frac{\xi({\theta})}{\gamma(\theta)}\geq 0,$$
and $V_C^{\theta}({\theta})$ 
 is given by \eqref{VP true}. 
\end{proof}

The following lemma characterizes the risk-loading function $\xi(\theta)$ that satisfies the truth-telling constraint. In particular, $\xi(\theta)$ must solve an ODE, which can also be written in an implicit integral form.

Define
\begin{equation}\label{eq:Psi}
\Psi (\theta ,\xi ):=\int_{\xi /\gamma(\theta) }^{\infty }(1-F(y;\theta ))\,\mathrm{d}y,
\qquad 
\Psi_{\theta}(\theta,\xi)
:
=
-\int_{\xi/\gamma(\theta)}^{\infty}
\partial_{\theta}F(y;\theta)\,\d y .
\end{equation}
\begin{lemma}\label{prop:xi-implicit}
The truth-telling first-order condition implies that $\hat{\xi}(\theta)$ solves the ODE
\begin{equation}
\hat{\xi}^{\prime }(\theta )\Psi (\theta ,\hat{\xi}(\theta ))+(1+\hat{\xi}%
(\theta ))\Psi _{\theta }(\theta ,\hat{\xi}(\theta ))=0. \label{ODE}
\end{equation}%
\end{lemma}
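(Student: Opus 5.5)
The plan is to differentiate the value function $V_C^\theta(\tilde\theta)$ of Theorem~\ref{thm:optimal_stop_loss} with respect to the reported type $\tilde\theta$ and impose the first-order condition $\partial_{\tilde\theta}V_C^\theta(\tilde\theta)|_{\tilde\theta=\theta}=0$. The first step is to rewrite \eqref{eq:value_function} as $x_C-\lambda T\,H(d,\tilde\theta;\theta)$ evaluated at $d=\hat d^\theta(\tilde\theta)$, where
\[
H(d,\tilde\theta;\theta)=(1+\xi(\tilde\theta))\int_d^\infty(1-F(y;\tilde\theta))\,\mathrm{d}y+\int_0^d\Bigl(y+\tfrac{\gamma(\theta)}{2}y^2\Bigr)f(y;\theta)\,\mathrm{d}y+\Bigl(d+\tfrac{\gamma(\theta)}{2}d^2\Bigr)(1-F(d;\theta)).
\]
Note that the true type $\theta$ enters only through $\gamma(\theta)$ and $F(\cdot;\theta)$, which are held fixed while differentiating in $\tilde\theta$.

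The second step is the envelope argument. Since $\hat d^\theta(\tilde\theta)$ minimizes $H(\cdot,\tilde\theta;\theta)$ over $d\ge0$ (proof of Theorem~\ref{thm:optimal_stop_loss}), we have
\[
\frac{\mathrm{d}}{\mathrm{d}\tilde\theta}H(\hat d^\theta(\tilde\theta),\tilde\theta;\theta)=\partial_dH\cdot\partial_{\tilde\theta}\hat d+\partial_{\tilde\theta}H .
\]
At an interior minimizer $\partial_dH=0$, so the first term drops. This is easy to check at $\tilde\theta=\theta$. There $\partial_dH=-(1+\xi)(1-F(d;\theta))+(1+\gamma d)(1-F(d;\theta))$, which vanishes at $d=\xi(\theta)/\gamma(\theta)$. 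This also covers the boundary case $\xi=0$, $d=0$, since then $\partial_dH=0$ as well. Some care is needed to justify differentiability of $\hat d^\theta(\tilde\theta)$ in $\tilde\theta$ near $\theta$. I would handle this via the implicit function theorem applied to the fixed-point equation for $\hat d$, or avoid it entirely, since only the partial derivative of $H$ at fixed $d$ is needed once $\partial_dH=0$.

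The third step is the direct computation of the partial derivative at fixed $d$:
\[
\partial_{\tilde\theta}H=\xi'(\tilde\theta)\int_d^\infty(1-F(y;\tilde\theta))\,\mathrm{d}y-(1+\xi(\tilde\theta))\int_d^\infty\partial_\theta F(y;\tilde\theta)\,\mathrm{d}y .
\]
Differentiation under the integral requires Assumption~\ref{ass:FOD} together with an integrability/domination condition on $\partial_\theta F$, which I would state as a standing regularity requirement. I then evaluate at $\tilde\theta=\theta$ and $d=\xi(\theta)/\gamma(\theta)$. By the definitions in \eqref{eq:Psi}, the first integral is $\Psi(\theta,\xi(\theta))$ and the second term is $(1+\xi(\theta))\Psi_\theta(\theta,\xi(\theta))$. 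Setting the result to zero gives exactly \eqref{ODE}.

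The main obstacle is the envelope step: rigorously dropping the $\partial_{\tilde\theta}\hat d$ term given that $\hat d$ is only implicitly defined and could hit the constraint $d\ge0$. I would first argue that $\partial_dH=0$ at $\tilde\theta=\theta$ in all cases, then use a one-sided Danskin-type argument to conclude that the derivative of the minimum value equals $\partial_{\tilde\theta}H$ at the minimizer. This works as long as the minimizer is unique, which holds because $H$ is convex in $d$ near $\theta$.
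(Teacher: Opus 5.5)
Your proposal is correct and takes essentially the same approach as the paper. The paper also differentiates $V_C^\theta(\tilde\theta)$ by the chain rule, observes that the coefficient of $\partial_{\tilde\theta}\hat d^\theta(\tilde\theta)$, namely $(\gamma(\theta)\hat d^\theta(\tilde\theta)-\xi(\tilde\theta))(1-F(\hat d^\theta(\tilde\theta);\tilde\theta))$, vanishes at $\tilde\theta=\theta$ because $\hat d^\theta(\theta)=\xi(\theta)/\gamma(\theta)$, and reads off $\xi'\Psi+(1+\xi)\Psi_\theta=0$ from the remaining terms. Your envelope/Danskin discussion adds rigor the paper omits, though uniqueness of the minimizer at $\tilde\theta=\theta$ follows more directly from the sign of $\partial_d H=(1-F(d;\theta))(\gamma(\theta)d-\xi(\theta))$ than from convexity.
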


\begin{proof}
We first compute the derivative:
\begin{eqnarray*}
\frac{\partial V_{C}^{\theta }(\tilde{\theta})}{\partial \tilde{\theta}}
&\propto &-\xi ^{\prime }(\tilde{\theta})\Psi (\tilde\theta ,\xi(\tilde\theta) )-(1+\xi (\tilde{\theta}%
))\Psi _{\theta }(\tilde\theta ,\xi(\tilde\theta) ) -(\gamma(\theta) \hat{d}^{\theta }(\tilde{\theta})-\xi (\tilde{\theta}))(1-F(\hat{d%
}^{\theta }(\tilde{\theta});\tilde{\theta}))\,\frac{\hat{d}^{\theta }(\tilde{%
\theta})}{\partial \tilde{\theta}}.
\end{eqnarray*}%
By letting $\tilde{\theta}=\theta $, since $\hat{d}^{\theta }(\theta )=\frac{%
\xi (\theta )}{\gamma(\theta)}$, the first-order-condition $\left. \frac{\partial
V_{C}^{\theta }(\tilde{\theta})}{\partial \tilde{\theta}}\right\vert _{%
\tilde{\theta}=\theta }=0$ yields an ODE for the
optimal risk-loading function $\hat{\xi}(\theta )$, namely \eqref{ODE}.
\end{proof}
Define
\[
\phi(\theta,\xi) := \frac{\Psi_{\theta}(\theta,\xi)}{\Psi(\theta,\xi)},
\]  
with the convention $0=0/0$.

\begin{assumption}\label{ass:phi}
The function $\phi(\theta,\xi)$ and its derivative with respect to $\xi$ are bounded on 
$[\theta_L,\theta_H]\times\mathcal A_\xi$. That is, there exist constants $K,M>0$ such that
\[
0\leq \phi(\theta,\xi)\le K,
\qquad
\left|\frac{\partial \phi}{\partial \xi}(\theta,\xi)\right|\le M,
\quad
\forall (\theta,\xi)\in[\theta_L,\theta_H]\times\mathcal A_\xi .
\]
\end{assumption}
Assumption~\ref{ass:phi} is is not particularly restrictive. 
For instance, suppose that the random variable $Y$ has bounded support and that 
$\partial_\theta F(y;\theta)$ is continuous in $y$ on this support. 
Then  $1-F(y;\theta)$ is bounded and the domain of 
integration in \eqref{eq:Psi} is finite. Consequently, both 
$\Psi(\theta,\xi)$ and $\Psi_\theta(\theta,\xi)$ are bounded on 
$[\theta_L,\theta_H]\times\mathcal A_\xi$. 
It follows that 
$\phi(\theta,\xi)=\Psi_\theta(\theta,\xi)/\Psi(\theta,\xi)$ 
is also bounded. Moreover, the boundedness of 
$\partial_\xi\phi(\theta,\xi)$ follows from the same bounded-support 
argument. Hence Assumption~\ref{ass:phi} is satisfied by many standard 
bounded distributions, such as the uniform and truncated normal distributions.

\begin{lemma}\label{thm:existence-uniqueness-final}
Under Assumptions \ref{ass:FOD} and \ref{ass:phi}, let the boundary condition $\xi(\theta_L) = \xi_0 > 0$.
Define the closed subspace
\[
\mathcal{A}_{\xi_0} = \{\xi \in \mathcal{A}_\xi : \xi(\theta_L) = \xi_0, \ 0 \le \xi(\theta) \le \xi_0\}.
\] If $(1+\xi_0)M(\theta_H - \theta_L) < 1$ and $(1+\xi_0) e^{-K(\theta_H-\theta_L)} - 1 \ge 0$, then 
 the operator $T: \mathcal{A}_{\xi_0} \to \mathcal{A}_{\xi_0}$ defined by
\[
(T\xi)(\theta) = (1+\xi_0)\exp\Big(-\int_{\theta_L}^{\theta}\phi(z,\xi(z))\,\ z\Big) - 1
\]
is a contraction mapping on $\mathcal{A}_{\xi_0}$ and thus has a unique fixed point in $\mathcal{A}_{\xi_0}$.
\end{lemma}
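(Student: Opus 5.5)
The argument is a Banach fixed-point argument carried out in $\mathcal{A}_{\xi_0}$, equipped with the sup norm $\|\xi\|_\infty=\sup_{\theta\in[\theta_L,\theta_H]}|\xi(\theta)|$. Before anything else I will make the function space precise: I take $\mathcal{A}_{\xi_0}$ to be the set of bounded measurable (or continuous) functions on $[\theta_L,\theta_H]$ with $\xi(\theta_L)=\xi_0$ and $0\le\xi\le\xi_0$. The constraints $\xi(\theta_L)=\xi_0$ and $0\le \xi\le\xi_0$ are pointwise and preserved under uniform limits, so $\mathcal{A}_{\xi_0}$ is a closed subset of a Banach space and hence complete. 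I also need $T\xi$ to be well defined. Under Assumption \ref{ass:phi} the map $z\mapsto\phi(z,\xi(z))$ is bounded; it is measurable because $\phi$ is continuous in $\xi$ (its $\xi$-derivative is bounded) and measurable in $z$. The integral is therefore finite, and $T\xi$ is continuous, in fact Lipschitz in $\theta$.

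The first step is to show that $T$ maps $\mathcal{A}_{\xi_0}$ into itself. At $\theta=\theta_L$ the integral vanishes, so $(T\xi)(\theta_L)=\xi_0$. Since $\phi\ge 0$, the exponent $-\int_{\theta_L}^{\theta}\phi\,\mathrm{d}z$ is nonpositive, which gives $(T\xi)(\theta)\le (1+\xi_0)-1=\xi_0$. Since $\phi\le K$, the exponent is at least $-K(\theta-\theta_L)\ge -K(\theta_H-\theta_L)$, so $(T\xi)(\theta)\ge (1+\xi_0)e^{-K(\theta_H-\theta_L)}-1\ge 0$ by the second hypothesis. Nonnegativity of $\phi$ is where Assumption \ref{ass:FOD} enters: $\partial_\theta F<0$ gives $\Psi_\theta\ge0$, and $\Psi\ge 0$ holds trivially.

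The second step is the contraction estimate. For $\xi_1,\xi_2\in\mathcal{A}_{\xi_0}$, set $I_i(\theta)=\int_{\theta_L}^\theta\phi(z,\xi_i(z))\,\mathrm{d}z\ge0$. The function $x\mapsto e^{-x}$ is $1$-Lipschitz on $[0,\infty)$, so
\[
|(T\xi_1)(\theta)-(T\xi_2)(\theta)|\le (1+\xi_0)|I_1(\theta)-I_2(\theta)|.
\]
The mean value theorem in $\xi$ together with $|\partial_\xi\phi|\le M$ then gives
\[
|I_1(\theta)-I_2(\theta)|\le\int_{\theta_L}^{\theta}M|\xi_1(z)-\xi_2(z)|\,\mathrm{d}z\le M(\theta_H-\theta_L)\|\xi_1-\xi_2\|_\infty .
\]
Hence $\|T\xi_1-T\xi_2\|_\infty\le (1+\xi_0)M(\theta_H-\theta_L)\|\xi_1-\xi_2\|_\infty$, and the first hypothesis makes the constant strictly less than one. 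The Banach fixed-point theorem then yields a unique fixed point in $\mathcal{A}_{\xi_0}$. I would add a remark that the fixed point satisfies $\log(1+\xi)'=-\phi$, which is exactly \eqref{ODE}.

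I expect the main obstacle to be technical rather than conceptual: justifying the mean value step, which needs $\xi\mapsto\phi(\theta,\xi)$ to be differentiable along the segment between $\xi_1(z)$ and $\xi_2(z)$. The convention $0/0=0$ could break differentiability where $\Psi$ vanishes, for instance when $\xi/\gamma(\theta)$ exceeds the support. I would handle this by reading Assumption \ref{ass:phi} as a global Lipschitz bound $|\phi(\theta,a)-\phi(\theta,b)|\le M|a-b|$, which is all the estimate uses.
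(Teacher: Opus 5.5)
Your proof is correct and follows the paper's argument essentially step for step. The steps are: the upper bound $T\xi\le\xi_0$ from $\phi\ge 0$, the lower bound from $\phi\le K$ together with the second hypothesis, the mean-value estimate giving the constant $(1+\xi_0)M(\theta_H-\theta_L)<1$, and then Banach's fixed-point theorem. Your additions fill in details the paper leaves implicit: checking $(T\xi)(\theta_L)=\xi_0$, stating that $x\mapsto e^{-x}$ is $1$-Lipschitz on $[0,\infty)$ (the paper uses this silently), and reading Assumption~\ref{ass:phi} as a Lipschitz bound so the $0/0$ convention causes no trouble.
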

\begin{proof}
By Assumption \ref{ass:FOD}, $\phi(\theta,\xi) > 0$ for all $\theta \in [\theta_L,\theta_H]$. Then
\[
\exp\Big(-\int_{\theta_L}^{\theta}\phi(z,\xi(z)) \d z\Big) \le 1 \quad \implies \quad (T\xi)(\theta) \le \xi_0.
\]
For the lower bound, using Assumption \ref{ass:phi} and $(1+\xi_0) e^{-K(\theta_H-\theta_L)} - 1 \ge 0$, we have 
\[
(T\xi)(\theta) = (1+\xi_0)\exp\Big(-\int_{\theta_L}^{\theta}\phi(z,\xi(z)) \d z\Big) - 1 \ge (1+\xi_0) e^{-K(\theta_H-\theta_L)} - 1 \ge 0.\]

For any $\xi_1,\xi_2 \in \mathcal{A}_{\xi_0}$, by the mean value theorem,
for each $z$ there exists $\tilde{\xi}(z)$ between $\xi_1(z)$ and $\xi_2(z)$ such that
\[
\phi(z,\xi_1(z))-\phi(z,\xi_2(z))
=
\frac{\partial \phi}{\partial \xi}(z,\tilde{\xi}(z))
(\xi_1(z)-\xi_2(z)).
\]
Then \[
\begin{aligned}
|(T\xi_1)(\theta)-(T\xi_2)(\theta)|
&\le (1+\xi_0)\int_{\theta_L}^{\theta}
|\phi(z,\xi_1(z))-\phi(z,\xi_2(z))|\d z\\
&\le (1+\xi_0)M\int_{\theta_L}^{\theta}
|\xi_1(z)-\xi_2(z)|\mathrm dz\\
&\le (1+\xi_0)M(\theta_H-\theta_L)\|\xi_1-\xi_2\|_\infty.
\end{aligned}
\]
Taking the supremum over $\theta \in [\theta_L,\theta_H]$, we get
\[
\|T\xi_1 - T\xi_2\|_\infty \le (1+\xi_0) L (\theta_H - \theta_L) \|\xi_1 - \xi_2\|_\infty.
\]
By the assumption $(1+\xi_0)M(\theta_H - \theta_L) < 1$, the operator $T$ is a contraction.
Since the subspace $\mathcal{A}_{\xi_0}$ is closed and complete under the sup norm, the Banach fixed-point theorem guarantees a unique fixed point of $T$ in $\mathcal{A}_{\xi_0}$. 
This fixed point is therefore the unique solution to
\[
\xi(\theta) = (1+\xi_0)\exp\left(-\int_{\theta_L}^{\theta} \phi(z,\xi(z))\, dz\right) - 1.
\]
\end{proof}

The conditions in Assumption~\ref{ass:phi} are {sufficient} to guarantee the existence and uniqueness of a positive solution to \eqref{ODE}, but they are not necessary. In particular, for some specific distributions, these conditions may be violated, yet a unique positive solution to \eqref{ODE} can still be obtained. For example, in the case where $Y$ follows an exponential distribution as discussed in Section~\ref{sec:5.2},  Assumption~\ref{ass:phi} on the boundedness of $\phi$ is not satisfied, but the ODE still admits a closed-form solution.
Moreover, the two conditions
\[
(1+\xi_0)M(\theta_H - \theta_L) < 1 \quad \text{and} \quad (1+\xi_0) e^{-K(\theta_H-\theta_L)} - 1 \ge 0
\]
imply that we must require
\[
\theta_H - \theta_L \le \min \Bigg( \frac{1}{M(1+\xi_0)}, \frac{\ln(1+\xi_0)}{K} \Bigg),
\]
which means that the uncertainty set $[\theta_L, \theta_H]$ cannot be too large; otherwise, a solution $\xi$ that fully separates all agent types may not exist.
Finally, even if the condition $(1+\xi_0)M(\theta_H - \theta_L) < 1$ is not satisfied, a solution to \eqref{ODE} still exists in $\mathcal{A}_\xi$ by Schauder's fixed-point theorem, although uniqueness is no longer guaranteed.

The following theorem summarizes the optimal loading factor, optimal menu of
contracts with risk-type uncertainty, the insurer's MV utility from offering the optimal menu, and verifies the truth-telling constraint.
\begin{theorem}\label{thm:main}
Suppose that the coverage takes the excess-of-loss form given in \eqref{eq:stop_loss}.
Under Assumptions~\ref{ass:FOD}–\ref{ass:phi} and the conditions of Lemma~\ref{thm:existence-uniqueness-final}, the following results hold.

 (1) The optimal loading factor $\hat \xi(\theta,C^*)$ to  Problem \ref{prob in theta}
 subject to \eqref{ODE} is characterized as the solution to 
\begin{equation}\label{eq:xi_integral}
\hat \xi(\theta, C)
=
C
\exp\!\left(
-\int_{\theta_L}^{\theta}
\phi\!\left(z,\hat\xi(z,C)\right)\,\mathrm d z
\right)
-1.
\end{equation}
 The constant $C^*$ is chosen to maximize
$
H(C)
$
over 
$
\mathcal C
:=
\left\{
C>0:
\hat\xi(\theta,C)\ge0
\ \text{for all } \theta\in[\theta_L,\theta_H]
\right\},
$
i.e.,
$
C^*:=\arg\max_{C\in\mathcal C} H(C),
$
where
\begin{equation}\label{H}
H(C)=
\int_{\theta_{L}}^{\theta_{H}}
\int_{0}^{\infty}
\left(
\hat \xi(\theta,C)
\left(y-\frac{\hat \xi(\theta,C)}{\gamma(\theta)}\right)_+
-\frac{\gamma_I}{2}
\left(y-\frac{\hat\xi(\theta,C)}{\gamma(\theta)}\right)_+^2
\right)
f(y;\theta)\,
\mathrm d y\,\mathrm d G(\theta).
\end{equation}

(2) An\ optimal menu of contracts with loss coverage $\hat{l}(\theta):=\hat
{l}(y;\theta)$ and premium rate $\hat{p}(\theta):=\hat{p}(\hat{l}%
(\theta);\theta)$ for $\theta\in\lbrack\theta_{L},\theta_{H}]$ is given by
\begin{equation}
\left\{
\begin{array}
[c]{l}%
\hat{l}(\theta)=\left(y-\frac{\hat \xi(\theta)}{\gamma(\theta)}\right)_+,\\
\hat{p}(\theta)=(1+\hat \xi(\theta) )\lambda\mathbb  E\left[ \left(X-\frac{\hat \xi(\theta)}{\gamma(\theta)}\right)_+\right].
\end{array}
\right.  \label{opt menu theta}%
\end{equation}

(3) Correspondingly, the insurer's MV utility is given by
\begin{equation}
\int_{\theta_{L}}^{\theta_{H}}\mathbb{E}\left[  X_{I}^{\theta}(T,\hat
{l}(\theta);\theta)\right]  -\frac{\gamma_I}{2} \mathrm{Var}   [X_{I}(T,\hat{l}^{\theta}(\theta);\theta)]\mathrm{d}G(\theta)
= \frac
{\lambda T}{2} H(C^*). \label{VI theta}%
\end{equation}

(4) The\ optimal menu of contracts $(\hat
{l}(\theta),\hat{p}(\theta))_{\theta\in\lbrack\theta_{L},\theta_{H}]}$
satisfies the truth-telling constraint in the sense that
$$
\theta=\mathrm{argmax}_{\tilde{\theta}\in\lbrack\theta_{L},\theta_{H}%
]}\left\{  \mathbb{E}\left[  X_{C}^{\theta}(T,\hat{l}(\tilde{\theta}%
);\tilde{\theta})\right]  -\frac{\gamma(\theta)}{2}\mathrm{Var}\left[  X_{C}^{\theta
}(T,\hat{l}(\tilde{\theta});\tilde{\theta})\right]  \right\}  ,
$$
where $\{X_{C}^{\theta}(t,\hat{l}(\tilde{\theta});\tilde{\theta}%
)\}_{t\in\lbrack0,T]}$ is the surplus process defined in \eqref{XP} for a type-$\theta$ customer who
chooses the type-$\tilde{\theta}$ contract $(\hat{l}(\tilde{\theta}),\hat
{p}(\tilde{\theta}))$.
\end{theorem}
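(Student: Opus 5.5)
Following the four-step procedure of Section \ref{sec:3.3}, with Theorem \ref{thm:optimal_stop_loss} and Lemmas \ref{prop:xi-implicit}--\ref{thm:existence-uniqueness-final} as inputs, I treat parts (1)--(4) in turn.

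\textbf{Part (1).} Divide \eqref{ODE} by $\Psi$ (where $\Psi>0$; otherwise use the convention $0/0=0$). This gives $\hat\xi'(\theta)/(1+\hat\xi(\theta))=-\phi(\theta,\hat\xi(\theta))$. Integrating from $\theta_L$ yields $1+\hat\xi(\theta)=C\exp(-\int_{\theta_L}^\theta\phi(z,\hat\xi(z))\,\mathrm dz)$ with $C=1+\hat\xi(\theta_L)$. This is exactly \eqref{eq:xi_integral}. By Lemma \ref{thm:existence-uniqueness-final} with $\xi_0=C-1$, for each admissible $C$ there is a unique nonnegative solution. So the truth-telling first-order condition reduces the insurer's functional choice to the scalar $C\in\mathcal C$.

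Next I compute the insurer's objective along $\hat l(\theta)=(y-\hat\xi/\gamma(\theta))_+$, exactly as in the proof of Theorem \ref{thm IN gamma}(1). The expected profit is $\lambda T\hat\xi\,\mathbb E[(Y_\theta-\hat\xi/\gamma)_+]$. The variance of a compound Poisson sum is $\lambda T\,\mathbb E[(Y_\theta-\hat\xi/\gamma)_+^2]$. Integrating against $G$ gives a constant multiple of $H(C)$ from \eqref{H}, so maximizing over $C\in\mathcal C$ gives $C^*$. This also yields part (3), after matching the normalizing constant in \eqref{VI theta} with the paper's convention.

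\textbf{Part (2).} This follows by substituting $\hat\xi(\theta,C^*)$ into \eqref{eq:d_hat} and \eqref{premium2}.

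\textbf{Part (4).} This is the main obstacle: the first-order condition is only necessary, and $V_C^\theta(\tilde\theta)$ depends on $\tilde\theta$ also through the implicit retention $\hat d^\theta(\tilde\theta)$. I would use an envelope argument. Write $V_C^\theta(\tilde\theta)=x_C-\lambda T\min_{d\ge0}\Phi(d;\tilde\theta,\theta)$, where
\[
\Phi(d;\tilde\theta,\theta)=(1+\hat\xi(\tilde\theta))\int_d^\infty(1-F(y;\tilde\theta))\,\mathrm dy+\int_0^d\Big(y+\tfrac{\gamma(\theta)}{2}y^2\Big)f(y;\theta)\,\mathrm dy+\Big(d+\tfrac{\gamma(\theta)}{2}d^2\Big)(1-F(d;\theta)).
\]
By the envelope theorem,
\[
\partial_{\tilde\theta}V_C^\theta(\tilde\theta)\propto-\Big[\hat\xi'(\tilde\theta)\int_{d}^\infty(1-F(y;\tilde\theta))\,\mathrm dy-(1+\hat\xi(\tilde\theta))\int_d^\infty\partial_\theta F(y;\tilde\theta)\,\mathrm dy\Big],\qquad d=\hat d^\theta(\tilde\theta).
\]
Using the ODE at $\tilde\theta$, this becomes
\[
(1+\hat\xi(\tilde\theta))\left[\phi(\tilde\theta,\hat\xi(\tilde\theta))\int_d^\infty(1-F(y;\tilde\theta))\,\mathrm dy+\int_d^\infty\partial_\theta F(y;\tilde\theta)\,\mathrm dy\right].
\]
This expression vanishes when $d=\hat\xi(\tilde\theta)/\gamma(\tilde\theta)$. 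Its sign is therefore governed by whether the tail ratio $\int_d^\infty(-\partial_\theta F)/\int_d^\infty(1-F)$ at $\tilde\theta$ exceeds $\phi$. Showing this gives a positive derivative for $\tilde\theta<\theta$ and a negative one for $\tilde\theta>\theta$ needs two ingredients: a monotonicity of $d\mapsto\phi$-type tail ratios, and the ordering of $\hat d^\theta(\tilde\theta)$ relative to $\hat\xi(\tilde\theta)/\gamma(\tilde\theta)$ according to the sign of $\tilde\theta-\theta$. The latter follows from the fixed-point equation of Theorem \ref{thm:optimal_stop_loss} together with Assumption \ref{ass:FOD}. I expect this single-crossing verification to need an extra monotone-ratio condition. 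I would first try to derive it from Assumption \ref{ass:FOD} and the boundedness in Assumption \ref{ass:phi}, falling back to checking it for the examples in Example \ref{special_dist}.
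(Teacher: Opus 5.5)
\textbf{Parts (1)--(3).} These are fine and follow the paper. One correction: your computation actually gives $x_I+\lambda T\,H(C^*)$, not $\frac{\lambda T}{2}H(C^*)$. Say so, rather than appealing to a ``convention''.

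\textbf{Part (4): the gap.} The proposal does not prove Part (4). Your envelope computation is correct: it shows that $\partial_{\tilde\theta}V_C^\theta(\tilde\theta)$ has the sign of $r(d_0;\tilde\theta)-r(\hat d^\theta(\tilde\theta);\tilde\theta)$, where
\[
r(d;\tilde\theta):=\frac{\int_d^\infty\bigl(-\partial_\theta F(y;\tilde\theta)\bigr)\,\mathrm dy}{\int_d^\infty\bigl(1-F(y;\tilde\theta)\bigr)\,\mathrm dy},\qquad d_0:=\frac{\hat\xi(\tilde\theta)}{\gamma(\tilde\theta)}.
\]
To conclude, you would need two facts, and you establish neither:
\begin{itemize}
\item[(a)] $d\mapsto r(d;\tilde\theta)$ is increasing. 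This is a hazard-rate-type ordering that Assumptions~\ref{ass:FOD}--\ref{ass:phi} do not provide.
\item[(b)] $\hat d^\theta(\tilde\theta)\lessgtr d_0$ according as $\tilde\theta\lessgtr\theta$.
\end{itemize}
Step (b) is delicate for three reasons.
\begin{itemize}
\item The correct first-order condition from $\partial_d\Phi=0$ is $(1+\gamma(\theta)d)(1-F(d;\theta))=(1+\hat\xi(\tilde\theta))(1-F(d;\tilde\theta))$. The survival ratio displayed in Theorem~\ref{thm:optimal_stop_loss} is upside down, so relying on it would reverse the ordering.
\item $\partial_d\Phi(d_0;\tilde\theta,\theta)$ has ambiguous sign when $\gamma(\cdot)$ is decreasing.
\item $\Phi(\cdot;\tilde\theta,\theta)$ need not be convex. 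For exponential losses with $\tilde\theta>\theta$ it eventually decreases again in $d$, so the minimizer can jump to the no-coverage corner. That also breaks the smooth envelope step.
\end{itemize}
So the single-crossing step you flag as needing an extra condition is exactly where your argument stops.

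\textbf{The missing idea.} Part (4), as stated, concerns a type-$\theta$ customer who takes the menu contract $(\hat l(\tilde\theta),\hat p(\tilde\theta))$ as it stands, with the deductible fixed; the customer does not re-optimize $d$. This is the route the paper takes, and it needs neither the envelope theorem nor a tail-ratio condition.

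Take $d(\tilde\theta)=\hat\xi(\tilde\theta)/\gamma(\tilde\theta)$. The paper's write-up puts $\gamma(\theta)$ in the deductible, which agrees with the menu only for constant $\gamma$. Write
\[
J^\theta(\tilde\theta)=x_C-\lambda T\bigl[(1+\hat\xi(\tilde\theta))\Psi(\tilde\theta,\hat\xi(\tilde\theta))+R_\theta(d(\tilde\theta))\bigr],
\]
where $R_\theta$ collects the last two terms of your $\Phi$, so that $R_\theta'(d)=(1+\gamma(\theta)d)(1-F(d;\theta))$.

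Now differentiate directly. The combination $\hat\xi'\Psi+(1+\hat\xi)\Psi_\theta$, evaluated at $\tilde\theta$, vanishes identically by \eqref{ODE}. What remains is
\[
\frac{\mathrm dJ^\theta(\tilde\theta)}{\mathrm d\tilde\theta}=\lambda T\,d'(\tilde\theta)\Bigl[\bigl(1+\hat\xi(\tilde\theta)\bigr)\bigl(1-F(d;\tilde\theta)\bigr)-\bigl(1+\gamma(\theta)d\bigr)\bigl(1-F(d;\theta)\bigr)\Bigr],\qquad d=d(\tilde\theta).
\]
\begin{itemize}
\item For constant $\gamma$, this equals $\lambda T(1+\hat\xi(\tilde\theta))\frac{\hat\xi'(\tilde\theta)}{\gamma}\bigl(F(d;\theta)-F(d;\tilde\theta)\bigr)$. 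Up to a positive factor, this is the expression in the paper. Since $\hat\xi'<0$ (from \eqref{ODE} and $\phi>0$), Assumption~\ref{ass:FOD} gives a positive derivative for $\tilde\theta<\theta$ and a negative one for $\tilde\theta>\theta$, globally.
\item For nondecreasing $\gamma(\cdot)$, one has $d'<0$, and the factor $\gamma(\theta)/\gamma(\tilde\theta)$ hidden in $\gamma(\theta)d$ only reinforces the sign.
\end{itemize}

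Your re-optimized version is a stronger property. It implies this one, since $J^\theta(\tilde\theta)\le V_C^\theta(\tilde\theta)$ with equality at $\tilde\theta=\theta$. That is why it demands structure the theorem does not assume.
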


\begin{proof}
(1)
By \eqref{XI} and \eqref{premium2}, the insurer's surplus follows
the dynamics
\begin{align*}
\mathrm{d}X_{I}^{\theta}(t,\hat{l}^{\theta}(\theta);\theta)  &  =\lambda
\int_{0}^{\infty}\left( (1+\xi(\theta))\left(y-\frac{\xi(\theta)}{\gamma(\theta)}\right)_+\right)  f(y;\theta
)\mathrm{d}y\mathrm{d}t-\int_{0}^{\infty} \left(y-\frac{\xi(\theta)}{\gamma(\theta)}\right)_+ N^{\theta}(\mathrm{d}t,\mathrm{d}y).
\end{align*}
 It follows that%
\begin{align*}
&  \int_{\theta_{L}}^{\theta_{H}}\mathbb{E}\left[  X_{I}^{\theta}(T,\hat
{l}^{\theta}(\theta);\theta)\right]   -\frac{\gamma_I}{2} \mathrm{Var}   [X_{I}(T,\hat{l}^{\theta}(\theta);\theta)]\mathrm{d}G(\theta)\nonumber\\
&  =\int_{\theta_{L}}^{\theta_{H}}\int_{0}^{\infty}  \left(\xi(\theta)\left(y-\frac{\xi(\theta)}{\gamma(\theta)}\right)_+  -\frac{\gamma_I}{2} \left(y-\frac{\xi(\theta)}{\gamma(\theta)}\right)_+^2 \right)f(y;\theta)\mathrm{d}y \mathrm{d}G(\theta)\label{midVI}.
\end{align*} By Lemma \ref{thm:existence-uniqueness-final}, we have \begin{equation*}
\hat\xi(\theta, C^*) = C^* \exp\left( \int_{\theta_L}^\theta - \phi(z, \hat\xi(z, C^*)) \, \mathrm d z \right) - 1,
\end{equation*}  where $C^{\ast}:=\operatorname{argmax}_{C\geq 0}H(C)$ and $H$ is defined by \eqref{H}.
 
(2) This is an immediate result of  (\ref{premium2}) and (\ref{eq:d_hat}).

(3) It follows directly from  \begin{equation*}\label{eq:V0}
\sup_{\xi\in\mathbb{A}_{\xi}}\int_{\theta_{L}}^{\theta_{H}}\mathbb{E}\left[
X_{I}^{\theta}(T,\hat{l}^{\theta}(\theta);\theta)\right]  -\frac{\gamma_I}{2} \mathrm{Var}   [X_{I}(T,\hat{l}^{\theta}(\theta);\theta)]\mathrm{d}G(\theta)
= \frac
{\lambda T}{2} H(C^*).\end{equation*} 

(4)  If a type-$\theta$ customer selects the contract for type $\tilde{\theta}%
$, i.e., $(\hat{l}(\tilde{\theta}),\hat{p}(\tilde{\theta}))$, the customer's
surplus evolves as follows:
$$
\mathrm{d}X_{C}^{\theta}(t,\hat{l}(\tilde{\theta});\tilde{\theta})%
=-\lambda\int_{0}^{\infty}
(1+\hat\xi(\tilde{\theta}))\hat l(t,y;\tilde \theta)  f(y;\tilde{\theta
})\mathrm{d}y\mathrm{d}t-\int_{0}^{\infty}(y-\hat l(t,y;\tilde \theta))N^{\theta}(\mathrm{d}%
t,\mathrm{d}y).
$$
Correspondingly,  we have
\begin{align*}
J^{\theta}(\tilde{\theta})  &  =x_{0}-\lambda T (1+\hat \xi(\tilde \theta)) \int_{0}^{\infty} 
\hat{l}(t,y;\tilde{\theta}) f(y;\tilde{\theta})\mathrm{d}y\\
&  -\lambda T\int_{0}^{\infty}\left[  \left(  y-\hat{l}(t,y;\tilde{\theta
})\right)  +\frac{\gamma(\theta)}{2}\left(  y-\hat{l}(t,y;\tilde{\theta})\right)
^{2}\right]  f(y;\theta)\mathrm{d}y.
\end{align*}
Substituting $\hat{l}(\tilde{\theta})=(y- \frac{\hat \xi(\tilde \theta)}{\gamma(\theta)})_+$ into the above equation yields
$$
\begin{aligned}
J^{\theta}(\tilde{\theta}) &= x_{0} - \lambda T (1 + \hat{\xi}(\tilde{\theta})) \int_{\frac{\hat{\xi}(\tilde{\theta})}{\gamma(\theta)}}^{\infty} \left( y - \frac{\hat{\xi}(\tilde{\theta})}{\gamma(\theta)} \right) f(y; \tilde{\theta}) \, \mathrm{d}y \\
&\quad - \lambda T \left[ \int_{0}^{\frac{\hat{\xi}(\tilde{\theta})}{\gamma(\theta)}} \left( y + \frac{\gamma(\theta)}{2} y^2 \right) f(y; \theta) \, \mathrm{d}y + \frac{\hat{\xi}(\tilde{\theta})}{\gamma(\theta)} \left( 1 + \frac{\hat{\xi}(\tilde{\theta})}{2 } \right) \left( 1 - F\left( \frac{\hat{\xi}(\tilde{\theta})}{\gamma(\theta)}; \theta \right) \right) \right].
\end{aligned}
$$
Differentiating $J^{\theta}(\tilde{\theta})$ with respect to $\tilde{\theta}$ yields
$$
\begin{aligned}
\frac{\mathrm dJ^{\theta}(\tilde{\theta})}{\mathrm d\tilde{\theta}} &= - \lambda T \Bigg[ \hat{\xi}'(\tilde{\theta}) \int_{\frac{\hat{\xi}(\tilde{\theta})}{\gamma(\theta)}}^{\infty} \left( y - \frac{\hat{\xi}(\tilde{\theta})}{\gamma(\theta)} \right) f(y; \tilde{\theta}) \, \mathrm{d}y \\
&\quad - (1 + \hat{\xi}(\tilde{\theta})) \left( \frac{\hat{\xi}'(\tilde{\theta})}{\gamma(\theta)} \left( 1 - F\left( \frac{\hat{\xi}(\tilde{\theta})}{\gamma(\theta)}; \tilde{\theta} \right) \right) - \int_{\frac{\hat{\xi}(\tilde{\theta})}{\gamma(\theta)}}^{\infty} \left( y - \frac{\hat{\xi}(\tilde{\theta})}{\gamma(\theta)} \right) \frac{\partial f(y; \tilde{\theta})}{\partial \tilde{\theta}} \, \mathrm{d}y \right) \\
&\quad + \frac{\hat{\xi}'(\tilde{\theta})}{\gamma(\theta)} \left( 1 + {\hat{\xi}(\tilde{\theta})} \right) \left( 1 - F\left( \frac{\hat{\xi}(\tilde{\theta})}{\gamma(\theta)}; \theta \right) \right) \Bigg].
\end{aligned}
$$
Since we have
$$  \int_{\frac{\xi(\theta)}{\gamma(\theta)}}^{\infty} \left( \xi'({\theta})  y  f(y ; {\theta})  - \xi({\theta}) \frac{\xi(\theta)'}{\gamma(\theta)} f(y ; {\theta}) + (1+\xi({\theta})) \left(y - \frac{\xi(\theta)}{\gamma(\theta)} \right)\  \frac{\partial f(y; {\theta})}{\partial {\theta}} \right) \mathrm d y  =0,$$
it  then gives
$$ \begin{aligned} \frac{dJ^{\theta}(\tilde{\theta})}{d\tilde{\theta}}&=   \frac{\hat{\xi}'(\tilde{\theta})}{\gamma(\theta)} \left( 1 - F\left( \frac{\hat{\xi}(\tilde{\theta})}{\gamma(\theta)}; \tilde{\theta} \right) \right) -  \frac{\hat{\xi}'(\tilde{\theta})}{\gamma(\theta)} \left( 1 - F\left( \frac{\hat{\xi}(\tilde{\theta})}{\gamma};  {\theta} \right) \right)\\&= \frac{\hat{\xi}'(\tilde{\theta})}{\gamma(\theta)} \left(F\left( \frac{\hat{\xi}(\tilde{\theta})}{\gamma(\theta)};  {\theta} \right)-F\left( \frac{\hat{\xi}(\tilde{\theta})}{\gamma(\theta) };  {\tilde\theta} \right)\right)\end{aligned}$$
By Proposition \ref{prop NL theta}(1), we have $\xi(\theta)'<0$. Together with $\partial_\theta F(y,\theta) <0$ (Assumption \ref{ass:FOD}), we have  $$
\frac{\partial J^{\theta}(\hat{l}(\tilde{\theta});\tilde{\theta})}%
{\partial\tilde{\theta}}>(\text{reps.}<)0\text{,\quad for }\tilde{\theta
}<(\text{resp.}>)\theta.
$$
As such,
$$
\theta\in\operatorname{argmax}_{\tilde{\theta}\in\lbrack\theta_{L},\theta
_{H}]}J^{\theta}(\tilde{\theta}).
$$
This completes the proof.
\end{proof}

Theorem~\ref{thm:main}(1) shows that, when the uncertainty concerns customers’ risk types, the optimal risk loading follows the nonlinear pricing rule asdescribed in \eqref{eq:xi_integral}. Moreover,  Theorem~\ref{thm:main}(4) confirms that the optimal menu $(\hat{l}(\theta),\hat{p}(\theta))_{\theta \in [\theta_L, \theta_H]}$ satisfies the truth-telling constraint, meaning that a customer of type $\theta$ achieves no higher utility by selecting a contract $(\hat{l}(\tilde{\theta}),\hat{p}(\tilde{\theta}))$ intended for a different type $\tilde{\theta} \neq \theta$. Consequently, customers' hidden private information regarding their risk types is revealed through the self-selection process.
In the absence of information asymmetry,  the menu of contracts collapses to a single contract,  and is shown in Corollary~\ref{cor: no uncertainty theta}.

\begin{remark}\label{remark:4.3}
The optimal loading factor $\hat\xi(\theta, C^*)$ in Theorem \ref{thm:main} is implicitly defined by the integral equation \eqref{eq:xi_integral} together with the optimization problem $C^* = \operatorname{argmax}_{C\in\mathcal C} H(C)$. Analytical solutions are generally intractable for most loss distributions $Y$, but the numerical procedure can be employed:
 (i) Solve \eqref{eq:xi_integral} using standard ODE solvers (e.g., MATLAB's \texttt{ode45});
(ii) Determine the optimal parameter $C^*$ by maximizing $H(C)$ via one-dimensional numerical optimization (e.g., MATLAB's \texttt{fminbnd}).
The optimal contract terms in \eqref{opt menu theta} are then obtained by substituting $\xi(\theta,C^*)$. This procedure provides an efficient and accurate implementation of the optimal solution using standard computational tools; see Section \ref{sec:5.2} when $Y$ follows the Pareto distribution. 
\end{remark}

The following proposition summarizes the key monotonicity properties of the optimal contract menu in Theorem~\ref{thm:main}.

\begin{proposition}
\label{prop NL theta} 
(1) $\hat{\xi}(\theta)$ is decreasing in $\theta$.

(2) If $\gamma(\theta)$ is increasing in $\theta$, then 
$\hat{l}(\theta)$ is increasing in $\theta$.

(3) If $\gamma(\theta)$ is increasing in $\theta$, then 
$\hat{p}(\theta)$ is increasing in $\theta$.
\end{proposition}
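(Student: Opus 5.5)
Part (1) follows from the ODE of Lemma~\ref{prop:xi-implicit}, written as
\[
\hat\xi'(\theta)=-(1+\hat\xi(\theta))\,\phi(\theta,\hat\xi(\theta)),
\]
or equivalently from the integral representation \eqref{eq:xi_integral}. We need $\phi>0$ wherever $\Psi>0$. Assumption~\ref{ass:FOD} gives $\partial_\theta F<0$, so $\Psi_\theta(\theta,\xi)=-\int_{\xi/\gamma(\theta)}^\infty\partial_\theta F\,\mathrm dy>0$. Also $\Psi>0$ as long as the retention $\xi/\gamma(\theta)$ lies below the top of the support. Since $\hat\xi\ge 0$, the factor $1+\hat\xi$ is positive, and therefore $\hat\xi'<0$. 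On the integral form, $\hat\xi(\theta,C)+1=C\exp(-\int_{\theta_L}^\theta\phi)$ is a positive constant times the exponential of minus an increasing integral, so it is decreasing. Where $\Psi=0$ (the $0/0$ convention), we only get $\phi\ge 0$, i.e.\ weak monotonicity. I would state (1) in that weak sense, or note that strictness holds whenever the deductible is interior.

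For (2), write $\hat l(\theta)=(y-\hat d(\theta))_+$ with $\hat d(\theta)=\hat\xi(\theta)/\gamma(\theta)$. By (1) the numerator is nonnegative and nonincreasing, and by hypothesis the positive denominator is nondecreasing. Hence the ratio is nonincreasing: for $\theta_1<\theta_2$,
\[
\frac{\hat\xi(\theta_2)}{\gamma(\theta_2)}\le\frac{\hat\xi(\theta_1)}{\gamma(\theta_2)}\le\frac{\hat\xi(\theta_1)}{\gamma(\theta_1)} .
\]
The map $d\mapsto (y-d)_+$ is nonincreasing, so $\hat l(y;\theta)$ is nondecreasing in $\theta$ pointwise in $y$.

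Part (3) is the main obstacle. The premium $\hat p(\theta)=\lambda(1+\hat\xi(\theta))\,\Psi(\theta,\hat\xi(\theta))$ has a decreasing factor $1+\hat\xi$. The other factor is $\Psi(\theta,\hat\xi(\theta))=\mathbb E[(Y_\theta-\hat d(\theta))_+]=\int_{\hat d(\theta)}^\infty(1-F(y;\theta))\,\mathrm dy$, and it increases for two reasons: $\hat d$ decreases by (2), and $1-F(\cdot;\theta)$ increases in $\theta$ by Assumption~\ref{ass:FOD}. So comparing the two factors separately is inconclusive. The plan is to differentiate $\log\hat p$ and use the ODE to cancel the loading effect:
\[
\frac{\mathrm d}{\mathrm d\theta}\log\hat p=\frac{\hat\xi'}{1+\hat\xi}+\frac{\Psi_\theta+\partial_\xi\Psi\,\hat\xi'}{\Psi}+\frac{(\partial_\theta \hat d)\text{-terms}}{\Psi}.
\]
By Lemma~\ref{prop:xi-implicit}, $\hat\xi'/(1+\hat\xi)=-\Psi_\theta/\Psi$, so the first term cancels the explicit $\theta$-derivative of $\Psi$ at fixed retention. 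What remains is the effect of moving the retention,
\[
-(1-F(\hat d;\theta))\,\frac{\mathrm d}{\mathrm d\theta}\Big(\frac{\hat\xi}{\gamma}\Big)\ge 0 ,
\]
which is nonnegative because $1-F\ge 0$ and the retention is nonincreasing by (2). Carried out carefully (treating $\Psi$ as a function of the retention, so that the $\gamma'$ dependence is accounted for once), this gives $\hat p'\ge 0$. This cancellation is the crux, and I expect the bookkeeping of the $\gamma(\theta)$ dependence inside $\Psi$ versus $\Psi_\theta$ in \eqref{eq:Psi} to be the delicate point. I would handle it by rewriting everything in terms of $D(\theta):=\hat d(\theta)$ and $\int_{D}^\infty(1-F(y;\theta))\,\mathrm dy$ before differentiating.
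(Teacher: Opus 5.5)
Your proposal is correct and follows essentially the same route as the paper. Part (1) comes from the sign of $\phi$ in $\hat\xi'=-(1+\hat\xi)\phi$, part (2) from the retention $\hat\xi/\gamma$ being nonincreasing, and part (3) from differentiating $\hat p=\lambda(1+\hat\xi)\Psi(\theta,\hat\xi(\theta))$ and using \eqref{ODE} to cancel all terms except $-\lambda(1+\hat\xi)\bigl(1-F(\hat\xi/\gamma;\theta)\bigr)\,(\hat\xi/\gamma)'\ge 0$.

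Your reading of $\Psi_\theta$ as the derivative at fixed retention is exactly what makes that cancellation work; the paper's display writes $\hat\xi'\Psi_\theta$ where $\hat\xi'\Psi$ is meant. Your weak-monotonicity caveat for the case $\Psi=0$ is a reasonable refinement of the paper's claim that $\phi>0$.
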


\begin{proof}

(1) Under Assumption~\ref{ass:FOD}, we have $\phi(\theta,\xi)>0$. 
From \eqref{ODE},
\[
\hat{\xi}'(\theta)
=
-(1+\hat{\xi}(\theta))
\phi(\theta,\xi).
\]
Hence $\hat{\xi}(\theta)$ is decreasing in $\theta$.

\medskip

(2) 
If $\gamma(\theta)$ is increasing in $\theta$, then $\hat{d}^\theta(\tilde{\theta})=\hat{\xi}(\theta)/\gamma(\theta)$ is decreasing in $\theta$, and consequently the optimal coverage $\hat l^\theta(\tilde{\theta})$ is increasing in $\theta$.

\medskip

(3) Differentiating $\hat p(\theta)$ gives
\[
\begin{aligned}
\hat p'(\theta)
=&\;\lambda \hat{\xi}'(\theta)\Psi_\theta(\theta,\hat{\xi}(\theta))
+\lambda (1+\hat{\xi}(\theta))\Psi_\theta(\theta,\hat{\xi}(\theta))\\
&-\lambda(1+\hat{\xi}(\theta))
\frac{\hat{\xi}'(\theta)\gamma(\theta)-\hat{\xi}(\theta)\gamma'(\theta)}
{\gamma(\theta)^2}
\left(1-F\!\left(\frac{\hat{\xi}(\theta)}{\gamma(\theta)};\theta\right)\right).
\end{aligned}
\]
Using \eqref{ODE}, the first two terms cancel, which yields
\[
\hat p'(\theta)
=
-\lambda(1+\hat{\xi}(\theta))
\frac{\hat{\xi}'(\theta)\gamma(\theta)-\hat{\xi}(\theta)\gamma'(\theta)}
{\gamma(\theta)^2}
\left(1-F\!\left(\frac{\hat{\xi}(\theta)}{\gamma(\theta)};\theta\right)\right)>0.
\]
Thus,  $\hat p(\theta)$ is increasing in $\theta$.
\end{proof}

Proposition~\ref{prop NL theta}(1) shows that the optimal menu assigns lower risk loadings to higher-risk customers, thereby inducing truthful revelation of types. Importantly, this monotonicity of $\hat{\xi}(\theta)$ does not depend on $\gamma(\theta)$. In fact, the function $\gamma(\theta)$ does not affect the premium schedule. As a result, whether $\gamma$ is constant or depends on $\theta$, the resulting ODE characterizing $\hat{\xi}(\theta)$ remains unchanged, and hence the functional form of the optimal contract is invariant to the specification of $\gamma(\theta)$.
The monotonicity of $\hat{\xi}(\theta)$ contrasts with the case of risk-attitude uncertainty, 
for which the optimal loading remains constant. This difference can be understood as follows. 
Risk-attitude uncertainty affects only the willingness to insure but does not influence expected losses, thereby providing limited informational content for the insurer and not justifying differentiated pricing. By contrast, uncertainty about risk type is economically meaningful, as it directly determines expected claims. This informational relevance gives rise to nonlinear pricing: high-risk customers receive lower marginal loadings, whereas low-risk customers face higher ones. Such a screening mechanism mitigates adverse 
selection and enables the insurer to efficiently extract information about 
risk uncertainty.

Propositions \ref{prop NL theta}(2) and (3) indicate that both the premium 
rate $\hat{p}(\theta)$ and the coverage level $\hat{l}(\theta)$ increase with 
$\theta$, provided that $\gamma(\theta)$ is increasing in $\theta$. This pattern 
reflects the economically intuitive outcome that higher-risk and more risk-averse 
individuals purchase greater coverage, leading to higher total premiums. 
While the  loading per unit of coverage decreases with risk type, 
the larger volume of coverage ensures that the overall premium rises with $\theta$.
If instead $\gamma(\theta)$ decreases in $\theta$, implying that higher-risk individuals are less risk-averse, the demand for coverage reflects a trade-off between increasing risk exposure and decreasing risk aversion. In this case, the monotonicity of the optimal coverage level $\hat{l}(\theta)$ may no longer hold. Nevertheless, the alignment between coverage and total premium is preserved: contracts with higher coverage are associated with higher total premiums, whereas contracts with lower coverage correspond to lower premiums.

\section{Numerical examples}\label{sec:5}
\subsection{Risk-attitude uncertainty}

In this section, we illustrate the theoretical results through numerical examples.   We first consider the case where the insurer faces uncertainty regarding the customer’s risk attitude $\gamma$. 
The main result is presented in  Theorem \ref{thm IN gamma}, which shows that  the optimal loading factor $\hat{\xi}$ is obtained by numerically solving
$$
\hat{\xi} = \arg\max_{\xi \ge 0} \int_{\gamma_L}^{\gamma_H} \left[ \xi \, \mathbb{E}\left( Y - \frac{\xi}{\gamma} \right)_+ - \frac{\gamma_I}{2} \, \mathbb{E}\left( Y - \frac{\xi}{\gamma} \right)_+^2 \right] \, \mathrm dG(\gamma),
$$
after which the optimal coverage and premium are given by
$$
\hat{l}(\gamma) = \left(y - \frac{\hat{\xi}}{\gamma}\right)_+, \quad
\hat{p}(\gamma) = (1+\hat{\xi})\lambda \, \mathbb{E}[\hat{l}(\gamma)].
$$
We  first assume that  $Y \sim \mathrm{exp}(1)$ with customer risk aversion $\gamma$ uniformly distributed over $[\gamma_L, \gamma_H] = [1, 9]$ or $[2, 8]$. The insurer’s risk aversion is fixed at $\gamma_I = 1$. For comparison, we also examine the case without information asymmetry  by fixing $\gamma = 5$, where the results are given by Corrollary \ref{cor: no uncertainty theta}.

\begin{figure}[htbp]
    \centering
    \begin{minipage}[b]{0.45\textwidth}
        \centering
        \includegraphics[width=\textwidth]{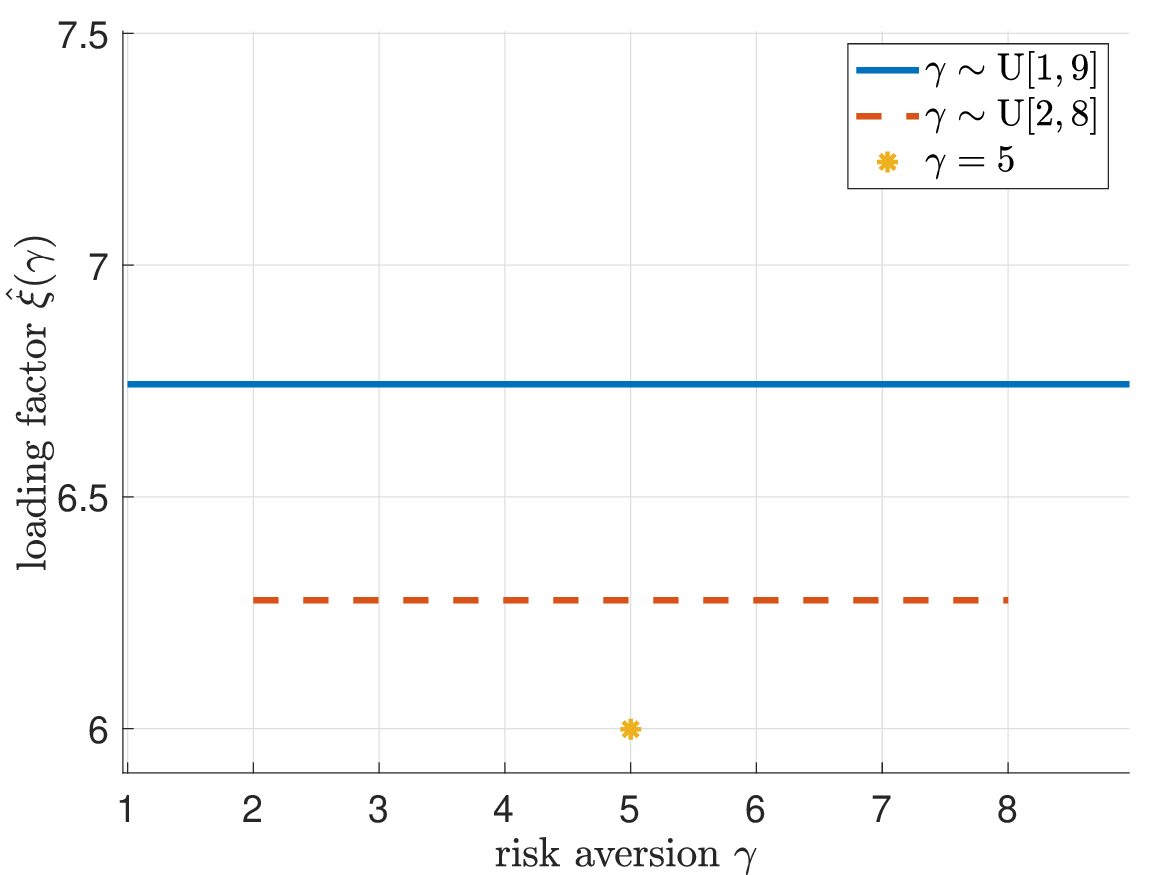}
        \caption*{(a)}
    \end{minipage}
    \hfill
    \begin{minipage}[b]{0.45\textwidth}
        \centering
        \includegraphics[width=\textwidth]{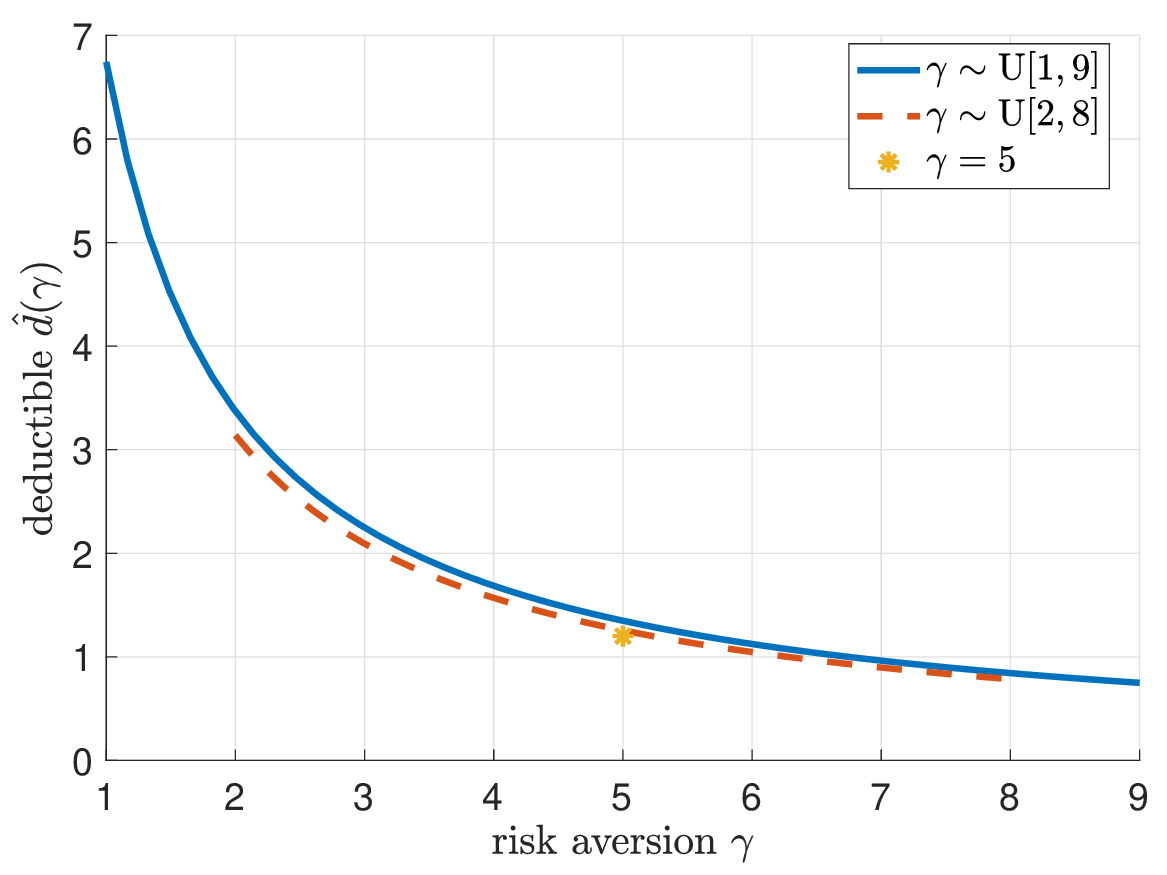}
        \caption*{(b)}
    \end{minipage}
    \hfill

    \caption{Risk loading and deductible functions under risk-attitude uncertainty with $Y \sim \mathrm{exp}(1)$ and  $\gamma$ following a uniform distribution.}
    \label{fig1}
\end{figure}

Figure \ref{fig1}(a) depicts the risk loading function $\hat{\xi}(\gamma)$. As predicted by Proposition \ref{prop opt gamma}(1), the risk loading in the optimal menu remains constant and independent of the level of risk aversion. Accordingly, the curves are flat, indicating that the insurer neither offers discounts nor imposes surcharges through the loading schedule. Moreover, the insurer sets a higher risk loading and offers a wider menu of contracts under greater uncertainty, as illustrated by the case $\gamma \sim \mathrm{U}[1,9]$ compared with $\gamma \sim \mathrm{U}[2,8]$. In addition, the loading under risk-attitude uncertainty is higher than that in the benchmark case without uncertainty, where $\gamma=5$. However, this relationship is not universal. For instance, in the case of uncertainty in risk types in Figure \ref{fig3}, the opposite pattern may arise, where a narrower range of types can lead to a higher level of risk loading.

Figure \ref{fig1}(b) depicts the deductible function $\hat{d}(\gamma)$. Since $\hat{\xi}(\gamma)$ is constant, the deductible $\hat{d}(\gamma)=\hat{\xi}(\gamma)/\gamma$ is therefore decreasing in $\gamma$. This implies that more risk-averse customers optimally retain a smaller share of risk. Furthermore, when the loading price is lower, customers with the same level of risk aversion optimally choose to select a lower deductible, and hence cede a larger portion of risk to the insurer.

Next, we  assume that $Y$ follows a Pareto distribution, i.e., $Y \sim \text{Pareto}(3,3)$. We further assume that $\gamma$ follows a truncated normal distribution supported on the interval $[1,9]$. In particular, we consider the cases $\gamma \sim \mathrm{TruncN}(2,1)$ and $\gamma \sim \mathrm{TruncN}(2,2)$, as well as a fixed value, $\gamma = 2$.  The numerical results are presented in Figure \ref{fig2}. 
   
\begin{figure}[htbp]
    \centering
    \begin{minipage}[b]{0.45\textwidth}
        \centering
        \includegraphics[width=\textwidth]{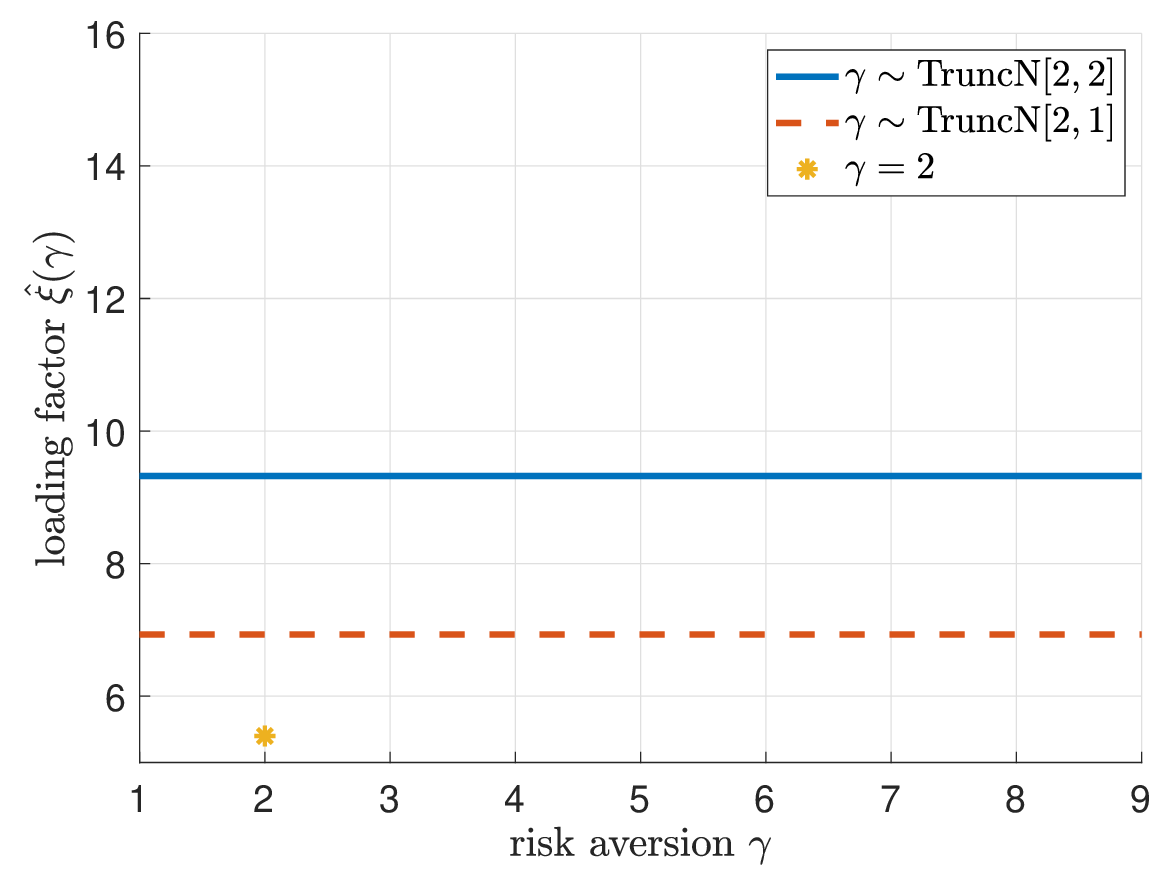}
        \caption*{(a)}
    \end{minipage}
    \hfill
    \begin{minipage}[b]{0.45\textwidth}
        \centering
        \includegraphics[width=\textwidth]{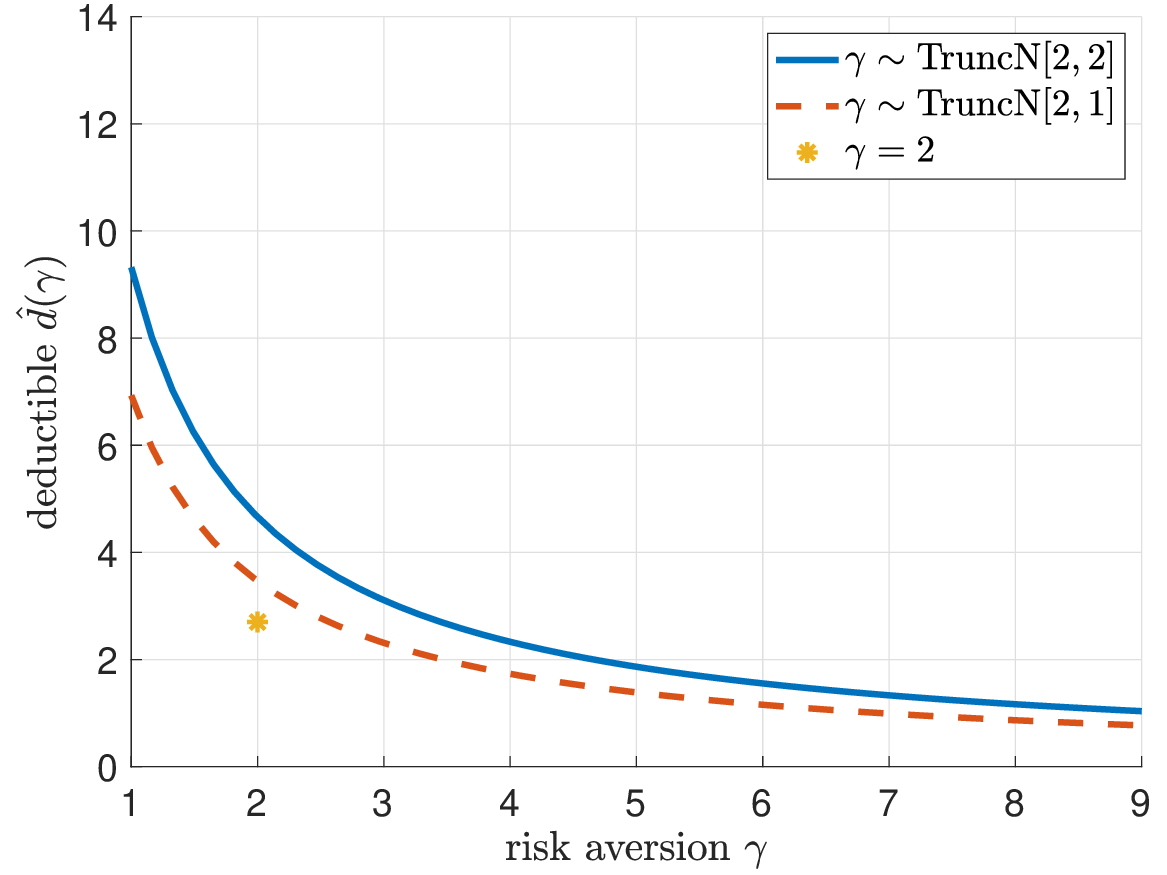}
        \caption*{(b)}
    \end{minipage}
    \hfill

    \caption{Risk loading and deductible functions under risk-attitude uncertainty with $Y \sim \text{Pareto}(3,3)$ and $\gamma$ following a  truncated normal distribution.}
    \label{fig2}
\end{figure}

We can see from Figure \ref{fig2} that the overall patterns remain broadly consistent with those obtained under the uniform distribution. In particular, holding the mean of $\gamma$ fixed, a larger variance in the distribution of risk aversion leads the insurer to set a higher optimal risk loading (Figure \ref{fig2}(a)). As a consequence, customers optimally choose lower insurance coverage (Figure \ref{fig2}(b)).

\subsection{Risk-type uncertainty}\label{sec:5.2}
In this section, we first consider the case where $\gamma=\gamma(\theta)$ is constant and analyze the scenario in which the insurer faces uncertainty about customers’ risk types. We  assume that the loss $Y$ follows an exponential distribution with mean $\theta$, that is, $
Y \sim \exp\!\left(\tfrac{1}{\theta}\right),  \theta \in [\theta_L, \theta_H].$ 
For this distribution, the auxiliary functions in \eqref{eq:Psi} are given by
$$
\Psi(\theta,\xi) = \theta \, e^{-\frac{\xi}{\gamma \theta}}, ~~
\Psi_\theta(\theta,\xi) = \Big(1 + \tfrac{\xi}{\gamma \theta}\Big) e^{-\frac{\xi}{\gamma \theta}}.
$$
Then the differential equation  \eqref{ODE} becomes
$$
\theta \, \hat\xi(\theta)' 
= -\big(1+\hat\xi(\theta)\big)\left(1+\tfrac{\hat\xi(\theta)}{\theta\gamma}\right),
$$
and admits a closed-form  general solution
$$\hat\xi(\theta,C^*) = 
\frac{1}
{1 + \gamma\theta - C^* \theta \exp\!\left(\tfrac{1}{\gamma\theta}\right)}-1,
$$
where the constant $C^*$ is determined by
$
C^* = \operatorname*{arg\,max}_{\underline{C} \leq C \leq \overline{C}} H(C),
$
with 
$$
\begin{aligned}
H(C) &= \int_{\theta_L}^{\theta_H} \int_{0}^{\infty}  
\Bigg[ 
\hat\xi(\theta,C^*) 
\left( y - \frac{\hat\xi(\theta,C^*) }{\gamma}\right)_{+}  - \frac{\gamma_I}{2} \left( y - \frac{\hat\xi(\theta,C^*) }{\gamma}\right)_{+}^{2}
\Bigg] f(y;\theta) \, \mathrm{d}y \, \mathrm{d}G(\theta).
\end{aligned}
$$
The feasible set of constants is bounded by
$$
\underline{C} := \gamma \, e^{-\frac{1}{\gamma \theta_{H}}}, 
\qquad  
\overline{C} := \frac{1+\gamma\theta_{L}}{\theta_{L}} \,
e^{-\frac{1}{\gamma \theta_{L}}},
$$
which guarantees that $\hat{\xi}(\theta) \geq 0$ for all $\theta \in [\theta_L,\theta_H]$.

Finally, the resulting optimal menu of contracts given by  Theorem \ref{thm:main}, characterized by loss coverage $\hat{l}(\theta)$ and premium $\hat{p}(\theta)$, takes the form
$$
\hat{l}(\theta) = \Big( y - \tfrac{\hat{\xi}(\theta)}{\gamma} \Big)_{+}, 
\qquad
\hat{p}(\theta) 
= \big( 1 + \hat{\xi}(\theta) \big)\lambda
\mathbb{E} \left[ \Big( Y - \tfrac{\hat{\xi}(\theta)}{\gamma} \Big)_{+} \right].
$$

\begin{remark}A non-negative solution for $\xi(\theta)$ exists on $[\theta_L, \theta_H]$ if and  only if $\overline{C} \ge \underline{C}$. Equivalently, this imposes a restriction on the parameter interval:
$$
\theta_H \le \frac{\theta_L}{1 - \gamma \theta_L \ln\left(\frac{1+\gamma \theta_L}{\gamma \theta_L}\right)}.
$$
If this condition is violated, i.e., $\overline{C} < \underline{C}$, then no admissible $C$ exists and the ODE has no non-negative solution.
It reflects a situation in which the insurer optimally chooses not to offer any contract, as no agreement can be reached between the insurer and the customers under such uncertainty. \end{remark}

In the following numerical illustration, we assume that the risk type $\theta$ is uniformly distributed over $[\theta_L,\theta_H]=[1,9]$ or $[2,8]$. The risk-aversion parameters of the customer and the insurer are fixed at $\gamma=5$ and $\gamma_I=1$, respectively. For comparison, we also consider the benchmark case without information asymmetry by fixing $\theta=5$, for which the results are characterized in Corollary \ref{cor: no uncertainty theta}.

\begin{figure}[htbp]
    \centering
    \begin{minipage}[b]{0.45\textwidth}
        \centering
        \includegraphics[width=\textwidth]{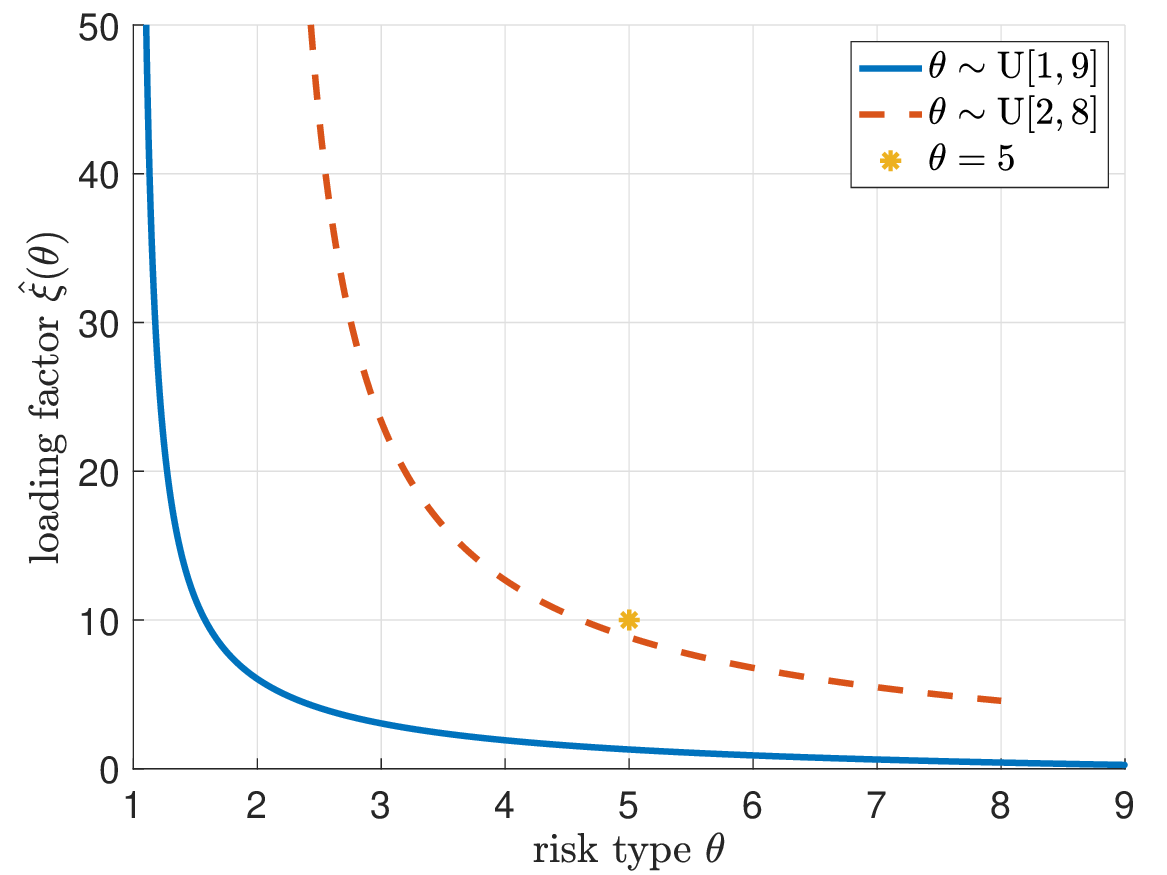}
        \caption*{(a)}
    \end{minipage}
    \hfill
    \begin{minipage}[b]{0.45\textwidth}
        \centering
        \includegraphics[width=\textwidth]{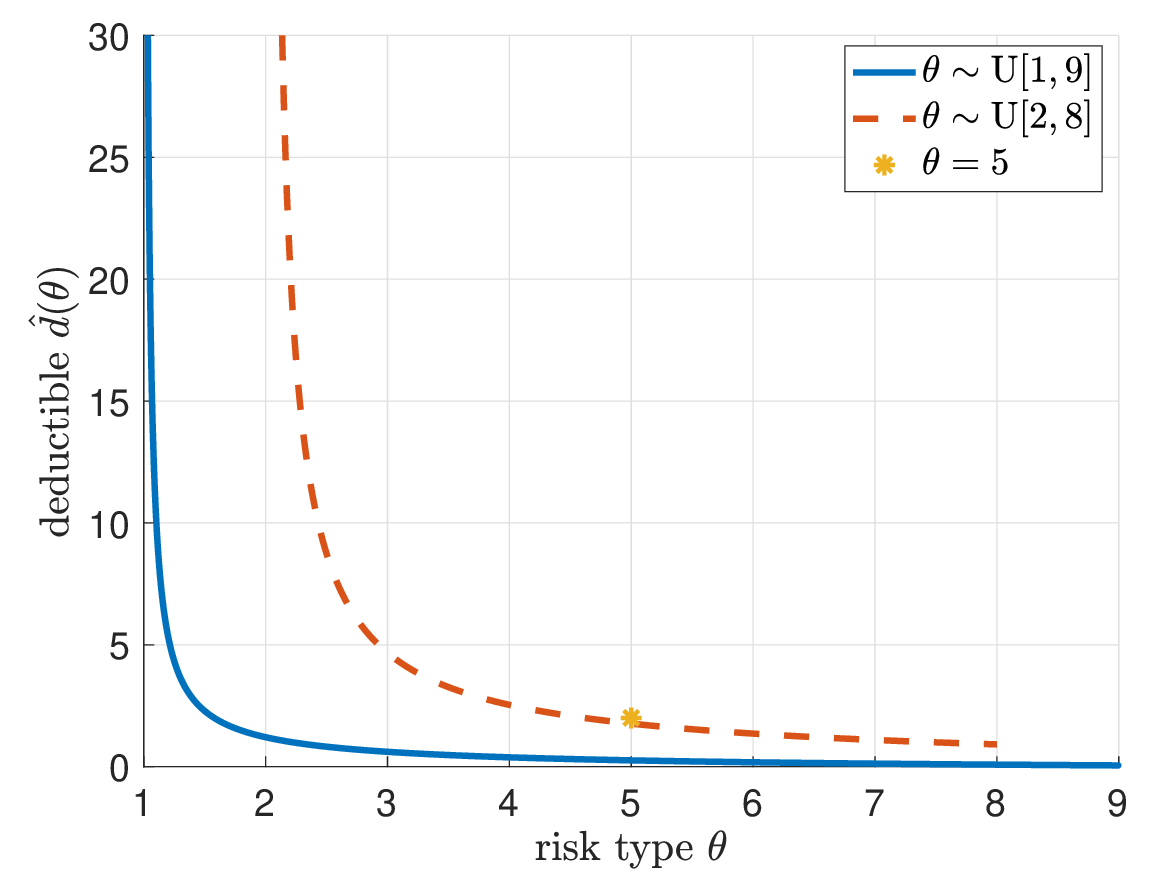}
        \caption*{(b)}
    \end{minipage}

    \caption{Risk loading and deductible functions under risk-type uncertainty with  $Y\sim \mathrm{exp}(\frac{1}{\theta})$ and $\theta$ following a uniform distribution.}
    \label{fig3}
    \end{figure}

\begin{figure}[htbp]
    \centering
    \begin{minipage}[b]{0.45\textwidth}
        \centering
        \includegraphics[width=\textwidth]{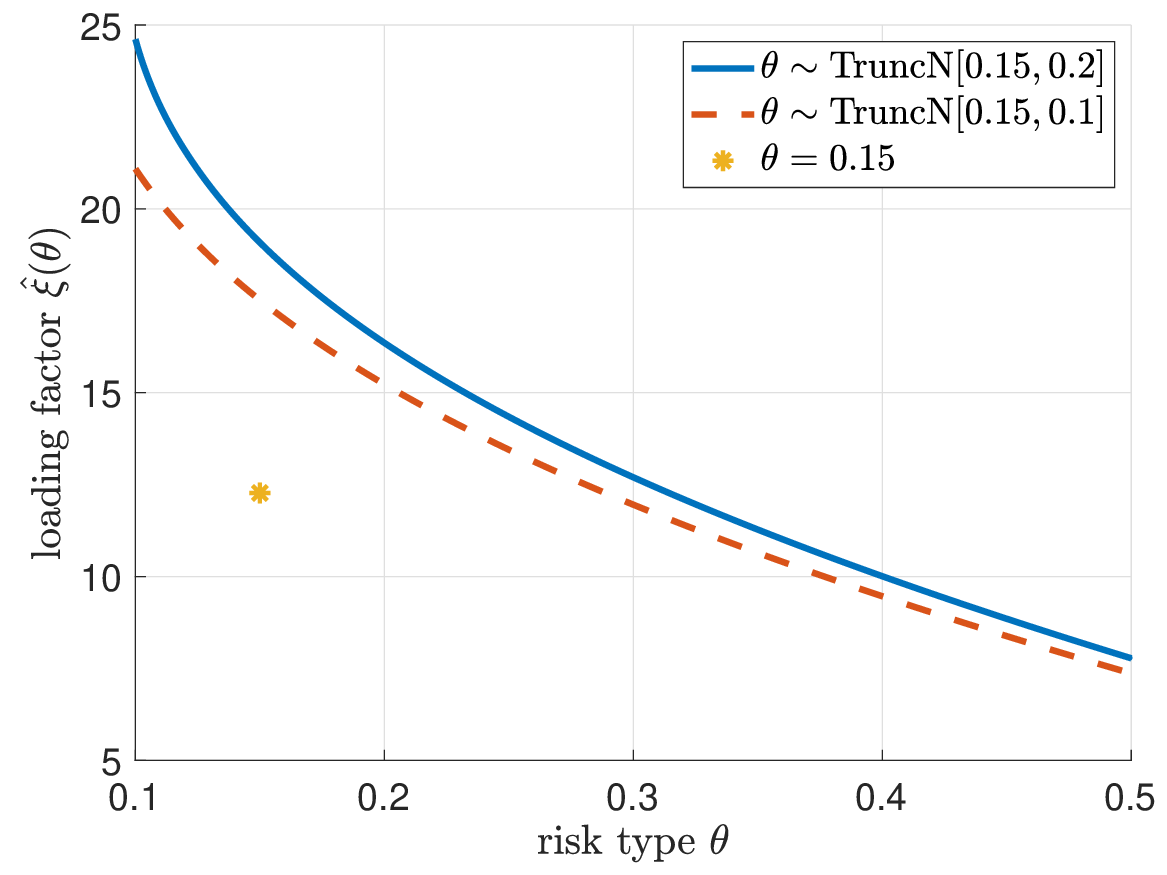}
        \caption*{(a)}
    \end{minipage}
    \hfill
    \begin{minipage}[b]{0.45\textwidth}
        \centering
        \includegraphics[width=\textwidth]{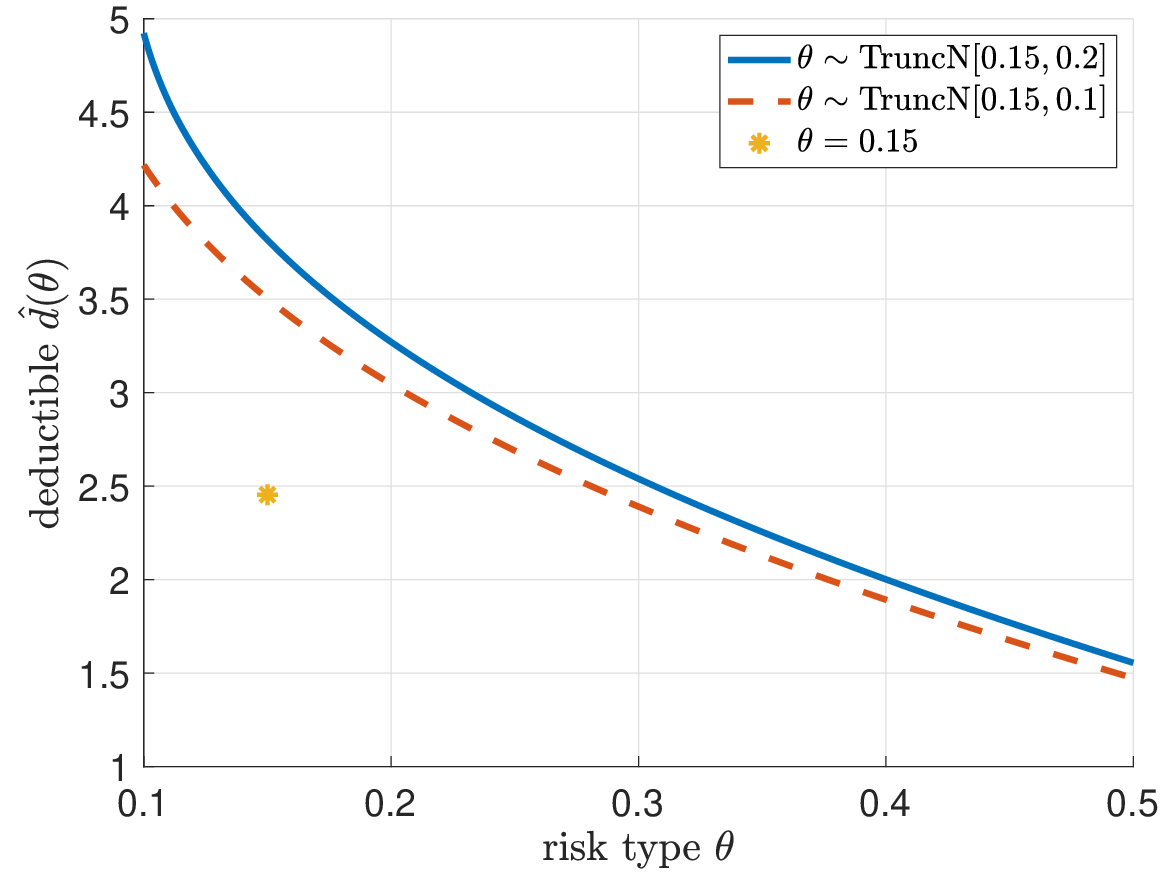}
        \caption*{(b)}
    \end{minipage}
    \caption{  Risk loading and deductible functions under risk-type uncertainty with  $Y\sim \mathrm{Pareto}(\frac{1}{\theta}, 3)$ and $\theta$ following a following a  truncated normal distribution }
    \label{fig4}
    \end{figure}

In Figure \ref{fig3}(a), the risk loading factor $\hat{\xi}(\theta)$ is shown to decrease with the risk type $\theta$, as predicted by Proposition \ref{prop NL theta}(1). This monotonic pattern indicates that higher-risk customers are effectively granted risk-loading discounts. In Figure \ref{fig3}(b), the deductible $\hat d(\theta)=\hat\xi(\theta)/\gamma$ is also decreasing in $\theta$. This is expected since $\hat \xi(\theta)$ is decreasing while $\gamma$ remains constant, implying that customers transfer more risk to the insurer as their inherent risk increases. Moreover, for a given risk type, customers choose greater coverage when the loading price is lower.

Another numerical example is presented in Figure \ref{fig4}. We assume that $Y$ follows a Pareto distribution, i.e.,
$
Y \sim \text{Pareto}\!\left(\tfrac{1}{\theta}, 3\right).
$
In this case, the ODE does not admit a closed-form solution and must be solved numerically, following the procedure described in Remark \ref{remark:4.3}. The parameter $\theta$ is assumed to follow a truncated normal distribution on $[0.1,0.5]$. Specifically, we compare two scenarios with different levels of uncertainty: one with $\theta \sim \text{TruncN}(0.15,0.1)$ and the other with $\theta \sim \text{TruncN}(0.15,0.2)$. For the deterministic benchmark, we fix $\theta=0.15$. The monotonicity of $\hat\xi(\theta)$ and $\hat d(\theta)$ with respect to $\theta$ remains the same as in the previous example where $Y$ follows an exponential distribution.

We also observe in Figure \ref{fig3}(a) that the overall level of the loading increases as the range of risk types becomes narrower, which contrasts with Figure \ref{fig4}(a), where $Y$ follows a Pareto distribution and the loading becomes higher as the variance of $\theta$ increases.
The result that a narrower dispersion of risk types leads to higher premiums is consistent with the classic insights of monopoly insurance models, as in \cite{stiglitz1977monopoly}. It implies that in a more deterministic environment with less uncertainty about customer types, insurers typically set higher prices. By contrast, when uncertainty about types increases, insurers may offer partial discounts in order to induce truthful revelation from customers.
In fact, in our setting the equilibrium arises from a Stackelberg game between the insurer and customers, which lies between the two benchmark cases of pure monopoly and perfect competition. Hence, it is not surprising that the equilibrium loading may either increase or decrease as the level of uncertainty changes. This suggests that the effect of uncertainty  depends crucially on the behavior of the loss distribution.
One possible explanation is related to the tail behavior of losses. When losses follow an exponential distribution, which is light-tailed, the probability of extremely large losses is relatively small. In this case, insurers may provide larger discounts to customers with greater uncertainty in order to encourage truthful reporting. In contrast, when losses follow a heavy-tailed distribution such as the Pareto distribution, large losses occur with higher probability, and insurers respond more conservatively by imposing higher loadings on customers with greater uncertainty.

Finally, we allow the risk-aversion parameter to depend on the risk type, i.e., $\gamma=\gamma(\theta)$. We  maintain $Y \sim \text{Pareto}\!\left(\tfrac{1}{\theta}, 3\right)$, and 
$\theta \sim \mathrm{TruncN}(0.15,0.1)$ on $[0.1,0.5]$. 
We consider three specifications of risk aversion. Under $\gamma(\theta)=50\theta$, risk aversion increases with risk type, implying that higher-risk customers are more risk-averse; in this case, $\gamma(\theta)$ ranges from 5 to 25. 
Under $\gamma(\theta)=0.5/\theta$, risk aversion decreases with risk type, implying that higher-risk customers are less risk-averse, with $\gamma(\theta)$ ranging from 5 to 1. 
For comparison, we also include a benchmark with constant risk aversion $\gamma=5$.
We then examine how the risk loading $\hat{\xi}(\theta)$ and the deductible $\hat d(\theta)$ vary with respect to $\theta$ under these alternative specifications in Figure~\ref{fig5}.
\begin{figure}[htbp]
    \centering
    \begin{minipage}[b]{0.45\textwidth}
        \centering
        \includegraphics[width=\textwidth]{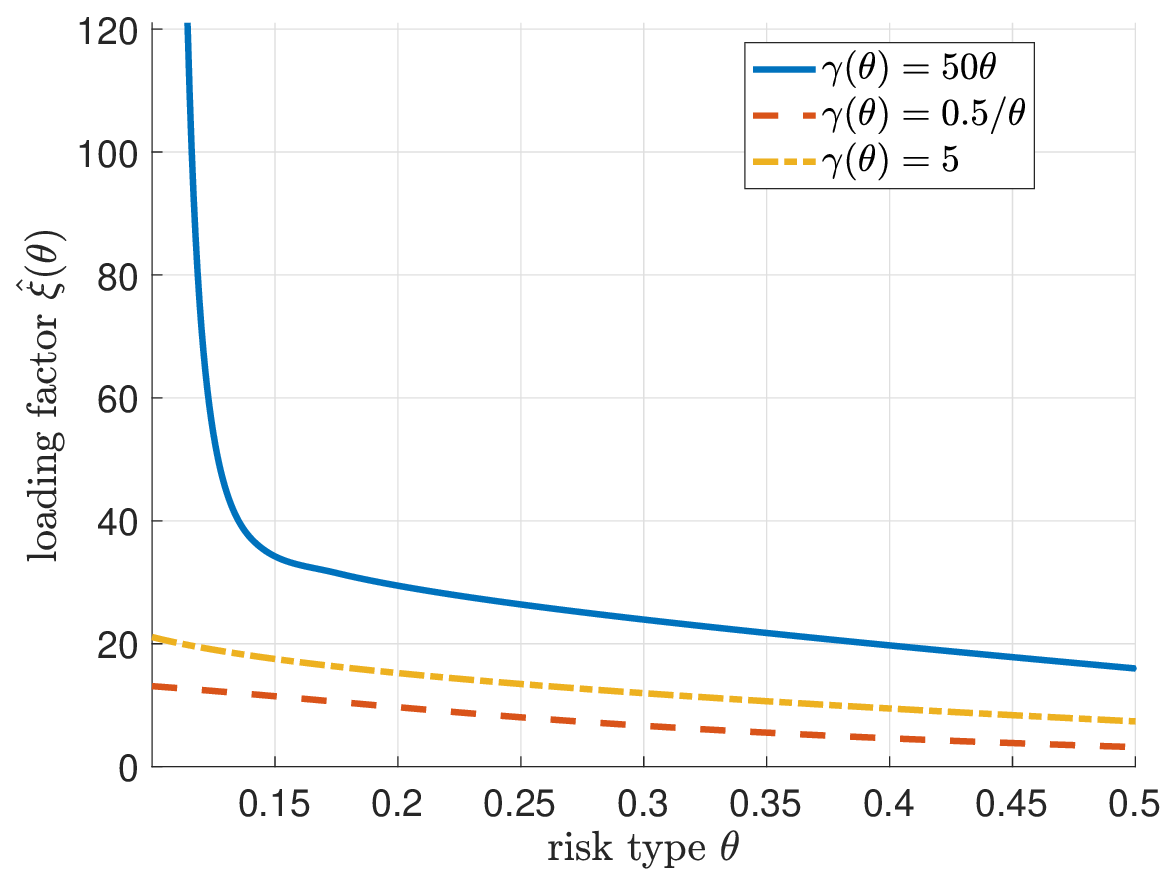}
        \caption*{(a)}
    \end{minipage}
    \hfill
    \begin{minipage}[b]{0.45\textwidth}
        \centering
        \includegraphics[width=\textwidth]{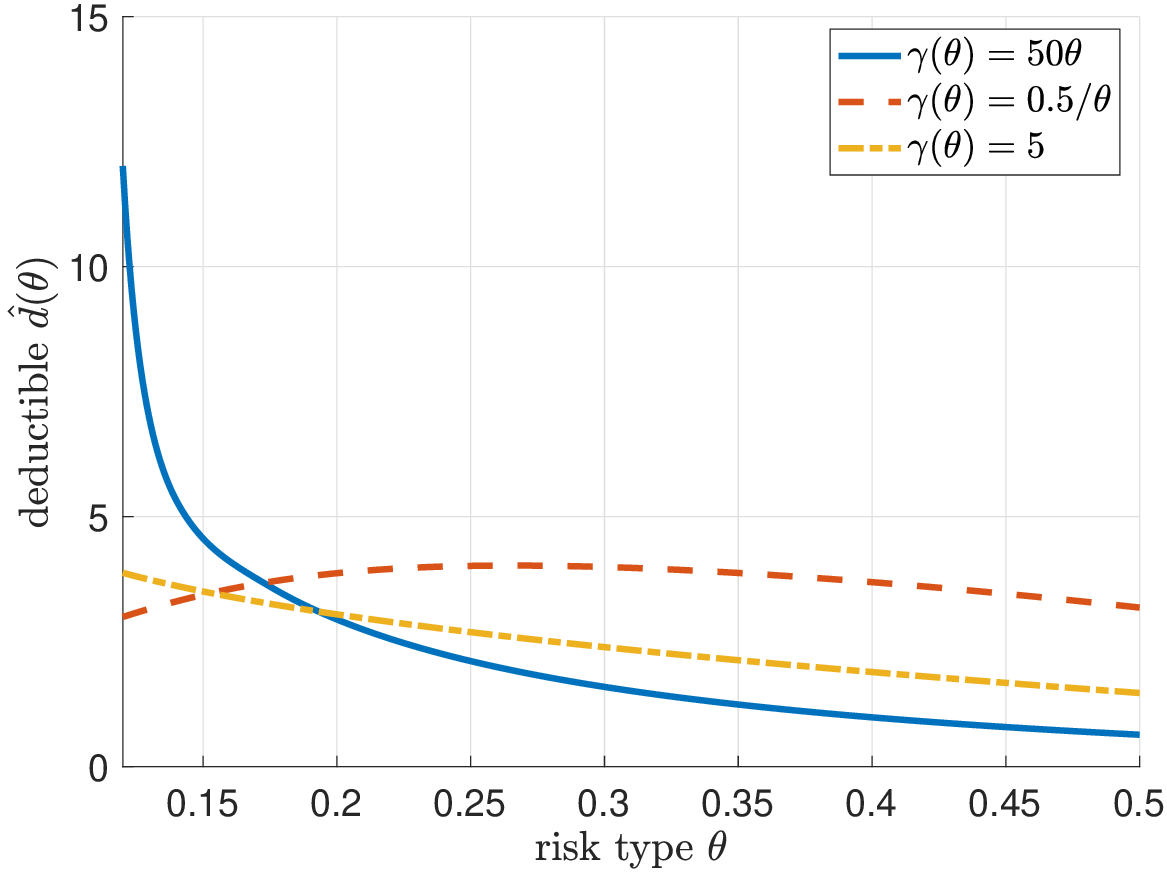}
        \caption*{(b)}
    \end{minipage}


    \caption{Risk loading and deductible functions under risk-type uncertainty with  $Y\sim \mathrm{Pareto}(\frac{1}{\theta}, 3)$ and $\theta\sim\mathrm{TruncN}[0.15, 0.1]$.}
    \label{fig5}
    \end{figure}

Regardless of the functional form of $\gamma(\theta)$, we can see from Figure \ref{fig5}(a) that the risk loading $\hat{\xi}(\theta)$ is decreasing in $\theta$, consistent with Proposition~\ref{prop NL theta}(1). Moreover, when $\gamma(\theta)$ is increasing in $\theta$, the loading curve lies above the constant-$\gamma$ case; when $\gamma(\theta)$ is decreasing in $\theta$, the loading is overall the lowest. This pattern is economically intuitive. A higher level of risk aversion increases customers' willingness to purchase insurance, which enables the insurer to charge a higher loading.

Figure \ref{fig5}(b) shows that 
when $\gamma(\theta)$ is constant or increasing in $\theta$, the deductible $\hat d(\theta)=\hat{\xi}(\theta)/\gamma(\theta)$ decreases with $\theta$ which is proved in Proposition \ref{prop NL theta}(2), indicating that higher-risk customers transfer a larger fraction of risk to the insurer. In contrast, when $\gamma(\theta)$ decreases in $\theta$, the deductible may exhibit a non-monotonic pattern and can first increase and then decrease. Intuitively, when $\theta$ is small, the decline in risk aversion dominates, leading customers to retain more risk. As $\theta$ increases further, the effect of higher risk eventually outweighs the decline in risk aversion, and customers optimally choose to transfer more risk through insurance.
Further, although the increasing-$\gamma(\theta)$ specification leads to the highest premium loading, the associated deductible eventually becomes the smallest due to the strong growth in risk aversion. By contrast, under the decreasing-$\gamma(\theta)$ specification, the premium loading is the lowest, yet the deductible may exceed that in the constant-$\gamma$ case because customers become less risk-averse as the risk type increases.

\section{Conclusion}\label{sec:6}
This study builds on the framework of \citet{han2026optimal} by replacing the variance premium principle with the expected-value premium principle and examining whether the main conclusions remain robust under this alternative pricing rule. Under the variance premium principle, the optimal strategy takes the form of proportional reinsurance, whereas under the expected-value premium principle the optimal contract becomes excess-of-loss. This result is consistent with the full-information findings of \citet{chen2019stochastic}.
From a technical perspective, the main challenge arises because excess-of-loss contracts lead to a first-order condition characterized by an ODE that does not admit a closed-form solution. We address this difficulty by establishing the existence and uniqueness of the optimal contracts using a fixed-point approach.
Our results are broadly consistent with those of \citet{han2026optimal}. When the insurer faces uncertainty about the customer’s risk type, equilibrium contracts exhibit nonlinear pricing with decreasing risk loadings, effectively granting implicit discounts to higher-risk types. By contrast, when uncertainty concerns only risk attitudes, no cross-subsidization arises. 
Several additional observations emerge from our analysis. In particular, we allow the risk-aversion parameter to depend on the risk type. Although the monotonicity of the risk loading is invariant to the specification of the risk-aversion function, the overall level of risk loading is affected by its form. Specifically, the overall risk loading is lowest under decreasing risk aversion, 
highest under increasing risk aversion, and lies in between under constant risk aversion.

Finally, our work opens several directions for future research. A natural extension is to consider a two-dimensional asymmetric information setting in which both risk type and risk attitude are private information, allowing for the analysis of potential interactions and correlations between these characteristics. Moreover, while our model assumes a continuum of contracts, real-world insurance markets typically offer discrete contract menus; studying the efficiency loss arising from such discretization would be both theoretically and empirically valuable. Another promising direction is to generalize the customer's objective function beyond mean–variance preferences, thereby broadening the applicability of the framework to a wider range of risk behaviors.

\subsection*{Acknowledgments}
 Xia Han gratefully acknowledges support from the National Natural Science Foundation of China (Grant Nos. 12301604 and 12371471). Bin Li gratefully acknowledges support from the Natural Sciences and Engineering Research Council of Canada (Grant No. 04338), and support from the National Natural Science Foundation of China (Grant Nos. 12271171 and 12471446).

\bibliography{reference}

@article{stiglitz1977monopoly,
  title={Monopoly, non-linear pricing and imperfect information: The insurance market},
  author={Stiglitz, Joseph E},
  journal={The Review of Economic Studies},
  volume={44},
  number={3},
  pages={407--430},
  year={1977},
  publisher={Wiley-Blackwell}
}

@article{chen2019stochastic,
  title={Stochastic Stackelberg differential reinsurance games under time-inconsistent mean-variance framework},
  author={Chen, Lv and Shen, Yang},
  journal={Insurance: Mathematics and Economics},
  volume={88},
  pages={120--137},
  year={2019},
  publisher={Elsevier}
}

@article{rothschild1976equilibrium, 
title = {Equilibrium in Competitive Insurance Markets: An Essay on the Economics of Imperfect Information},
author = {Rothschild, Michael and Stiglitz, Joseph},
year = {1976},
journal = {The Quarterly Journal of Economics},
volume = {90},
number = {4},
pages = {629-649}
}

@article{chade2012optimal,
  title={Optimal insurance with adverse selection},
  author={Chade, Hector and Schlee, Edward},
  journal={Theoretical Economics},
  volume={7},
  number={3},
  pages={571--607},
  year={2012},
  publisher={Wiley Online Library}
}

@article{gershkov2023optimal,
  title={Optimal Insurance: Dual Utility, Random Losses, and Adverse Selection},
  author={Gershkov, Alex and Moldovanu, Benny and Strack, Philipp and Zhang, Mengxi},
  journal={American Economic Review},
  volume={113},
  number={10},
  pages={2581--2614},
  year={2023},
  publisher={American Economic Association 2014 Broadway, Suite 305, Nashville, TN 37203}
}

@article{hendren2013private,
  title={Private information and insurance rejections},
  author={Hendren, Nathaniel},
  journal={Econometrica},
  volume={81},
  number={5},
  pages={1713--1762},
  year={2013},
  publisher={Wiley Online Library}
}

@article{bjork2010general,
  title={A general theory of Markovian time inconsistent stochastic control problems},
  author={Bj\"ork, Tomas and Murgoci, Agatha},
  journal={Available at SSRN 1694759},
  year={2010}
}

@article{garcia2021information,
  title={Information design in competitive insurance markets},
  author={Garcia, Daniel and Tsur, Matan},
  journal={Journal of Economic Theory},
  volume={191},
  pages={105160},
  year={2021},
  publisher={Elsevier}
}

@article{farinha2023risk,
  title={Risk classification in insurance markets with risk and preference heterogeneity},
  author={Farinha Luz, Vitor and Gottardi, Piero and Moreira, Humberto},
  journal={Review of Economic Studies},
  volume={90},
  number={6},
  pages={3022--3082},
  year={2023},
  publisher={Oxford University Press US}
}

@article{li2016alpha,
  title={Alpha-robust mean-variance reinsurance-investment strategy},
  author={Li, Bin and Li, Danping and Xiong, Dewen},
  journal={Journal of Economic Dynamics and Control},
  volume={70},
  pages={101--123},
  year={2016},
  publisher={Elsevier}
}

@article{gollier2014optimal,
  title={Optimal insurance design of ambiguous risks},
  author={Gollier, Christian},
  journal={Economic Theory},
  volume={57},
  pages={555--576},
  year={2014},
  publisher={Springer}
}

@article{guan2022equilibrium,
  title={Equilibrium investment and reinsurance strategies under smooth ambiguity with a general second-order distribution},
  author={Guan, Guohui and Li, Bin},
  journal={Journal of Economic Dynamics and Control},
  volume={143},
  pages={104515},
  year={2022},
  publisher={Elsevier}
}

@article{wambach2000introducing,
  title={Introducing heterogeneity in the Rothschild-Stiglitz model},
  author={Wambach, Achim},
  journal={Journal of Risk and Insurance},
  pages={579--591},
  year={2000},
  publisher={JSTOR}
}

@article{cheung2025optimal,
  title={Optimal design of reinsurance contracts under adverse selection with a continuum of types},
  author={Cheung, Ka Chun and Yam, Sheung Chi Phillip and Yuen, Fei Lung and Zhang, Yiying},
  journal={arXiv preprint arXiv:2504.17468},
  year={2025}
}

@article{boonen2021optimal,
  title={Optimal reinsurance design with distortion risk measures and asymmetric information},
  author={Boonen, Tim J and Zhang, Yiying},
  journal={ASTIN Bulletin: The Journal of the IAA},
  volume={51},
  number={2},
  pages={607--629},
  year={2021},
  publisher={Cambridge University Press}
}

@article{cheung2020concave,
  title={Concave distortion risk minimizing reinsurance design under adverse selection},
  author={Cheung, Ka Chun and Yam, Sheung Chi Phillip and Yuen, Fei Lung and Zhang, Yiying},
  journal={Insurance: Mathematics and Economics},
  volume={91},
  pages={155--165},
  year={2020},
  publisher={Elsevier}
}

@article{yaari1987dual,
  title={The dual theory of choice under risk},
  author={Yaari, Menahem E.},
  journal={Econometrica},
   volume={55},
  number={1},
  pages={95--115},
  year={1987},
  publisher={JSTOR}
}

@article{han2026optimal,
  title={Optimal Insurance with Information Asymmetry: Nonlinear and Linear Pricing},
  author={Han, Xia and Li, Bin and Luo, Yao},
  journal={Journal of Economic Dynamics and Control},
  pages={105265},
    volume={184},
  year={2026},
  publisher={Elsevier}
}

@article{cheung2019reinsurance,
  title={Reinsurance contract design with adverse selection},
  author={Cheung, Ka Chun and Yam, Sheung Chi Phillip and Yuen, Fei Lung},
  journal={Scandinavian Actuarial Journal},
  volume={2019},
  number={9},
  pages={784--798},
  year={2019},
  publisher={Taylor \& Francis}
}

@article{liang2022revisiting,
  title={Revisiting the optimal insurance design under adverse selection: Distortion risk measures and tail-risk overestimation},
  author={Liang, Zhihang and Zou, Jushen and Jiang, Wenjun},
  journal={Insurance: Mathematics and Economics},
  volume={104},
  pages={200--221},
  year={2022},
  publisher={Elsevier}
}

@article{ghossoub2025optimal,
  title={Optimal Insurance in a Monopoly: Dual Utilities with Hidden Risk Attitudes},
  author={Ghossoub, Mario and Li, Bin and Shi, Benxuan},
  journal={arXiv preprint arXiv:2504.01095},
  year={2025}
}
\end{document}